\documentclass[11pt, a4paper]{article}
\usepackage[a4paper, left=30mm, right=30mm]{geometry} % margin=2.6cm

\usepackage[utf8]{inputenc}  % allow utf-8 input
\usepackage[T1]{fontenc}     % use 8-bit T1 fonts
\usepackage{ebgaramond}

\usepackage{hyperref}       % hyperlinks
\usepackage{url}            % simple URL typesetting
\usepackage{booktabs}       % professional-quality tables
\usepackage{amsfonts}       % blackboard math symbols
\usepackage{nicefrac}       % compact symbols for 1/2, etc.
\usepackage{microtype}      % microtypography
\usepackage{xcolor}         % colors
\usepackage{bbm}
\usepackage{amsthm}
\usepackage{rotating}

\usepackage{graphicx} % Required for inserting images
\usepackage[ngerman, english]{babel}
\usepackage[iso, ngerman]{isodate}

\usepackage{bbold}
\usepackage{mathtools}
\usepackage{amsmath} % Required for \DeclareMathOperator
\usepackage{nicefrac}
\usepackage{tikz}
\usepackage{subcaption}
\usepackage{centernot} % for the comparison

\newtheorem{theorem}{Theorem}

\usepackage{verbatim}

\RequirePackage[T1]{fontenc} 
\RequirePackage[tt=false, type1=true]{libertine} 
\RequirePackage[varqu]{zi4} 
\RequirePackage[libertine]{newtxmath}

\usepackage[round,comma]{natbib}
\bibliographystyle{plainnat}

\title{From the Fair Distribution of Predictions to the Fair Distribution of Social Goods: Evaluating the Impact of Fair Machine Learning on Long-Term Unemployment\footnote{Sebastian Zezulka and Konstantin Genin. 2024. From the Fair Distribution of Predictions to the Fair Distribution of Social Goods: Evaluating the Impact of Fair Machine Learning on Long-Term Unemployment. In: Proceedings of the 2024 ACM Conference on Fairness, Accountability, and Transparency (FAccT '24). Association for Computing Machinery, New York, NY, USA, 1984–2006. \url{https://doi.org/10.1145/3630106.3659020}}}
\author{%
  Sebastian Zezulka
  \\
  University of Tübingen\\
  \texttt{sebastian.zezulka@uni-tuebingen.de} \\
   \and
  Konstantin Genin \\
  University of Tübingen \\
  \texttt{konstantin.genin@uni-tuebingen.de} \\
}
\date{June 2024}

\begin{document}

\maketitle

\begin{abstract}
Deploying an algorithmically informed policy is a significant intervention in society. Prominent methods for algorithmic fairness focus on the distribution of {\em predictions} at the time of {\em training}, rather than the distribution of {\em social goods} that arises {\em after} deploying the algorithm in a specific social context. However, requiring a `fair’ distribution of predictions may undermine efforts at establishing a fair distribution of social goods. First, we argue that addressing this problem requires a notion of {\em prospective fairness} that anticipates the change in the distribution of social goods {\em after} deployment. Second, we provide formal conditions under which this change is identified from pre-deployment data. That requires accounting for different kinds of performative effects. Here, we focus on the way predictions change policy decisions and, consequently, the causally downstream distribution of social goods. Throughout, we are guided by an application from public administration: the use of algorithms to predict who among the recently unemployed will remain unemployed in the long term and to target them with labor market programs. Third, using administrative data from the Swiss public employment service, we simulate how such algorithmically informed policies would affect gender inequalities in long-term unemployment. When risk predictions are required to be `fair' according to statistical parity and equality of opportunity, targeting decisions are less effective, undermining efforts to both lower overall levels of long-term unemployment and to close the gender gap in long-term unemployment. 
\end{abstract}

\section{A fundamental question for fair machine learning}\label{sec:fundamental}
Research in algorithmic fairness is often motivated by the concerns that machine learning algorithms will reproduce or even exacerbate structural inequalities reflected in their training data \citep{lum2016predict, Tolbert2023}. Indeed, whether an algorithm exacerbates an existing social inequality is emerging as a central compliance criterion in EU non-discrimination law \citep{Weerts2023_CONF}. However, many methodological solutions developed by researchers in algorithmic fairness are, surprisingly, ill-suited for addressing this fundamental question. At some level, the questions of algorithmic fairness are ill-posed: often, it does not make sense to talk about the fairness of a predictor independent of the policy context in which it is deployed. It is our policies and their effects that are just or unjust; `fair' predictors can both support unjust policies and undermine just policy. For example, public employment services use predictions of the risk of long-term unemployment (LTU) to decide who is given access to labor market programs. Policy doves target those at the highest risk with training programs, while hawks, considering those at the highest risk to be hopeless cases, withhold training on grounds of `efficiency'. The social consequences of prediction errors differ significantly depending on how these predictions will be used. It would be surprising if we could say whether a predictor is fair independent of this policy context. Therefore, rather than focusing on the distribution of {\em predictions} at the time of {\em training}, we focus on the distribution of {\em social goods} induced by {\em deploying} a predictive algorithm in a policy context. Our point is not that formal fairness constraints on predictions would always make things worse, but rather that part of due diligence is forecasting their effects on outcomes. In our case study, we focus on the gender gap in long-term unemployment as one such outcome.

The field of algorithmic fairness has produced many mathematical demonstrations of necessary trade-offs between different notions of `fairness', and between 'fair' and accurate prediction \citep{Borsboom2008, Kleinberg2016, Chouldechova2017, Menon2018}. This lends the field an air of tragedy and makes the pursuit of fairness seem fundamentally quixotic. But, while mathematical trade-offs exist between predictive accuracy and the `fair' distribution of predictions, predictive accuracy does not necessarily trade-off against the fair distribution of social goods \citep{Tal2023_CONF, CorbettDavies2018, Green2022}. Indeed, we should expect that accurate predictions help us to effectively implement policy aimed at ameliorating unjust inequalities. In our empirical case study, we demonstrate that (1) requiring risk predictions to be `fair' in terms of statistical parity and equal opportunity \textit{undermines} efforts to lower overall levels of long-term unemployment and to close the gender gap in long-term unemployment, (2) that the hawkish policy of withholding training programs from those at the highest risk is \textit{no more efficient} than the dovish policy of prioritizing those with the highest risk, and (3) that accurate prediction of {\em counterfactual} treatment outcomes, rather than risk scores, enables individualized targeting and therefore, a better and more equitable distribution of social goods. 

Of course, this shift in focus poses methodological challenges. To anticipate the causal effects of embedding a predictive algorithm into a social process, we must make some effort to, first, identify the contextually relevant inequalities in the distribution of social goods; second, understand the policy processes and decisions that partially give rise to, and could conceivably ameliorate these inequalities; and third, model how algorithmic predictions might {\em change} these processes and, therefore, the distribution of social goods. Standard algorithmic fairness methods neglect every part of this process \citep{Green2022, Selbst2019_CONF}. All of these methods impose some constraints on predictions that hold in the (retrospective) training distribution. By focusing on the distribution of predictions at the time of training, they obscure substantive inequalities in real-world quantities and neglect the changes in decision-making that arise from the deployment of predictive algorithms. Consequently, these methods fail to anticipate the effects of {\em deploying} these algorithms on the distribution of social goods. Here, we address these shortcomings in the following way:
\begin{itemize}
    \item We reconceptualize algorithmic fairness questions as policy problems: {\em Prospective fairness} requires efforts to anticipate the impact of deploying an algorithmically informed policy on inequality in social goods.
    \item We state formal conditions under which the effect of deploying an algorithmically informed policy on context-relevant inequalities is identified from pre-deployment data.
    \item We illustrate our approach with an extensive case study on the statistical profiling of registered unemployed using a rich administrative dataset from Switzerland. We study the likely effects of two algorithmic policy proposals on the gender gap in the rates of long-term unemployment.
\end{itemize}

Our case study is based on administrative data from the Swiss Active Labor Market Policy Evaluation Dataset \citep{Lechner2020}. The original sample, collected in 2003, contains roughly one hundred thousand observations of registered unemployed aged $24$ to $55$. Although most unemployed were not assigned to any program, we observe outcomes for six labor market programs. The Swiss labor market, as outlined in Section~\ref{sec:statprofiling}, is characterized by an overall unemployment rate of about $4\%$, a high rate of long-term unemployment (LTU), and a persistent gender reemployment gap (Figure~\ref{fig:LTUGapSwiss}). In the administrative data, the LTU gender gap is at $3.9\%$, with an LTU rate of $43.6\%$ among women and $39.7\%$ among men. The gap between Swiss citizens and non-citizens is at $15.8\%$, with a rate of $35.7\%$ among Swiss citizens and $51.5\%$ among non-citizens.

The plan of the paper is as follows: first, we argue for {\em prospective fairness} as a conceptual framework and survey related work; section~\ref{sec:statprofiling} introduces two recently proposed algorithmic policies intended to support public employment agencies in reducing long-term unemployment; we argue that, in this context, the gender gap in long-term unemployment is a simple and intuitive measure of systemic inequality; section~\ref{sec:Theorem} formalizes conditions under which the causal effect of deploying an algorithmically informed policy on a measure of systematic inequality is identified from pre-deployment data; in section~\ref{sec:CaseStudy} we illustrate the method with an extended case study, simulating two proposed profiling policies and their effects on the gender reemployment gap. Section~\ref{sec:Conclusion} concludes and outlines directions for future work.

%%%%%%%%%%%%%%%%%%%%%%%%%%%%%%%%%%%%%%%%%%%%%%%%%%%%%%%%%%%%%%%%%%%%%%%%%%%%%%%%%%%%
\section{From retrospective to prospective fairness}\label{sec:prospectiveFairness}
In paradigmatic risk-assessment applications, machine learners are concerned with learning a function that takes as input some features $X$ and a sensitive attribute $A$ and outputs a score $R$ which is valuable for predicting an outcome $Y$. The algorithmic score $R$ is meant to inform some important decision $D$ that, typically, is causally relevant for the outcome $Y$. In the application that concerns us in this paper, features such as the education and employment history $(X)$ and gender $(A)$ of a recently unemployed person are used to compute a risk score $(R)$ of long-term unemployment $(Y).$ This risk score $R$ is meant to support a caseworker at a public employment agency in making a plan $(D)$ about how to re-enter employment. This plan may be as simple as requiring the client to apply to some minimum number of jobs every month or referring them to one of a variety of job-training programs. 

Formal fairness proposals require that some property is satisfied by either the joint distribution $P(A,X,R,D,Y)$ or the causal structure $G$ giving rise to it. Individual fairness proposals introduce a similarity metric $M$ on $(A,X)$ and suggest that similar individuals should have similar risk scores. In all these cases, the relevant fairness property is a function $\varphi(P,G,M)$. Group-based fairness \citep{barocas-hardt-narayanan} ignores all but the first parameter; causal fairness \citep{Kilbertus2017, Kusner2017_CONF} ignores the last; and individual fairness  \citep{Dwork2012_CONF} ignores the second. All these proposals agree that fairness is a function of the distribution (and perhaps the causal structure) at the time when the prediction algorithm has been trained, {\em but before it has been deployed}. We claim that addressing this fundamental question of fair machine learning requires comparing the status quo {\em before} deployment with the situation likely to arise {\em after} deployment. In other words: {\em prospective} fairness is a matter of anticipating the change from $\varphi(P_{\text{pre}}, D_{\text{pre}}, M)$ to $\varphi(P_{\text{post}},D_{\text{post}}, M)$. We do not claim that there is a single correct inequality measure $\varphi(\cdot),$ nor even that there is an all-things-considered way of trading off different candidates, only that we must make a good faith effort to anticipate changes in the relevant measures of inequality. 

As shown in Figure~\ref{fig:DAG_D0_D1}, deploying a decision support algorithm introduces a causal path from the predicted risk scores $R$ to the decisions $D$. Importantly, the outcome variable $Y$ is causally downstream of this intervention. The addition of a causal path can be modeled as a {\em structural} intervention \citep{malinsky2018intervening, Bynum2023}.

\begin{figure}[hbt!]
\centering
  \begin{subfigure}{0.45\textwidth}
    \centering
    \includegraphics[width=0.61\linewidth]{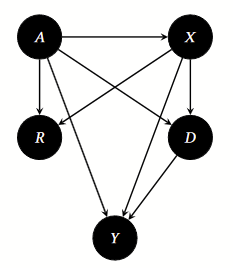}
    \caption{Causal structure $G_{\text{pre}}$ \textit{before} deploying an algorithmically informed policy.}
    \label{fig:sub1}
  \end{subfigure}
  \hfill
  \begin{subfigure}{0.45\textwidth}
    \centering
    \includegraphics[width=0.61\linewidth]{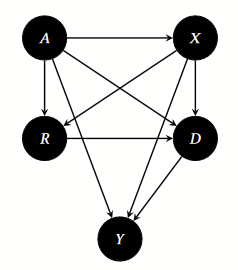}
    \caption{Causal structure $G_{\text{post}}$ \textit{after} deploying an algorithmically informed policy.}
    \label{fig:sub2}
  \end{subfigure}
\caption{The left hand side shows the pre-deployment causal graph $G_{\text{pre}}$ inducing a joint probability distribution $P_{\text{pre}}$ over sensitive attributes $A$, features $X$, risk score $R$, decision $D$, and outcome variable $Y$. The risk score $R$ is the output of a learned function from $A$ and $X$. Since this graph represents the situation after training, but before deployment, there is no arrow from the risk score $R$ to the decision $D$. \textit{Retrospective} fairness formulates constraints $\varphi(G_{\text{pre}}, P_{\text{pre}}, M)$ on the pre-deployment arrangement alone. The right-hand side represents the situation after the algorithmically informed policy has been deployed, with predictions $R$ now affecting decisions $D$. Prospective fairness requires comparing the consequences of intervening on the structure of $G_{\text{pre}}$ and moving to $G_{\text{post}}$. In other words, comparing $\varphi(G_{\text{pre}}, P_{\text{pre}}, M)$ with $\varphi(G_{\text{post}}, P_{\text{post}}, M)$.}
\label{fig:DAG_D0_D1}
\end{figure}

From a dynamical perspective, static and retrospective fairness proposals go wrong in two ways. In the worst case, they are {\em self-undermining}. \citet{Mishler2022} show that meeting the fairness notions of sufficiency $(Y\perp A~|~R)$ or separation $(R\perp A~|~Y)$ at the time of training necessitates that they will be violated after deployment. In terms of sufficiency, where $\perp$ denotes (conditional) statistical independence, we have that:  
\begin{displaymath}
    Y\perp_{\text{pre}} A~|~R \hspace{4pt}\text{ entails } \hspace{4pt}  
Y \not{\perp}_{\text{post}} A~|~R. 
\end{displaymath}
Group-based notions of fairness like sufficiency and separation that feature the outcome fall victim to {\em performativity}: the tendency of an algorithmic policy intervention to shift the distribution away from the one on which it was trained \citep{Perdomo2020}. But as \citet{Mishler2022} show, they are undermined not by an unintended and unforeseen performative effect, but by the {\em intended, and foreseen} shift in distribution induced by algorithmic support, i.e.:
\begin{displaymath}
    P_{\text{pre}}(D~|~A,X,R)\neq P_{\text{post}}(D~|~A,X,R).
\end{displaymath}
In other words, they are undermined by the fact that algorithmic support changes decision-making, which, presumably, is the point of algorithmic support in the first place. Since the distribution of the outcome $Y$ will change after deployment, \citet{Berk2023a} advises against group-based metrics involving it, opting for statistical parity ($R\perp A$) instead. 

It is not likely that individual and causal fairness proposals are so drastically self-undermining. So long as the similarity metric stays constant, an algorithm that treats similar people similarly will continue to do so after deployment. If, as \citet{Kilbertus2017} suggest, causal fairness is a matter of making sure that all paths from the sensitive attribute $A$ to the prediction $R$ are appropriately mediated, then causal fairness is safe from performative effects so long as the qualitative causal structure \textit{upstream} of the prediction $R$ remains constant.  

But even if causal and individual fairness proposals are not so dramatically self-undermining, they are simply {\em not probative} of whether the algorithm reproduces or exacerbates inequalities in the distribution of social goods, since these are causally \textit{downstream} of algorithmic predictions. In particular, it is customary to ignore the real-world dependence between $A$ and $Y$ induced by the social status quo as the target of an intervention, since nothing can be done about it at the time of training. Instead, fairness researchers focused on whether the risk score \textit{itself} is fair, whether in the group, individual, or causal sense. However, from the dynamical perspective, it is perfectly reasonable to ask whether the proposed algorithmic policy will exacerbate the systemic inequality reflected in the dependence between gender $(A)$ and long-term unemployment $(Y)$. Indeed, simple dynamical models and simulations suggest that algorithms meeting static fairness notions at training may in the long run exacerbate inequalities in outcomes \citep{Liu2018_CONF,Zhang2021_CONF}.
We derive formal conditions under which the effect of deploying an algorithmic policy on the joint distribution of $(Y,A)$ is identified from pre-deployment data and provide a realistic case study analyzing the effects of algorithmic policies in public employment on the gender gap in long-term unemployment. 

\subsection{Related Work}\label{sec:relatedwork}
In machine learning, the fairness debate began with risk assessment tools for decision- and policy-making \citep{Angwin2016, Kleinberg2016, Chouldechova2017, Mitchell2021}. To this day, many standard case studies e.g., lending, school admissions, and pretrial detention, fall within this scope. See \citet{Berk2023} for a review on fairness in risk assessment and \citet{Borsboom2008} and \citet{Hutchinson2019_CONF} for predecessors in psychometrics. Since then, researchers have stressed the importance of explicitly differentiating policy decisions from the risk predictions that inform them \citep{Barabas2018_CONF, Mouzannar2019_CONF, Kuppler2022, Beigang2022, Tal2023_CONF, Liu2024} and of studying machine learning algorithms in their socio-technological contexts \citep{Selbst2019_CONF, Grote2023}. We incorporate both of these insights into the present work.

A central negative result emerging from recent fairness literature highlights the dynamically self-undermining nature of group-based fairness constraints that include the outcome variable $Y$. \citet{Mishler2022} show that a classifier that is formally fair in the training distribution will violate the respective fairness constraint in the post-deployment distribution. \citet{Coston2020_CONF} suggests that the group-based fairness notion be formulated instead in terms of the potential outcomes $Y^d.$ These alternative proposals are no longer self-undermining, but they are still not testing the policy's effect on inequality in the distribution of social goods. This paper builds upon the negative results of \citet{Berk2023a} and \citet{Mishler2022}: we show how the post-interventional effect of an algorithmically informed policy on the distribution of social goods can be identified from a combination of (1) observational, pre-deployment data and (2) models of the policy proposal. 

An emerging literature on long-term fairness focuses on the dynamic evolution of systems under sequential-decision making, static fairness constraints, and feedback loops; see \citet{Zhang2021_CONF} for a survey. \citet{Ensign2017} consider predictive feedback loops from selective data collection in predictive policing. \citet{Hu2018_CONF} propose short-term interventions in the labor market to achieve long-term objectives. Using two-stage models, \citet{Liu2018_CONF} and \citet{Kannan2019_CONF} show that retrospective fairness constraints can, under some conditions, have negative effects on outcomes in disadvantaged groups. With simulation studies, \citet{DAmour2020_CONF} and \citet{Zhang2020} confirm that imposing static fairness constraints does not guarantee that these constraints are met over time and can, under some conditions, exacerbate inequalities in social goods. \citet{Scher2023} model long-term effects of statistical profiling for the allocation of unemployed into labor market programs on skill levels. The picture emerging from this literature is that post-interventional outcomes of algorithmic policies are a relevant dimension for normative analysis that is not adequately captured by retrospective fairness notions designed to hold in the training distribution. 

\section{Statistical profiling of the unemployed}\label{sec:statprofiling}
Since the 1990s, participation in active labor market programs (ALMPs) has been a condition for receiving unemployment benefits in many OECD countries \citep{Considine2017}. ALMPs take many forms, but paradigmatic examples include resume workshops, job-training programs, and placement services (see \citet{Bonoli2010} for a helpful taxonomy). Evaluations of ALMPs across OECD countries find small but positive effects on labor market outcomes \citep{Card2018, Vooren2018, Lammers2019}. Importantly, the literature also reports large effect-size heterogeneity between programs and demographics, as well as assignment strategies that are as good as random for Switzerland \citep{Knaus2022a}, Belgium \citep{Cockx2023}, and Germany \citep{Goller2021}. This implies potential welfare gains from a more targeted allocation into programs, especially when taking into account opportunity costs---a compelling motivation for algorithmic support. Indeed, the subsequent case study suggests that, if allocation decisions are made based on data-driven estimates of individualized treatment effects, the gender reemployment gap, as well as overall long-term unemployment, can be significantly reduced.

Statistical profiling of the unemployed is current practice in various OECD countries including Australia, the Netherlands, and Flanders, Belgium \citep{Desiere2019}. Paradigmatically, supervised learning techniques are employed to predict who is at risk of becoming long-term unemployed (LTU) \citep{Mueller2023_TECH_REPORT}. Such tools are regularly framed as introducing objectivity and effectiveness in the provision of public goods and align with demands for evidence-based policy and digitization in public administration. ALMPs target \textit{supply-side} problems by increasing human capital and \textit{matching} problems by supporting job search. \textit{Demand-side} policies that focus on the creation of jobs are not considered \citep{Green2022}.

Individual scores predicting the risk of long-term unemployment support a variety of decisions. For example, the public employment service (PES) of Flanders so far uses risk scores only to help caseworkers and line managers decide who to contact first, prioritizing those at higher risk \citep{Desiere2020}. In contrast, the PES of Austria (plans to) use risk scores to classify the recent unemployed into three groups: those with good prospects in the next six months; those with bad prospects in the next two years; and everyone else. The proposed policy of the Austrian PES is to focus support measures on the third group while offering only limited support to the other two. Advocates claim that, since ALMPs are expensive and would not significantly improve the re-employment probabilities of individuals with very good or very bad prospects, considerations of cost-effectiveness require a focus on those with middling prospects \citep{Allhutter2020}. However intuitive this may seem, it is nowhere substantively argued that statistical predictions of long-term unemployment from observational data can be reliably used as estimates for the effectiveness of administrative interventions. One worry is that the unemployed who are labeled high-risk may be similar to those who, historically, received ineffective programs. This is further complicated by the presence of long-standing structural inequalities in the labor market, which may be reproduced by algorithmic policies leaving those with ``poor prospects" to their own devices. In the subsequent simulation study, the efficiency claims made in favor of Austrian-style policy are not corroborated.

\begin{figure}[hbt!]
  \centering
  \includegraphics[width=\linewidth]{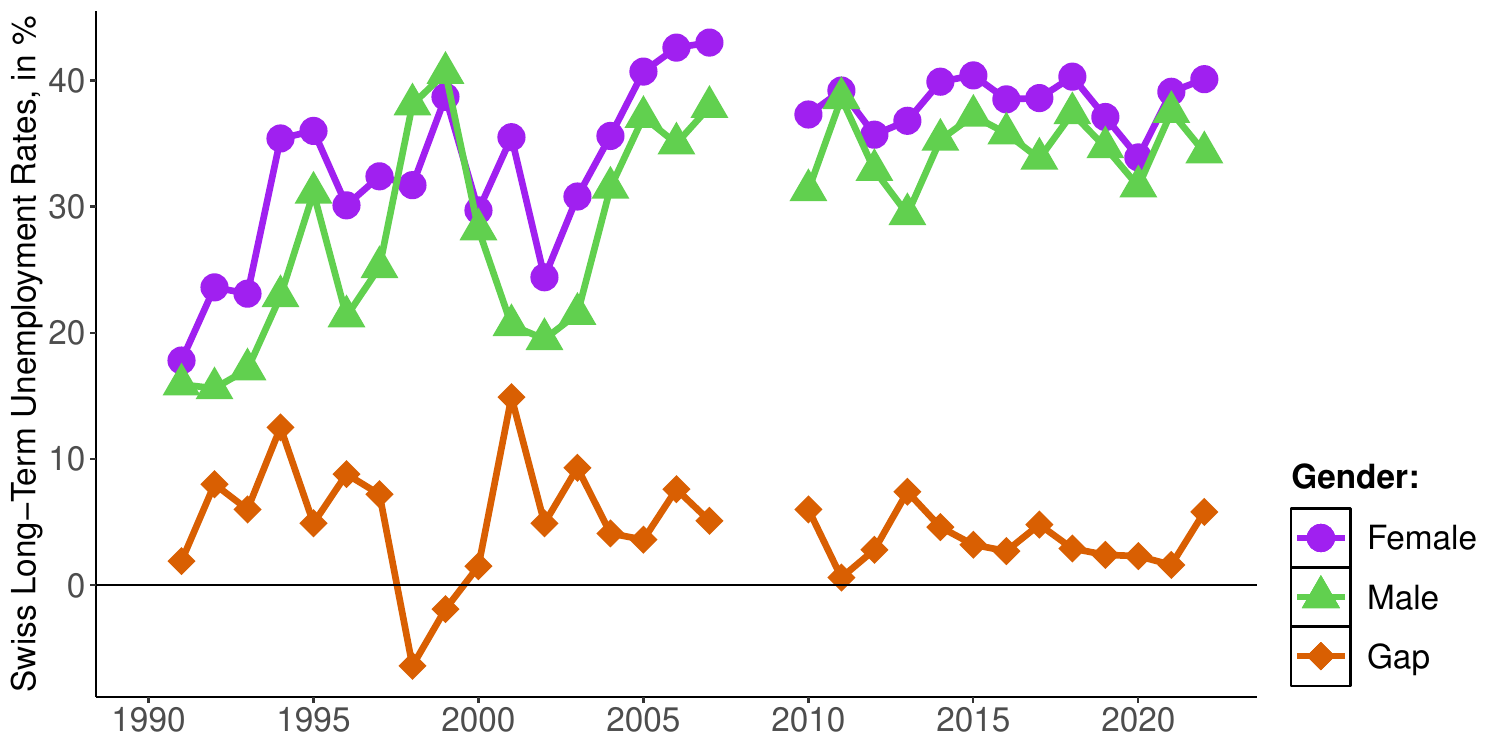}

\caption{Swiss Long-Term Unemployment Rates by Gender. Data for the period 2010-2022 are from Eurostat \citep{eurostat_swissLTUGap}; the gender rates in long-term unemployment are computed as the share of all unemployed men/women aged 20-64 who are unemployed for more than a year. Data for the period 1991-2007 are from the 2012 Swiss Social Report \citep{buhlmann2013swiss}, where age information is not available. Data for 2008-9 is not readily available.}
\label{fig:LTUGapSwiss}
\end{figure}

Labor markets in OECD countries are structured by various inequalities. Gender is a particularly long-standing and significant axis of inequality in labor markets, with the gender pay gap and the child penalty being notorious examples \citep{Kleven2023, Bishu2016}. On the other hand, the gender gap in unemployment rates has largely disappeared over the last decades \citep{Albanesi2018}. Nevertheless, structural differences in unemployment dynamics remain. For example, although women in Germany are less likely to enter into unemployment, their exit probabilities are also lower \citep{arbeitsmarkt2022statistik}. Similarly, there is a longstanding gender gap in long-term unemployment in Switzerland (see Figure \ref{fig:LTUGapSwiss}). The obvious worry is that prediction algorithms will pick up on these historical trends, as demonstrated in \citet{Kern2021}. The Austrian proposal for an LTU prediction algorithm furnishes a particularly dramatic example. That algorithm takes as input an explicitly gendered feature ``obligation to care'', which has a negative effect on the predicted re-employment probability and, by design, is only active for women \citep{Allhutter2020}. This controversial design choice was justified as reflecting the ``harsh reality" of the gendered distribution of care responsibilities. Whatever the wisdom of this particular variable definition, many other algorithms would pick up on the same historical patterns. Moreover, if the intended use of these predictions is to withhold support for individuals at high risk of long-term unemployment, it is clear that such a policy might exacerbate the situation by further punishing women for greater care obligations. 

The preceding underscores the need for a {\em prospective} fairness methodology that assesses whether women's actual re-employment probability suffers under a proposed algorithmic policy. More abstractly, what is needed is a way to predict how the pre-deployment probability $P_\text{pre}(Y~|~A)$ will compare with the post-deployment probability $P_\text{post}(Y~|~A)$. With these estimates in hand, it would also be possible to predict whether the gender gap in long-term unemployment is exacerbated, or ameliorated, under a proposed algorithmic policy. This gender gap is one particular choice for a fairness notion $\varphi(\cdot).$ Variations on this simple metric could be relevant in many other settings. For example, gender gaps in hiring, or racial disparities in incarceration could be criteria that an algorithmically informed policy should, minimally, not exacerbate \citep{Kasy2021_CONF}. In the following section, we give general conditions under which the post-deployment change in the joint distribution of the outcome $(Y)$ and the sensitive attribute $(A)$ is identified from pre-deployment data.

%%%%%%%%%%%%%%%%%%%%%%%%%%%%%%%%%%%%%%%%%%%%%%%%%%%%%%%%%%%%%%%%%%%%%%%%%%%%%%%%%%%%
\section{Identifiability of the Post-Deployment Distribution}\label{sec:Theorem}
Let $A,X,R,D,Y$ be discrete, {\em observed} random variables. Here, $A$ represents gender; $X$ represents baseline covariates observed by the public employment service; $R$ is an estimated risk of becoming long-term unemployed; $D$ is an allocation decision made by the public employment service and $Y$ is a binary random variable that is equal to $1$ if an individual becomes long-term unemployed. For simplicity, we assume that $R$ is a deterministic function of $A$ and $X$. We write $\mathcal{A,X,R,D,Y}$ for the respective ranges of these random variables. For $d\in\mathcal{D},$ let $Y^d$ be the potential outcome under policy $d,$ in other words: $Y^d$ represents what the long-term unemployment status of an individual {\em would have been} if they had received allocation decision $d.$ Naturally, $Y^1,\ldots, Y^{|\mathcal D|}$ are not all observed. Our first assumption is a rather mild one; we require that the observed outcome for individuals allocated to $d$ is precisely $Y^d:$ 
\begin{equation} 
    \tag{\textsc{Consistency}} 
    Y = \sum_{d\in\mathcal{D}} Y^d\mathbbm{1}[D=d]. 
\end{equation}
Consistency is to be interpreted as holding both before and after the algorithmic policy is implemented.

More substantially, we assume that the potential outcomes and decisions are unconfounded given the observed features $(A, X)$ both before and after the intervention:
\begin{equation}
    \tag{\textsc{Unconfoundedness}}
    Y^d \perp_t D~|~A, X.
\end{equation}
Unconfoundedness is a rather strong assumption that requires that the observed features $A, X$ include all common causes of the decision and outcome. In the case of a fully automated algorithmic policy, unconfoundedness holds by design; but usually, risk assessment tools are employed to support human decisions, not fully automate them \citep{Levy2021}. Although it is not fated that all factors relevant to a human decision are available to the data analyst, unconfoundedness is reasonable if rich administrative data sets capture most of the information relevant to allocation decisions. For a case in which this assumption fails, see \citet{Petersen2021}.

We have argued that, to address our fundamental question of fair machine learning, one must predict whether implementing the candidate algorithmically informed policy leads to an improvement, or at least no deterioration, in the distribution of social goods. In the running example, this amounts to comparing features of $P_{\text{pre}}(Y~|~A)$ with $P_{\text{post}}(Y~|~A)$. The first distribution is trivial to estimate, but how to estimate $P_{\text{post}}(Y~|~A)$ from pre-deployment data? Here, the fundamental problem is performativity \citep{Perdomo2020}. Our policy intervention will, in all likelihood, change the process of allocation into labor market programs and, thus, change the distribution of outcomes we are interested in. But not all kinds of performativity are equal. Some performative effects are intended and foreseeable. For example, the {\em algorithmic} effect is the intended change in decision-making due to algorithmic support:
\begin{equation*}
    P_{\text{pre}}\left(D=d~|~A=a, X=x\right) \neq P_{\text{post}}\left(D=d~|~A=a, X=x\right). \tag{\textsc{Algorithmic Effect}}
\end{equation*}
The first term in this inequality is the propensity score which can be directly estimated from training data. The second term cannot be directly estimated {\em ex-ante}. Nevertheless, it is possible to make reasonable conjectures about the second term given a concrete proposal for how risk scores should inform decisions. For example, if $D$ is binary, we could model the Austrian proposal as providing support so long as the risk score is neither too high nor low:
    \[ P_{\text{post}}(D=1~|~A=a, X=x) = \mathbbm{1} \left[l<R(a,x)<h\right].\]
More complex proposals for how risk scores should influence decisions require more careful modeling. The subsequent empirical case study delivers a more realistic model. 

Although we allow for algorithmic effects, these cannot be too strong---the policy cannot create allocation options that did not exist before. That is, the risk assessment tools only change allocation probabilities into {\em existing} programs. Moreover, we assume that the policy creates no unprecedented allocation-demographic combinations: 
\begin{equation}
    \tag{\textsc{No Unprecedented Decisions}}
    P_{\text{pre}}(D=d~|~A=a, X=x)>0 \text{ if } P_{\text{post}}(D=d~|~A=a, X=x)>0.
\end{equation}
This would be violated if e.g., no women were allocated to some program before the policy change.

Throughout this paper, we assume that no other forms of performativity occur. In particular, we assume that the conditional average treatment effects (CATEs) of the allocation on the outcome are stable across time:
\begin{equation}
    \tag{\textsc{Stable CATE}} 
    P_{\text{pre}}\left(Y^d~|~A=a, X=x\right) = P_{\text{post}}\left(Y^d~|~A=a, X=x\right).
\end{equation}
This amounts to assuming that the effectiveness of the programs (for people with $A=a, X=x$) does not change, so long as all that has changed is the way we {\em allocate} people to programs. In the case study, we assume that conditional average treatment effects are stable under changes to allocation policies, as well as to the total number of places available in (capacities of) each program. This assumption could be violated if e.g., a program works primarily by making some better off only at the expense of others---if everyone were to receive such a program, it would have no effect \citep{Crepon2013}.

While \textit{algorithmic effects} of deployment are intended and, to some degree, foreseeable types of performativity, \textit{feedback} effects that change the covariates are more complicated to model.\footnote{In the classification of \citet{Pagan2023_CONF}, we focus on what they call ``Outcome Feedback Loops''. In our terminology, performativity is not exhausted by feedback effects.} Following \citet{Mishler2022} and \citet{Coston2020_CONF}, we assume away the possibility of feedback effects, leaving these for future research:
\begin{equation}
    \tag{\textsc{No Feedback}}
    P_{\text{pre}}\left(A=a, X=x\right) = P_{\text{post}}\left(A=a, X=x\right).
\end{equation}
 \textsc{No Feedback} amounts to assuming that the baseline covariates of the recently employed are identically distributed pre- and post-deployment. Strictly speaking, this is false, since the decisions of caseworkers will affect the covariates of those who re-enter employment and some of them will, eventually, become unemployed again. However, since the pool of employed is much larger than the pool of unemployed, the policies of the employment service have much larger effects on the latter than the former. For this reason, we may hope that feedback effects are not too significant.

\textsc{No Unprecedented Decisions, Stable CATE and No Feedback} might fail dramatically if e.g., the deployment of the policy coincided with a major economic downturn. In a serious downturn, the employment service may have to assist people from previously stable industries (violating \textsc{No Unprecedented Decisions} and \textsc{No Feedback}), or employment prospects might deteriorate for everyone (violating \textsc{Stable CATE}). However, the possibility of such exogenous shocks is not a threat to our methodology. We are interested in the {\em ceteris paribus} effect of the algorithmic policy on structural inequality, not an all-thing-considered prediction of future economic conditions.

We are now in a position to show that, under the assumptions outlined above, it is possible to predict $P_{\text{post}}(Y=y~|~A=a)$ from pre-interventional data and a supposition about $P_{\text{post}}(D=d~|~A=a,X=x)$. That means that we can also predict changes to the overall reemployment probability $P_\text{post}(Y=0)$ as well as the gender reemployment gap $P_\text{post}(Y=1~|~A=1)-P_\text{post}(Y=1~|~A=0).$ Each of these are natural and important instances of $\varphi(\cdot).$ The proof is deferred to the supplementary material.

\begin{theorem}\label{thm:Identification}
    Suppose that \textsc{Consistency, Unconfoundedness, No Unprecedented Decisions, Stable CATE} and \textsc{No Feedback} hold. Suppose also that $P_{\text{post}}(A=a)>0$. Then, $P_{\text{post}}(Y=y~|~A=a)$ is given by
    \[ \sum_{(x,d)\in \Pi_{\text{post}}} P_{\text{pre}} (Y=y~|~A=a,X=x,D=d)P_{\text{pre}} (X=x~|~A=a)P_{\text{post}}(D=d~|~A=a,X=x),\]
where $\Pi_t = \left\{ (x,d) \in \mathcal{X \times D} : P_{t}(X=x,D=d~|~A=a)>0\right\}.$
\end{theorem}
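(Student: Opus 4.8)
The plan is to derive the formula by marginalizing $P_{\text{post}}(Y=y\mid A=a)$ over the covariates and decisions, factorizing the resulting joint conditional, and then rewriting each of the three factors in terms of pre-deployment quantities together with the supposed post-deployment propensity. Since $P_{\text{post}}(A=a)>0$, the left-hand side is well-defined, and the law of total probability gives
\[ P_{\text{post}}(Y=y\mid A=a) = \sum_{(x,d)\in\Pi_{\text{post}}} P_{\text{post}}(Y=y\mid A=a, X=x, D=d)\,P_{\text{post}}(D=d\mid A=a, X=x)\,P_{\text{post}}(X=x\mid A=a), \]
where restricting the sum to $\Pi_{\text{post}}$ is harmless, because any $(x,d)$ outside it contributes a zero summand.

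Two of the three factors are then routine. For the covariate factor I would use \textsc{No Feedback}, which equates the joint laws of $(A,X)$ before and after deployment; marginalizing over $X$ shows $P_{\text{pre}}(A=a)=P_{\text{post}}(A=a)$, and hence $P_{\text{post}}(X=x\mid A=a)=P_{\text{pre}}(X=x\mid A=a)$ for every $x$. The propensity factor $P_{\text{post}}(D=d\mid A=a,X=x)$ is exactly the modeled quantity we are permitted to suppose, so it is left untouched and appears verbatim in the target expression.

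The crux is the outcome factor $P_{\text{post}}(Y=y\mid A=a, X=x, D=d)$, which I would show equals its pre-deployment analogue. The key observation is that the same argument runs at each time $t\in\{\text{pre},\text{post}\}$: by \textsc{Consistency}, conditioning on $D=d$ lets us replace the observed $Y$ by $Y^d$, so $P_t(Y=y\mid A=a,X=x,D=d)=P_t(Y^d=y\mid A=a,X=x,D=d)$; then \textsc{Unconfoundedness} ($Y^d\perp_t D\mid A,X$) removes the conditioning on $D$, yielding $P_t(Y^d=y\mid A=a,X=x)$. Applying this at $t=\text{post}$, invoking \textsc{Stable CATE} to swap $P_{\text{post}}(Y^d=y\mid A=a,X=x)$ for $P_{\text{pre}}(Y^d=y\mid A=a,X=x)$, and then running the consistency/unconfoundedness step backwards at $t=\text{pre}$, gives $P_{\text{post}}(Y=y\mid A=a,X=x,D=d)=P_{\text{pre}}(Y=y\mid A=a,X=x,D=d)$. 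Substituting the three rewritten factors into the marginalization produces the claimed identity.

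Finally I would verify that every term is well-defined on $\Pi_{\text{post}}$, which is where \textsc{No Unprecedented Decisions} earns its keep. For $(x,d)\in\Pi_{\text{post}}$ we have $P_{\text{post}}(X=x\mid A=a)>0$ and $P_{\text{post}}(D=d\mid A=a,X=x)>0$; the former together with \textsc{No Feedback} gives $P_{\text{pre}}(X=x\mid A=a)>0$, while the latter together with \textsc{No Unprecedented Decisions} gives $P_{\text{pre}}(D=d\mid A=a,X=x)>0$. Hence $P_{\text{pre}}(A=a,X=x,D=d)>0$ and the pre-deployment conditional outcome probability is genuinely defined. I expect the only real subtlety — rather than the algebra, which is elementary — to be precisely this bookkeeping: ensuring that the potential-outcome rewriting is legitimate exactly on the support where the modeled post-deployment propensity places mass, so that no undefined conditional ever appears in the final sum.
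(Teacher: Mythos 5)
Your proposal is correct and follows essentially the same route as the paper's proof: the same total-probability decomposition over $\Pi_{\text{post}}$, the same use of \textsc{No Feedback} for the covariate factor, the same Consistency--Unconfoundedness rewriting at each time bridged by \textsc{Stable CATE} for the outcome factor, and the same positivity bookkeeping via \textsc{No Unprecedented Decisions} and \textsc{No Feedback} (which the paper merely places at the start rather than the end). No gaps.
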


Note that the first two terms in the product are identified from pre-deployment data. Given a sufficiently precise proposal for how risk scores influence decisions, it is also possible to model $\Pi_{\text{post}}$ and the last term before deployment. This allows us to systematically compare different (fairness-constrained) algorithms and decision procedures, and arrive at a reasonable prediction of their combined effect on reemployment probabilities (and the gender reemployment gap) before they are deployed. In the following, we show how this approach works in a realistic case study.

%%%%%%%%%%%%%%%%%%%%%%%%%%%%%%%%%%%%%%%%%%%%%%%%%%%%%%%%%%%%%%%%%%%%%%%%%%%%%%%%%%%%
\section{Long-term Unemployment in Switzerland}\label{sec:CaseStudy}
Prospective fairness requires forecasting the effect of using (fair) risk scores to inform program allocation decisions on both the overall risk of long-term unemployment and the gender gap in long-term unemployment. We present an extensive case study based on Swiss administrative data to study three questions: do fairness-constrained risk scores improve outcomes? are restrictive, Austrian-style allocation policies more efficient than Flemish-style policies that prioritize people at high risk? and can we improve outcomes with individualized estimates of program effectiveness?

\subsection{Methodology}
Our analysis proceeds in the following stages: (1) Using double-robust machine learning, we first estimate the effectiveness of each of the programs for all individuals in our test sample. (2) We estimate risk scores for the individuals in our test sample, using fairness-constrained and fairness-unconstrained methods. We implement two fairness constraints: statistical parity and equal opportunity. (3) For each of the risk scores from stage two, we prioritize the individuals in the test sample. The Flanders-style policy prioritizes those at the highest risk. The Austrian prioritization does the same, but only for those in the $30-70$th risk percentiles; the rest go to the end of the line. (4) For each priority list from stage three, we assign unemployed to programs until program capacity is reached. We model two assignment schemes. The first assigns individuals to programs randomly. The second uses the results of stage one to assign individuals to the program with the highest estimated effectiveness. Additionally, we consider the effect of increasing program capacities. Finally, we summarize the effects of different combinations of choices from steps (2-4) on overall rates of long-term unemployment and the gender-reemployment gap.\footnote{The replication package for this analysis is available on Github: \url{https://github.com/sezezulka/2023-01-ALMP-LTU.git}.}

\subsubsection{Data}
We exploit the administrative Swiss Active Labor Market Policy (ALMP) Evaluation Dataset.\footnote{The data is available for scientific use at SWISSbase \citep{Lechner2020}.} The original sample contains observations on $100,120$ registered unemployed in $2003$, aged $24$ to $55$. Recently unemployed received one of seven treatments: \textit{no program, vocational training, computer programs, language courses, job search programs, employment programs, and personality training}. Among the seven treatment options, \textit{no program} and \textit{job search programs} are by far the most common treatments. We restrict the analysis to the German-speaking cantons as assignment strategies differ among the three language regions \citep{Knaus2022}. To avoid overstating the effectiveness of ``no program'', we estimate pseudo program starting points for individuals in this treatment arm and exclude those who are re-employed before the pseudo starting point \citep{Lechner1999, Knaus2022}. This results in the exclusion of $5,076$ observations.\footnote{The problem is that some people are {\em assigned} to ``no program'' while others exit unemployment before they can receive an assignment but these are coded the same way. Compare: if someone spontaneously recovers before being assigned to an arm of a drug trial, this should not count in favor of the placebo.} 

The final data set contains $64,296$ individuals, which we divide equally into training and test sets. The simulation study is performed on the test set of $32,148$ individuals and all results are reported for this population. Descriptive statistics for the simulation data are reported in Table~\ref{tab:descriptives} in the Appendix. 

For all individuals, we observe employment status for $36$ months after registration with the Swiss Public Employment Service (PES). Our target, long-term unemployment, is defined as a binary variable indicating continuous unemployment for $12$ months after the (pseudo) program start.\footnote{This is a deviation from \citet{Koertner2023}, who define their target variable as $12$ months after registration with the PES.} The treatment variable is defined as the first program assigned within six months after registering as unemployed. The administrative data includes information on the individual employment biographies, demographics, and local labor market conditions as well as information on the individual caseworker and their assessment of their clients' labor market outlook. 

\subsubsection{Individualized Average Potential Outcomes}
We adopt double-robust machine learning for the estimation of individual average potential outcomes (IAPOs) and treatment effects (IATEs) for the seven treatment options  \citep{Chernozhukov2018, Alaa2023, Curth2024}. We follow \citet{Knaus2022} and \citet{Koertner2023} in their identification strategy and use the R-package \textsc{causalDML} \citep{Knaus2022}. Inverse probability weighting is used to account for non-random selection into the programs under the identifying assumptions of {\em Unconfoundedness} (similar to our \textsc{Unconfoundedness}), {\em Common Support} (\textsc{No Unprecedented Decisions}), and {\em Stable Unit Treatment Value} (\textsc{Consistency} and \textsc{Stable CATE}). Especially important for the plausibility of Unconfoundedness is the availability of information about the individual caseworker. See Appendix~\ref{supp:iapos} for a more detailed discussion of the estimation approach.

The resulting (individualized) average treatment effects are given in Figure~\ref{fig:IATEs}. They are in line with the results reported in \citet{Knaus2022} and \citet{Koertner2023a}. Vocational Training, Computer Programs, and Language Courses have the strongest effects on reducing (long-term) unemployment. We find that Job Search and Employment Programs on average increase the risk of long-term unemployment by between $2$ to $3$ percentage points and confirm the high effect heterogeneity in all treatments. The reported treatment effects are the difference of the respective potential outcome scores, where ``no program'' is the baseline program. IATEs broken down by gender are given in Figure~\ref{fig:IATEs-by-Gender}.

\begin{figure}[hbt!]
  \begin{subfigure}{0.49\textwidth}
    \centering
    \includegraphics[width=\linewidth]{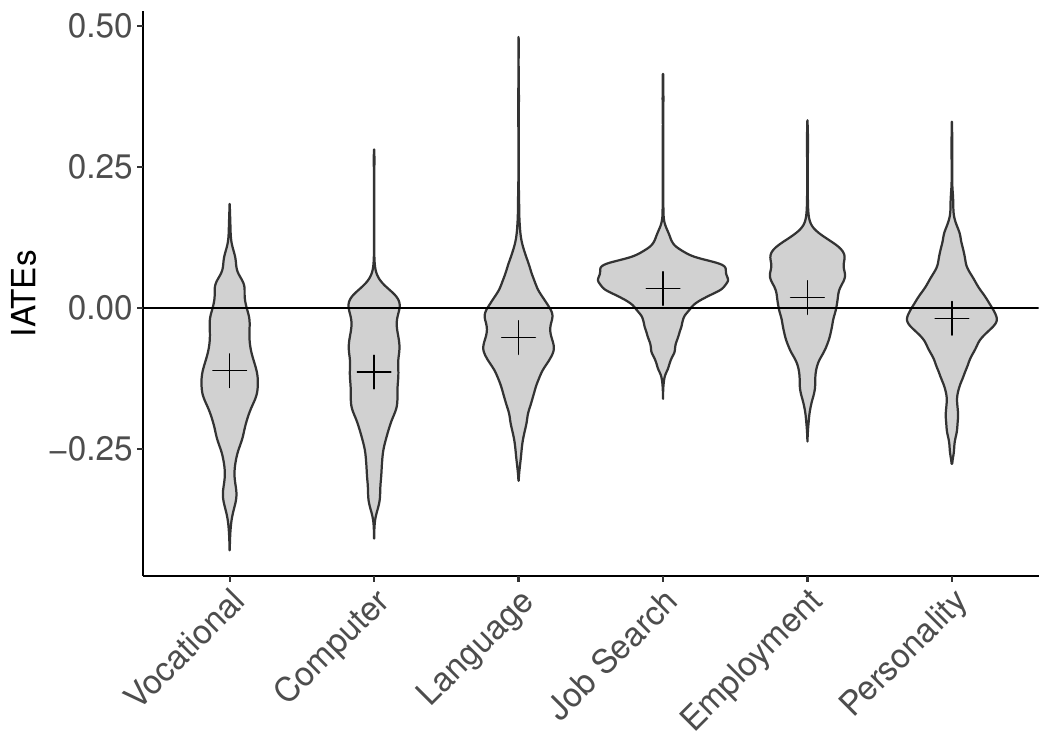}
    \caption{Individualized Average Treatment Effects.}
  \end{subfigure}%
\hfill
  \begin{subfigure}{0.5\textwidth}
    \centering
    \begin{tabular}{lccc}
    \toprule
    & ATE & SE & 95\%-CI  \\
    \midrule
    \midrule
    Vocational & -11.12 & 0.06 & [-11.12, -11.12] \\
    Computer     & -11.37 & 0.05 & [-11.37, -11.37] \\
    Language & -5.25 & 0.04 & [-5.26, -5.25]\\
    Job Search & 3.43 & 0.03 & [3.43, 3.43] \\
    Employment & 1.83 & 0.04 & [1.83, 1.83] \\
    Personality & -1.84 & 0.04 & [-1.84, -1.84] \\
    \bottomrule
    \end{tabular}
    \caption{Average Treatment Effects in percentage points, standard errors, and $95\%$ confidence intervals. Negative treatment effects imply a lower risk of becoming long-term unemployed.}
    \label{tab:iates_by_program}
  \end{subfigure}
  \caption{Estimated (Individualized) Average Treatment Effects for six labor market programs with ``no program'' as the baseline.} 
  \label{fig:IATEs}
\end{figure}

\subsubsection{Risk scores}
In 2003, program assignment in the Swiss public employment service was made at the discretion of the individual caseworker. This practice continues to this day.\footnote{The canton of Freiburg had a pilot study from 2012-2014, providing caseworkers with estimates of the expected length of the unemployment spell \citep{arni2016}.}  For estimating the risk scores to determine the prioritization, all caseworker information is excluded so that only data reasonably available at registration time is used. The sensitive attributes are included and the full list of features is given in Appendix~\ref{supp:risk}. 

We estimate fairness-unconstrained risk scores as well as risk scores constrained to satisfy statistical parity\footnote{Also called demographic parity or Independence of the predictions from the sensitive attribute \citep{barocas-hardt-narayanan}.} and equality of opportunity\footnote{The equality in true positive rates for both groups. This is a relaxation of equalized odds, also called Separation \citep{barocas-hardt-narayanan}.}. Throughout, we use logistic ridge regressions. We use the R-package \textsc{fairml} for the fairness-constrained risk scores \citep{Scutari2022} and do not require the fairness constraint to be met exactly.

All three methods, applying a decision threshold of $.5$, achieve an accuracy of about $64-65\%$. These results are in line with internationally reported accuracy rates for the prediction of long-term unemployment \citep{Desiere2019}. The unconstrained risk scores violate \textit{statistical parity}, with more women than men being predicted to become long-term unemployed (a discrepancy of $0.116$). Further, the true (a discrepancy of $0.174$) and false positive ($0.062$) rates are higher for women than for men. The fairness-constrained scores reduce these discrepancies. The unconstrained risk scores are approximately \textit{calibrated} for men and women, see Table~\ref{tab:risk_score_results}. Details on the implementation together with descriptive statistics for the risk scores can be found in Appendix~\ref{supp:risk}. 

\subsubsection{Prioritization}
For each of the three risk scores from the previous stage, we compile two priority lists modeling the Belgian and Austrian proposals. The Belgian list goes in order of decreasing risk \citep{Desiere2020}. The Austrian list does the same for those in the $30-70th$ risk percentiles. The others are put at the end of the list, in random order \citep{Allhutter2020}. This yields six priority lists, one for each combination of risk score and prioritization scheme. 

\subsubsection{Program Assignments}
For each of the six lists from the previous stage, we assign individuals to programs in order of priority. Individuals are assigned according to two schemes: optimal and random. The first assigns each person to the program that is most effective for them and not yet at capacity. This models the best-case scenario in which caseworkers are very good at discerning which program is best for each client. The second makes assignments by a uniform draw from the available programs.\footnote{We run this scheme ten times per policy and average over the resulting individual risks for long-term unemployment.} These two assignment schemes provide upper and lower bounds for what might happen when caseworkers are {\em informed} by risk scores when making assignment decisions instead of fully automating the decision. To model adjustments to the budget constraint of the PES, we consider the effect of increasing program capacities. As a baseline, we take the program sizes observed in the test set (see Table~\ref{tab:descriptives}). Then, we consider capacities that are $2-5x$ larger. Because the most effective programs are also the smallest, increasing overall capacities mainly influences outcomes by increasing the capacities of these small but effective programs.

\subsection{Results}
\subsubsection{Fair Prediction and the Fair Distribution of Social Goods.}
Regardless of the notion of retrospective fairness and the choices made at other stages, constraining risk predictions to be fair yields larger gender reemployment gaps (Figure~\ref{fig:fairnessPenalty}). This is because fairness constraints, by shifting the distribution of risk scores among women to look more like the distribution among men (Figure~\ref{fig:risk-scores}), tend to underestimate their risk of long-term unemployment. The effect of fairness constraints is to reserve a roughly equal number of seats in effective training programs for men and women (Figure~\ref{fig:number-participation}). Therefore, fairness-constrained policies induce similar improvements in labor market outcomes for both genders, which keeps the gender reemployment gap relatively constant. On the other hand, fairness unconstrained risk scores are, on average, higher for women. That means that more seats are reserved for women in effective programs--the result is more aggressive reductions in rates of long-term unemployment among women than among men. These effects are only made more pronounced when budget constraints are relaxed and program capacities are increased. For example, at baseline program sizes the combination of Belgian prioritization and individualized treatment decisions yields a $3.2\%$ gender gap in reemployment probabilities ($40.4\%$ vs $37.2\%$) when risk scores are unconstrained and a $4.1\%$ gender gap ($40.9\%$ vs $36.8\%$) when risk scores are constrained to satisfy equal opportunity. This means that, at baseline program sizes, the equal opportunity constraint slightly \textit{exacerbated} the ex-ante gender gap of $3.9\%$ ($43.6\%$ vs.  $39.7\%$). If programs are made five times larger, the fairness unconstrained policy reduces the gender gap to $.9\%$ ($35.1\%$ vs $34.2\%$) whereas equal opportunity leaves the gender gap relatively unchanged at $3\%$ ($36.2\%$ vs $33.2\%$). All results are given in Tables~\ref{tab:LTU-rates-1} for baseline and \ref{tab:LTU-rates-5} for five-fold capacities. We observe similar patterns for citizenship gaps (Appendix, Figures~\ref{fig:fairnessPenaltyCitizens} and \ref{fig:CitizenOverall}). 
\begin{figure}[hbt!]
  \centering

  \begin{subfigure}{0.48\textwidth} % 0.48\textwidth
    \includegraphics[width=0.9\linewidth]{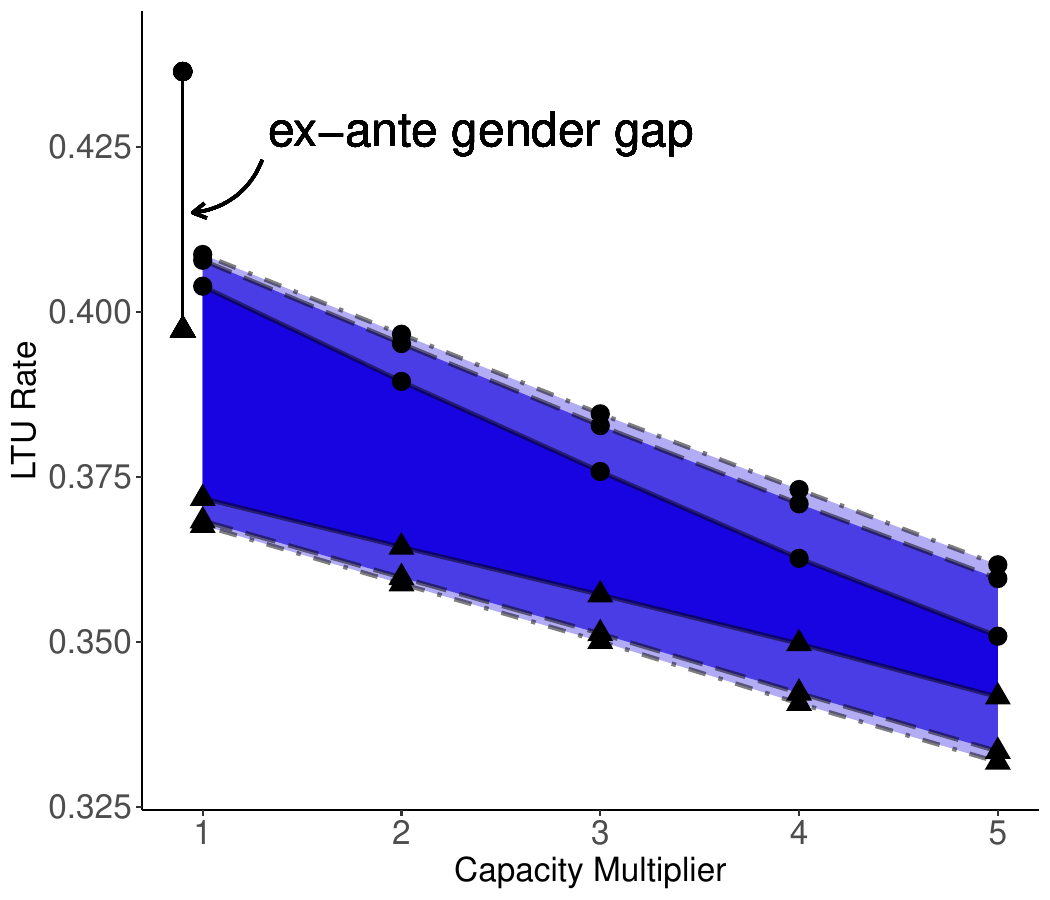}
    \caption{Belgian prioritization and optimal program.}
  \end{subfigure}
  \hfill
  \begin{subfigure}{0.48\textwidth}
    \includegraphics[width=0.9\linewidth]{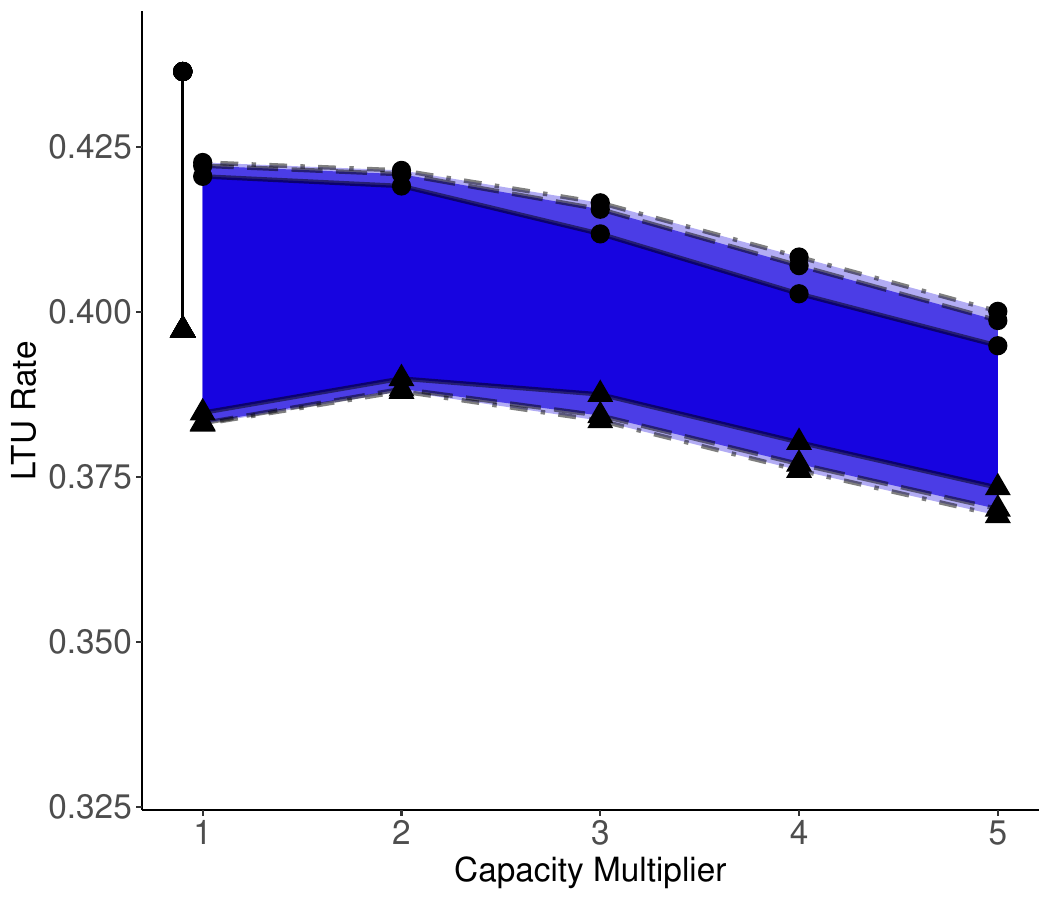}
    \caption{Belgian prioritization and random program.}
  \end{subfigure}
\medskip
  \begin{subfigure}{0.48\textwidth}
    \includegraphics[width=0.9\linewidth]{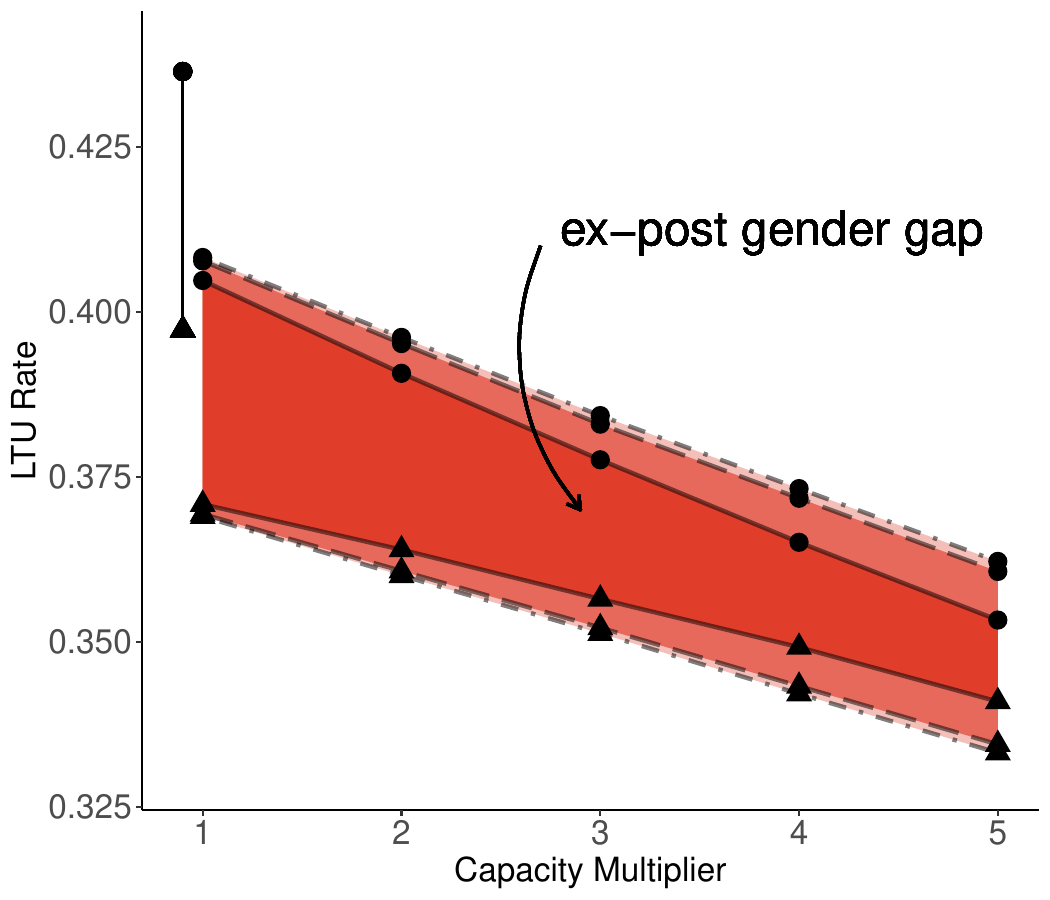}
    \caption{Austrian prioritization and optimal program.}
  \end{subfigure}
  \hfill
  \begin{subfigure}{0.48\textwidth}
    \includegraphics[width=0.9\linewidth]{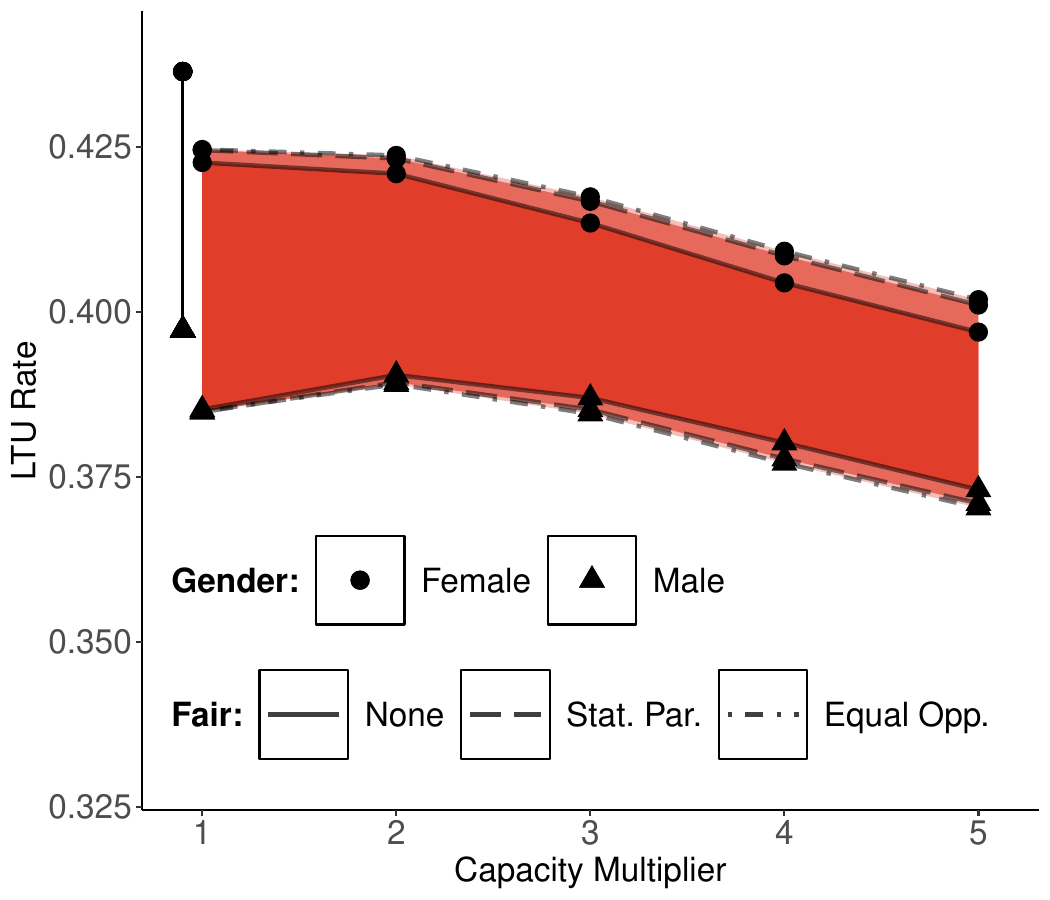}
    \caption{Austrian prioritization and random program.}
  \end{subfigure}
\caption{We plot the gender gap in long-term unemployment (LTU) against program capacity for each combination of prioritization and assignment scheme. The level of transparency shows the gender gap for the corresponding fairness constraint: none, statistical parity, or equal opportunity. The unconstrained risk scores (lowest transparency) result in the smallest gender gap. This effect is especially pronounced as program capacity is increased and program assignments are individualized (optimal).}  
\label{fig:fairnessPenalty}
\end{figure}

\subsubsection{Hawks and Doves.}
Regardless of other choices, the Belgian policy is at least as efficient as the Austrian policy, both in reducing overall rates of long-term unemployment and reducing the gender reemployment gap (Figure \ref{fig:doveSupremacy}). This holds both for the optimal program assignment and the random assignment. For example: at baseline program sizes, when the unemployed receive targeted assignment and risk scores are not fairness constrained, the Belgian policy achieves an overall LTU rate of $38.6\%$ and a gender reemployment gap of $3.2\%$ ($40.4\%$ vs. $37.2\%$) whereas the Austrian policy induces an identical overall rate and a gap of $3.4\%.$ If programs are made five times larger, the Belgian policy achieves an overall rate of $34.6\%$ and a gender gap of $.9\%$ ($35.1\%$ vs $34.2\%$), whereas the Austrian policy achieves an identical overall rate and a gender gap of $1.2\%$ ($35.3\%$ vs $34.1\%$). Thus, targeting those at the highest risk of long-term unemployment achieves improvements in gender equality without any costs in overall efficiency. A more fine-grained analysis shows that the Belgian prioritization closes the gender gap much more aggressively among married non-citizens, who tend to have the worst labor market outcomes, whereas the Austrian prioritization does slightly better among groups with better average outcomes (Figure~\ref{fig:married-swiss}). Similar effects are observed for citizenship gaps (Figures~\ref{fig:fairnessPenaltyCitizens} and \ref{fig:CitizenOverall}). Therefore we do not find any efficiency advantage for withholding training from individuals at the highest risk of unemployment. Indeed, risk scores tend to overestimate the risk of unemployment under optimal treatment (Figure~\ref{fig:risk-vs-po}).
\begin{figure}[hbt!]
  \centering

  \begin{subfigure}{0.48\textwidth}
    \includegraphics[width=.9\linewidth]{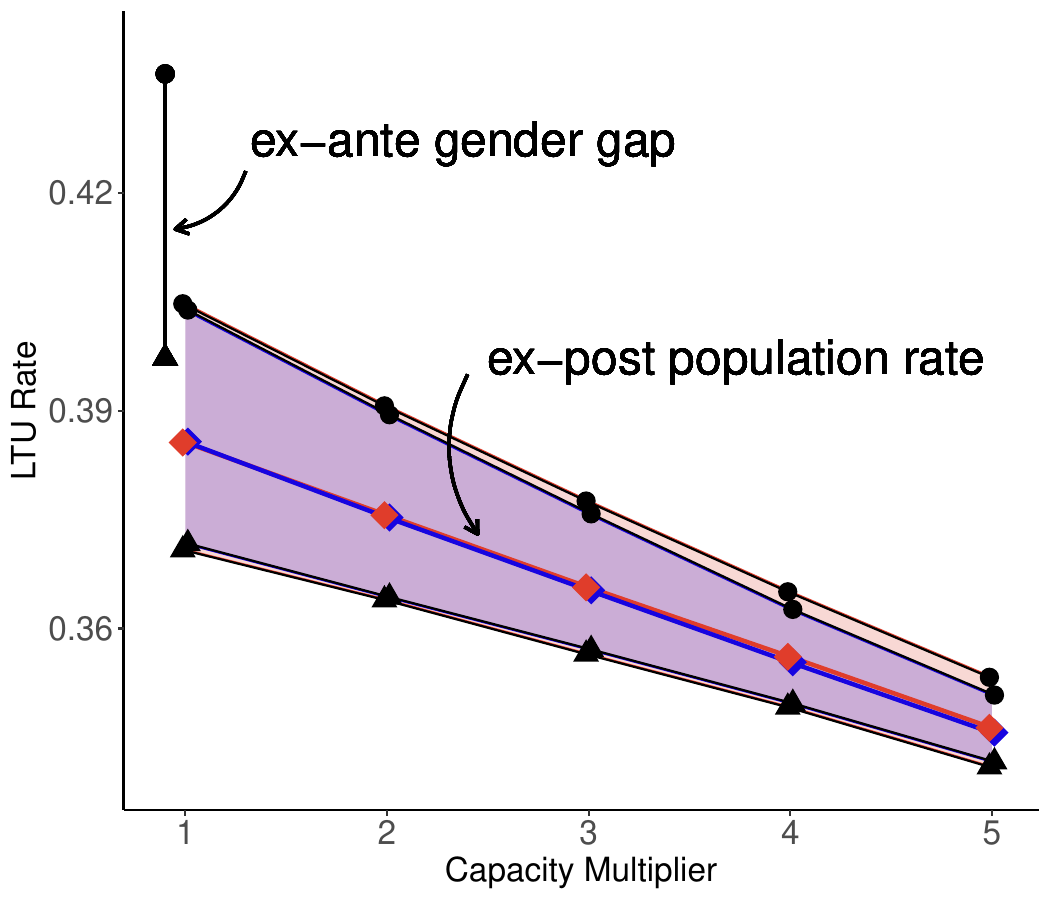}
    \caption{Optimal Program.}
    \label{fig:figure1}
  \end{subfigure}
  \hfill
  \begin{subfigure}{0.48\textwidth}
    \includegraphics[width=.9\linewidth]{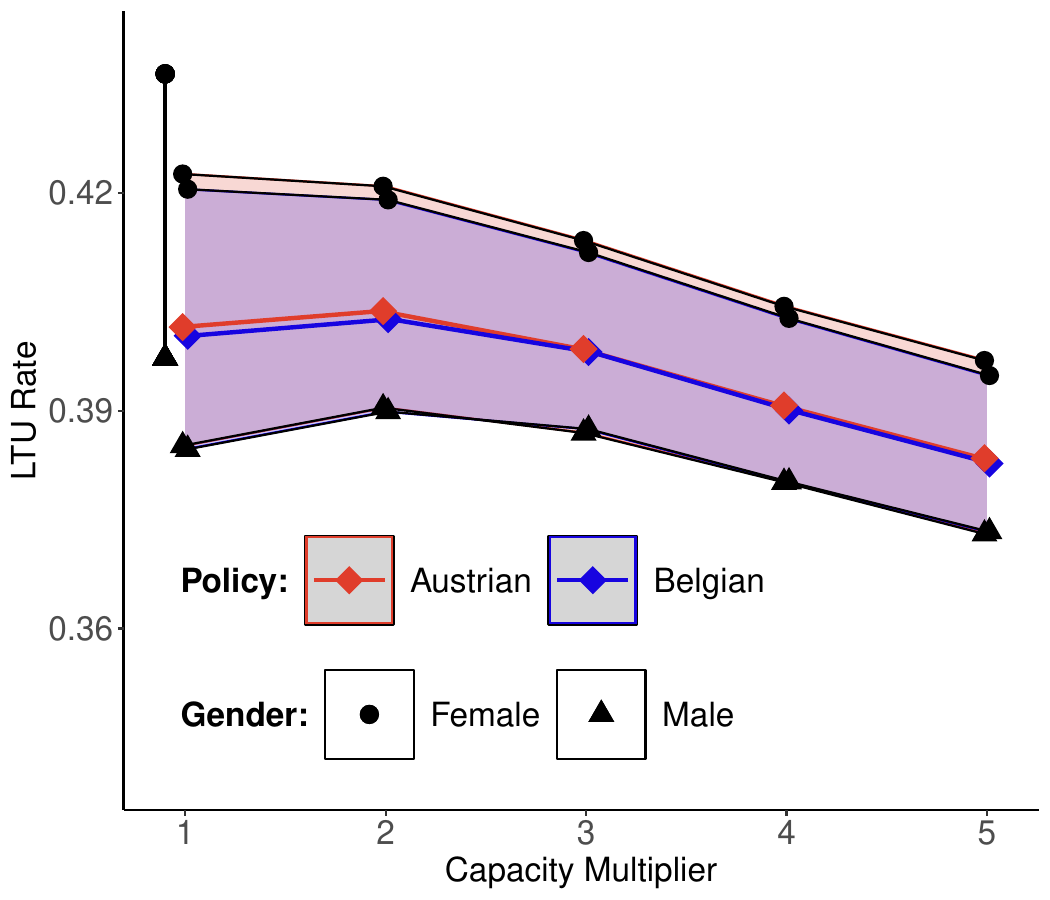}
    \caption{Random Program.}
    \label{fig:figure2}
  \end{subfigure}

  \caption{We show overall long-term unemployment and the gender gap against program capacity for each combination of prioritization and assignment scheme. For clarity, results are shown only for fairness-unconstrained risk scores. Regardless of the assignment scheme, the Belgian prioritization results in slightly lower overall rates of long-term unemployment (blue line) and a smaller gender gap. Individualized program assignments (optimal) are markedly more effective.}
  \label{fig:doveSupremacy}
\end{figure}

\subsubsection{Gains from Modeling Counterfactual Outcomes.}
Regardless of other choices, assigning individuals to the program with the highest estimated effectiveness reduces overall long-term unemployment and reemployment gaps (Figures \ref{fig:fairnessPenalty} and \ref{fig:doveSupremacy}). This represents gains due to explicit estimation of treatment effects rather than risk scores alone. For example: at baseline program sizes, when risk scores are not fairness constrained, targeting achieves a reduction of about $1.5$ percentage points in overall long-term unemployment over random assignment, regardless of prioritization. If programs are made five times larger, targeting achieves a reduction of about $3.7$ percentage points over random assignment. Targeting is also more effective than random assignment at reducing gender gaps under both prioritization regimes.

%%%%%%%%%%%%%%%%%%%%%%%%%%%%%%%%%%%%%%%%%%%%%%%%%%%%%%%%%%%%%%%%%%%%%%%%%%%%%%%%%%%%
\section{Conclusion, Limitations, and Future Work}\label{sec:Conclusion}
We have argued that {\em prospective} algorithmic fairness requires anticipating the causal effects of deploying algorithms on the distribution of outcomes. We have shown that existing methods in algorithmic fairness can have perverse distributive effects: requiring risk scores to be fair according to statistical parity or equal opportunity may exacerbate inequalities in social goods. Moreover, contrary to the accepted trade-offs between accurate and fair predictions, accurate prediction of individualized \textit{counterfactual} outcomes supports policy in reducing inequality in the distribution of social goods. 

Our approach has several limitations: we have not tried every fairness constraint (notably, multi-calibration \citep{HebertJohnson2018_CONF, Hoeltgen2023_CONF}), nor accounted for uncertainty in the estimation of individualized treatment effects and outcomes. Uncertainty quantification in double-robust machine learning remains an open problem \citep{Curth2024}. Conformal prediction methods may apply \citep{Alaa2023, Lei2021}. 
Some applications may require program assignments to be made in an online, rather than a batch, fashion \citep{Yin2023_CONF}. In addition to anticipatory evaluations, algorithmic policy should be designed to support {\em ex-post} evaluation, for example by (partial) randomization. Our approach is rather paternalistic: future work should accommodate the preferences of the unemployed themselves. Finally, we rely essentially on risk scores to facilitate prioritization. This reflects the state of algorithmic policy. However, risk scores increasingly seem like an unnecessary detour. We are inspired by \citet{Koertner2023}: future work might directly seek distributively optimal allocations (perhaps with more sophisticated notions of optimality) without recourse to risk scores \citep{Kitagawa2019, Viviano2023}. This approach subjects claims of `efficiency' to direct test and allows the conceptual innovations of distributive justice theory to flow directly into applications.  

%%
%% The acknowledgments section is defined using the "acks" environment
%% (and NOT an unnumbered section). This ensures the proper
%% identification of the section in the article metadata, and the
%% consistent spelling of the heading.
\

\section*{Acknowledgments}
This work has been funded by the Deutsche Forschungsgemeinschaft (DFG, German Research Foundation) under Germany’s Excellence Strategy – EXC number 2064/1 – Project number 390727645. The authors thank the International Max Planck Research School for Intelligent Systems (IMPRS-IS) for supporting Sebastian Zezulka.

Many thanks to Michael Knaus, Christoph Kern, Ruben Bach, Thomas Grote, Donal Khosrowi Djen-Gheschlaghi, and the anonymous reviewers for helpful discussions and feedback, and to John Körtner and Ruben Bach for sharing their code.

%%
%% The next two lines define the bibliography style to be used, and
%% the bibliography file.
%\bibliographystyle{ACM-Reference-Format}
\bibliography{refrences}

\begin{thebibliography}{82}
\providecommand{\natexlab}[1]{#1}
\providecommand{\url}[1]{\texttt{#1}}
\expandafter\ifx\csname urlstyle\endcsname\relax
  \providecommand{\doi}[1]{doi: #1}\else
  \providecommand{\doi}{doi: \begingroup \urlstyle{rm}\Url}\fi

\bibitem[Alaa et~al.(2023)Alaa, Ahmad, and van~der Laan]{Alaa2023}
Ahmed Alaa, Zaid Ahmad, and Mark van~der Laan.
\newblock {C}onformal {M}eta-learners for {P}redictive {I}nference of {I}ndividual {T}reatment {E}ffects.
\newblock In A.~Oh, T.~Neumann, A.~Globerson, K.~Saenko, M.~Hardt, and S.~Levine, editors, \emph{Advances in Neural Information Processing Systems}, volume~36, pages 47682--47703, New York, NY, USA, 2023. Curran Associates, Inc.

\bibitem[Albanesi and {\c{S}}ahin(2018)]{Albanesi2018}
Stefania Albanesi and Ay{\c{s}}egül {\c{S}}ahin.
\newblock The gender unemployment gap.
\newblock \emph{Review of Economic Dynamics}, 30:\penalty0 47--67, 2018.
\newblock \doi{10.1016/j.red.2017.12.005}.

\bibitem[Allhutter et~al.(2020{\natexlab{a}})Allhutter, Cech, Fischer, Grill, and Mager]{Allhutter2020}
Doris Allhutter, Florian Cech, Fabian Fischer, Gabriel Grill, and Astrid Mager.
\newblock {A}lgorithmic {P}rofiling of {J}ob {S}eekers in {A}ustria: {H}ow {A}usterity {P}olitics {A}re {M}ade {E}ffective.
\newblock \emph{Frontiers in Big Data}, 3, 2020{\natexlab{a}}.
\newblock \doi{10.3389/fdata.2020.00005}.

\bibitem[Allhutter et~al.(2020{\natexlab{b}})Allhutter, Mager, Cech, Fischer, and Grill]{Allhutter2020_TECH_REPORT}
Doris Allhutter, Astrid Mager, Florian Cech, Fabian Fischer, and Gabriel Grill.
\newblock {D}er {AMS}-{A}lgorithmus. {E}ine {S}oziotechnische {A}nalyse des {A}rbeitsmarktchancen-{A}ssistenz-{S}ystems ({AMAS}). {E}ndbericht.
\newblock Technical Report ITA-Projektbericht Nr. 2020-02, Institut für Technikfolgen-Abschätzung der Österreichischen Akademie der Wissenschaften, 2020{\natexlab{b}}.

\bibitem[Angwin et~al.(2016)Angwin, Larson, Mattu, and Kirchner]{Angwin2016}
Julia Angwin, Jeff Larson, Surya Mattu, and Lauren Kirchner.
\newblock {M}achine bias: {T}here’s software used across the country to predict future criminals. and it’s biased against blacks, 2016.
\newblock URL \url{https://www.propublica.org/article/machine-bias-risk-assessments-in-criminal-sentencing}.

\bibitem[Arni and Schiprowski(2016)]{arni2016}
Patrick Arni and Amelie Schiprowski.
\newblock {D}ie {R}olle von {E}rwartungshaltungen in der {S}tellensuche und der {RAV}-{B}eratung - {T}eilprojekt 2: {P}ilotprojekt {J}obchancen-{B}arometer.
\newblock 2016.
\newblock \doi{10.21256/zhaw-30297}.

\bibitem[Barabas et~al.(2018)Barabas, Virza, Dinakar, Ito, and Zittrain]{Barabas2018_CONF}
Chelsea Barabas, Madars Virza, Karthik Dinakar, Joichi Ito, and Jonathan Zittrain.
\newblock {I}nterventions over {P}redictions: {R}eframing the {E}thical {D}ebate for {A}ctuarial {R}isk {A}ssessment.
\newblock In Sorelle~A. Friedler and Christo Wilson, editors, \emph{Proceedings of the 1st Conference on Fairness, Accountability and Transparency}, volume~81 of \emph{Proceedings of Machine Learning Research}, pages 62--76. PMLR, 2018.

\bibitem[Barocas et~al.(2023)Barocas, Hardt, and Narayanan]{barocas-hardt-narayanan}
Solon Barocas, Moritz Hardt, and Arvind Narayanan.
\newblock \emph{Fairness and Machine Learning: Limitations and Opportunities}.
\newblock MIT Press, 2023.

\bibitem[Beigang(2022)]{Beigang2022}
Fabian Beigang.
\newblock {O}n the {A}dvantages of {D}istinguishing {B}etween {P}redictive and {A}llocative {F}airness in {A}lgorithmic {D}ecision-{M}aking.
\newblock \emph{Minds and Machines}, 32\penalty0 (4):\penalty0 655--682, 2022.

\bibitem[Berk et~al.(2023{\natexlab{a}})Berk, Kuchibhotla, and Tchetgen]{Berk2023}
Richard~A. Berk, Arun~Kumar Kuchibhotla, and Eric~Tchetgen Tchetgen.
\newblock {F}air {R}isk {A}lgorithms.
\newblock \emph{Annual Review of Statistics and Its Application}, 10\penalty0 (1):\penalty0 165--187, 2023{\natexlab{a}}.
\newblock \doi{10.1146/annurev-statistics-033021-120649}.

\bibitem[Berk et~al.(2023{\natexlab{b}})Berk, Kuchibhotla, and Tchetgen]{Berk2023a}
Richard~A. Berk, Arun~Kumar Kuchibhotla, and Eric~Tchetgen Tchetgen.
\newblock {I}mproving {F}airness in {C}riminal {J}ustice {A}lgorithmic {R}isk {A}ssessments {U}sing {O}ptimal {T}ransport and {C}onformal {P}rediction {S}ets.
\newblock \emph{Sociological Methods {\&} Research}, page 004912412311558, mar 2023{\natexlab{b}}.
\newblock \doi{10.1177/00491241231155883}.

\bibitem[Bishu and Alkadry(2016)]{Bishu2016}
Sebawit~G. Bishu and Mohamad~G. Alkadry.
\newblock {A} {S}ystematic {R}eview of the {G}ender {P}ay {G}ap and {F}actors {T}hat {P}redict {I}t.
\newblock \emph{Administration {\&} Society}, 49\penalty0 (1):\penalty0 65--104, 2016.
\newblock \doi{10.1177/0095399716636928}.

\bibitem[Bonoli(2010)]{Bonoli2010}
Giuliano Bonoli.
\newblock {T}he {P}olitical {E}conomy of {A}ctive {L}abor-{M}arket {P}olicy.
\newblock \emph{Politics {\&} Society}, 38\penalty0 (4):\penalty0 435--457, 2010.
\newblock \doi{10.1177/0032329210381235}.

\bibitem[Borsboom et~al.(2008)Borsboom, Romeijn, and Wicherts]{Borsboom2008}
Denny Borsboom, Jan-Willem Romeijn, and Jelte~M. Wicherts.
\newblock {M}easurement invariance versus selection invariance: Is fair selection possible?
\newblock \emph{Psychological Methods}, 13\penalty0 (2):\penalty0 75--98, 2008.
\newblock \doi{10.1037/1082-989x.13.2.75}.

\bibitem[B{\"u}hlmann et~al.(2013)B{\"u}hlmann, Schmid~Botkine, Farago, H{\"o}pflinger, Joye, Levy, Perrig-Chiello, and Suter]{buhlmann2013swiss}
Felix B{\"u}hlmann, C{\'e}line Schmid~Botkine, Peter Farago, Fran{\c{c}}ois H{\"o}pflinger, Dominique Joye, Ren{\'e} Levy, Pasqualina Perrig-Chiello, and Christian Suter.
\newblock \emph{Swiss Social Report: Generations in Perspective}.
\newblock Seismo, 2013.
\newblock URL \url{http://socialreport.ch/2012/first-level-page/long-term-unemployment/long-term-unemployment-in-switzerland-by-sex-1991-2010.html}.

\bibitem[{Bundesagentur f{\"u}r Arbeit}(2023)]{arbeitsmarkt2022statistik}
{Bundesagentur f{\"u}r Arbeit}.
\newblock {S}tatistik der {B}undesagentur f{\"u}r {A}rbeit {B}erichte: {B}lickpunkt {A}rbeitsmarkt –{D}ie {A}rbeitsmarktsituation von {F}rauen und {M}{\"a}nnern.
\newblock \emph{N{\"u}rnberg, May}, 2023.

\bibitem[Bynum et~al.(2023)Bynum, Loftus, and Stoyanovich]{Bynum2023}
Lucius E.~J. Bynum, Joshua~R. Loftus, and Julia Stoyanovich.
\newblock Counterfactuals for the {F}uture.
\newblock \emph{Proceedings of the AAAI Conference on Artificial Intelligence}, 37\penalty0 (12):\penalty0 14144--14152, June 2023.
\newblock ISSN 2159-5399.
\newblock \doi{10.1609/aaai.v37i12.26655}.

\bibitem[Card et~al.(2018)Card, Kluve, and Weber]{Card2018}
David Card, Jochen Kluve, and Andrea Weber.
\newblock {W}hat {W}orks? {A} {M}eta {A}nalysis of {R}ecent {A}ctive {L}abor {M}arket {P}rogram {E}valuations.
\newblock \emph{Journal of the European Economic Association}, 16\penalty0 (3):\penalty0 894--931, 2018.
\newblock \doi{10.1093/jeea/jvx028}.

\bibitem[Chernozhukov et~al.(2018)Chernozhukov, Chetverikov, Demirer, Duflo, Hansen, Newey, and Robins]{Chernozhukov2018}
Victor Chernozhukov, Denis Chetverikov, Mert Demirer, Esther Duflo, Christian Hansen, Whitney Newey, and James Robins.
\newblock {D}ouble/debiased machine learning for treatment and structural parameters.
\newblock \emph{The Econometrics Journal}, 21\penalty0 (1):\penalty0 C1--C68, jan 2018.
\newblock \doi{10.1111/ectj.12097}.

\bibitem[Chouldechova(2017)]{Chouldechova2017}
Alexandra Chouldechova.
\newblock {F}air {P}rediction with {D}isparate {I}mpact: {A} {S}tudy of {B}ias in {R}ecidivism {P}rediction {I}nstruments.
\newblock \emph{Big Data}, 5\penalty0 (2):\penalty0 153--163, 2017.
\newblock \doi{10.1089/big.2016.0047}.

\bibitem[Cockx et~al.(2023)Cockx, Lechner, and Bollens]{Cockx2023}
Bart Cockx, Michael Lechner, and Joost Bollens.
\newblock {P}riority to unemployed immigrants? {A} causal machine learning evaluation of training in {B}elgium.
\newblock \emph{Labour Economics}, 80:\penalty0 102306, 2023.
\newblock \doi{10.1016/j.labeco.2022.102306}.

\bibitem[Considine et~al.(2017)Considine, Nguyen, and O'Sullivan]{Considine2017}
Mark Considine, Phuc Nguyen, and Siobhan O'Sullivan.
\newblock New public management and the rule of economic incentives: {A}ustralian welfare-to-work from job market signalling perspective.
\newblock \emph{Public Management Review}, 20\penalty0 (8):\penalty0 1186--1204, 2017.
\newblock \doi{10.1080/14719037.2017.1346140}.

\bibitem[Corbett-Davies et~al.(2018)Corbett-Davies, Gaebler, Nilforoshan, Shroff, and Goel]{CorbettDavies2018}
Sam Corbett-Davies, Johann~D. Gaebler, Hamed Nilforoshan, Ravi Shroff, and Sharad Goel.
\newblock {T}he {M}easure and {M}ismeasure of {F}airness, 2018.

\bibitem[Coston et~al.(2020)Coston, Mishler, Kennedy, and Chouldechova]{Coston2020_CONF}
Amanda Coston, Alan Mishler, Edward~H. Kennedy, and Alexandra Chouldechova.
\newblock {C}ounterfactual risk assessments, evaluation, and fairness.
\newblock In \emph{Proceedings of the 2020 Conference on Fairness, Accountability, and Transparency}, FAT* '20, page 582–593, New York, NY, USA, jan 2020. Association for Computing Machinery.
\newblock \doi{10.1145/3351095.3372851}.

\bibitem[Cr{\'{e}}pon et~al.(2013)Cr{\'{e}}pon, Duflo, Gurgand, Rathelot, and Zamora]{Crepon2013}
Bruno Cr{\'{e}}pon, Esther Duflo, Marc Gurgand, Roland Rathelot, and Philippe Zamora.
\newblock {D}o {L}abor {M}arket {P}olicies have {D}isplacement {E}ffects? {E}vidence from a {C}lustered {R}andomized {E}xperiment.
\newblock \emph{The Quarterly Journal of Economics}, 128\penalty0 (2):\penalty0 531--580, 2013.
\newblock \doi{10.1093/qje/qjt001}.

\bibitem[Curth et~al.(2024)Curth, Peck, McKinney, Weatherall, and van~der Schaar]{Curth2024}
Alicia Curth, Richard~W. Peck, Eoin McKinney, James Weatherall, and Mihaela van~der Schaar.
\newblock {U}sing {M}achine {L}earning to {I}ndividualize {T}reatment {E}ffect {E}stimation: {C}hallenges and {O}pportunities.
\newblock \emph{Clinical Pharmacology \& Therapeutics}, 115\penalty0 (4):\penalty0 710--719, January 2024.
\newblock ISSN 1532-6535.
\newblock \doi{https://doi.org/10.1002/cpt.3159}.

\bibitem[D'Amour et~al.(2020)D'Amour, Srinivasan, Atwood, Baljekar, Sculley, and Halpern]{DAmour2020_CONF}
Alexander D'Amour, Hansa Srinivasan, James Atwood, Pallavi Baljekar, D.~Sculley, and Yoni Halpern.
\newblock {F}airness is not static: deeper understanding of long term fairness via simulation studies.
\newblock In \emph{Proceedings of the 2020 Conference on Fairness, Accountability, and Transparency}, FAT* '20, page 525–534, New York, NY, USA, jan 2020. Association for Computing Machinery.
\newblock \doi{10.1145/3351095.3372878}.

\bibitem[Desiere et~al.(2019)Desiere, Langenbucher, and Struyven]{Desiere2019}
S.~Desiere, K.~Langenbucher, and L.~Struyven.
\newblock Statistical profiling in public employment services.
\newblock \emph{OECD Social, Employment and Migration Working Papers}, \penalty0 (224), 2019.
\newblock \doi{10.1787/b5e5f16e-en}.

\bibitem[Desiere and Struyven(2020)]{Desiere2020}
Sam Desiere and Ludo Struyven.
\newblock {U}sing {A}rtificial {I}ntelligence to classify {J}obseekers: {T}he {A}ccuracy-{E}quity {T}rade-off.
\newblock \emph{Journal of Social Policy}, 50\penalty0 (2):\penalty0 367--385, 2020.
\newblock \doi{10.1017/s0047279420000203}.

\bibitem[Dwork et~al.(2012)Dwork, Hardt, Pitassi, Reingold, and Zemel]{Dwork2012_CONF}
Cynthia Dwork, Moritz Hardt, Toniann Pitassi, Omer Reingold, and Richard Zemel.
\newblock {F}airness through awareness.
\newblock In \emph{Proceedings of the 3rd Innovations in Theoretical Computer Science Conference}, ITCS '12, page 214–226, New York, NY, USA, jan 2012. Association for Computing Machinery.
\newblock \doi{10.1145/2090236.2090255}.

\bibitem[Ensign et~al.(2018)Ensign, Friedler, Neville, Scheidegger, and Venkatasubramanian]{Ensign2017}
Danielle Ensign, Sorelle~A. Friedler, Scott Neville, Carlos Scheidegger, and Suresh Venkatasubramanian.
\newblock {R}unaway {F}eedback {L}oops in {P}redictive {P}olicing.
\newblock \emph{Proceedings of Machine Learning Research}, 81:\penalty0 1--12, 2018.

\bibitem[{European Commission, Eurostat}(Accessed 17 January 2024)]{eurostat_swissLTUGap}
{European Commission, Eurostat}.
\newblock Long-term unemployment by sex - annual data, Accessed 17 January 2024.
\newblock URL \url{https://ec.europa.eu/eurostat/databrowser/view/une_ltu_a/default/table?lang=en}.

\bibitem[Goller et~al.(2021)Goller, Harrer, Lechner, and Wolff]{Goller2021}
Daniel Goller, Tamara Harrer, Michael Lechner, and Joachim Wolff.
\newblock Active labour market policies for the long-term unemployed: New evidence from causal machine learning, 2021.

\bibitem[Green(2022)]{Green2022}
Ben Green.
\newblock {E}scaping the {I}mpossibility of {F}airness: {F}rom {F}ormal to {S}ubstantive {A}lgorithmic {F}airness.
\newblock \emph{Philosophy {\&} Technology}, 35\penalty0 (4):\penalty0 1--32, oct 2022.
\newblock \doi{10.1007/s13347-022-00584-6}.

\bibitem[Grote(2023)]{Grote2023}
Thomas Grote.
\newblock {F}airness as adequacy: a sociotechnical view on model evaluation in machine learning.
\newblock \emph{{AI} and Ethics}, apr 2023.
\newblock \doi{10.1007/s43681-023-00280-x}.

\bibitem[Hardt et~al.(2016)Hardt, Price, and Srebro]{Hardt2016}
Moritz Hardt, Eric Price, and Nathan Srebro.
\newblock {E}quality of {O}pportunity in {S}upervised {L}earning.
\newblock In D.~Lee, M.~Sugiyama, U.~Luxburg, I.~Guyon, and R.~Garnett, editors, \emph{Advances in Neural Information Processing Systems}, volume~29, pages 3315--3323, New York, NY, USA, 2016. Curran Associates, Inc.

\bibitem[Hebert-Johnson et~al.(2018)Hebert-Johnson, Kim, Reingold, and Rothblum]{HebertJohnson2018_CONF}
Ursula Hebert-Johnson, Michael Kim, Omer Reingold, and Guy Rothblum.
\newblock {M}ulticalibration: {C}alibration for the ({C}omputationally-{I}dentifiable) {M}asses.
\newblock In Jennifer Dy and Andreas Krause, editors, \emph{Proceedings of the 35th International Conference on Machine Learning}, volume~80 of \emph{Proceedings of Machine Learning Research}, pages 1939--1948. PMLR, 2018.
\newblock URL \url{https://proceedings.mlr.press/v80/hebert-johnson18a.html}.

\bibitem[Hu and Chen(2018)]{Hu2018_CONF}
Lily Hu and Yiling Chen.
\newblock {A} {S}hort-term {I}ntervention for {L}ong-term {F}airness in the {L}abor {M}arket.
\newblock In \emph{Proceedings of the 2018 World Wide Web Conference on World Wide Web}, WWW '18, page 1389–1398, Republic and Canton of Geneva, CHE, 2018. International World Wide Web Conferences Steering Committee.
\newblock \doi{10.1145/3178876.3186044}.

\bibitem[Hutchinson and Mitchell(2019)]{Hutchinson2019_CONF}
Ben Hutchinson and Margaret Mitchell.
\newblock {50} {Y}ears of {T}est ({U}n)fairness.
\newblock In \emph{Proceedings of the Conference on Fairness, Accountability, and Transparency}, FAT* '19, page 49–58, New York, NY, USA, jan 2019. {Association for Computing Machinery}.
\newblock \doi{10.1145/3287560.3287600}.

\bibitem[Höltgen and Williamson(2023)]{Hoeltgen2023_CONF}
Benedikt Höltgen and Robert~C. Williamson.
\newblock {O}n the {R}ichness of {C}alibration.
\newblock In \emph{2023 ACM Conference on Fairness, Accountability, and Transparency}, FAccT ’23, page 1124–1138, New York, NY, USA, 2023. Association for Computing Machinery.
\newblock \doi{10.1145/3593013.3594068}.

\bibitem[Kannan et~al.(2019)Kannan, Roth, and Ziani]{Kannan2019_CONF}
Sampath Kannan, Aaron Roth, and Juba Ziani.
\newblock {D}ownstream {E}ffects of {A}ffirmative {A}ction.
\newblock In \emph{Proceedings of the Conference on Fairness, Accountability, and Transparency}, FAT* '19, page 240–248, New York, NY, USA, jan 2019. Association for Computing Machinery.
\newblock \doi{10.1145/3287560.3287578}.

\bibitem[Kasy and Abebe(2021)]{Kasy2021_CONF}
Maximilian Kasy and Rediet Abebe.
\newblock {F}airness, {E}quality, and {P}ower in {A}lgorithmic {D}ecision-{M}aking.
\newblock In \emph{Proceedings of the 2021 {ACM} Conference on Fairness, Accountability, and Transparency}, FAccT '21, page 576–586, New York, NY, USA, mar 2021. {Association for Computing Machinery}.
\newblock \doi{10.1145/3442188.3445919}.

\bibitem[Kern et~al.(2021)Kern, Bach, Mautner, and Kreuter]{Kern2021}
Christoph Kern, Ruben~L. Bach, Hannah Mautner, and Frauke Kreuter.
\newblock {F}airness in {A}lgorithmic {P}rofiling: {A} {G}erman {C}ase {S}tudy, 2021.

\bibitem[Kilbertus et~al.(2017)Kilbertus, Rojas~Carulla, Parascandolo, Hardt, Janzing, and Sch{\"o}lkopf]{Kilbertus2017}
Niki Kilbertus, Mateo Rojas~Carulla, Giambattista Parascandolo, Moritz Hardt, Dominik Janzing, and Bernhard Sch{\"o}lkopf.
\newblock {A}voiding discrimination through causal reasoning.
\newblock In I.~Guyon, U.~Von Luxburg, S.~Bengio, H.~Wallach, R.~Fergus, S.~Vishwanathan, and R.~Garnett, editors, \emph{Advances in Neural Information Processing Systems}, volume~30, New York, NY, USA, 2017. Curran Associates, Inc.
\newblock URL \url{https://proceedings.neurips.cc/paper_files/paper/2017/hash/f5f8590cd58a54e94377e6ae2eded4d9-Abstract.html}.

\bibitem[Kitagawa and Tetenov(2019)]{Kitagawa2019}
Toru Kitagawa and Aleksey Tetenov.
\newblock Equality-minded treatment choice.
\newblock \emph{Journal of Business {\&} Economic Statistics}, 39\penalty0 (2):\penalty0 561--574, 2019.
\newblock \doi{10.1080/07350015.2019.1688664}.

\bibitem[Kleinberg et~al.(2016)Kleinberg, Mullainathan, and Raghavan]{Kleinberg2016}
Jon Kleinberg, Sendhil Mullainathan, and Manish Raghavan.
\newblock {I}nherent {T}rade-{O}ffs in the {F}air {D}etermination of {R}isk {S}cores, 2016.

\bibitem[Kleven et~al.(2023)Kleven, Landais, and Leite-Mariante]{Kleven2023}
Henrik Kleven, Camille Landais, and Gabriel Leite-Mariante.
\newblock {T}he {C}hild {P}enalty {A}tlas.
\newblock Working Paper 31649, National Bureau of Economic Research, Cambridge, MA, USA, aug 2023.

\bibitem[Knaus(2022)]{Knaus2022}
Michael~C. Knaus.
\newblock {D}ouble machine learning-based programme evaluation under unconfoundedness.
\newblock \emph{The Econometrics Journal}, 25\penalty0 (3):\penalty0 602--627, jun 2022.
\newblock \doi{10.1093/ectj/utac015}.

\bibitem[Knaus et~al.(2022)Knaus, Lechner, and Strittmatter]{Knaus2022a}
Michael~C. Knaus, Michael Lechner, and Anthony Strittmatter.
\newblock {H}eterogeneous {E}mployment {E}ffects of {J}ob {S}earch {P}rograms: {A} {M}achine {L}earning {A}pproach.
\newblock \emph{Journal of Human Resources}, 57\penalty0 (2):\penalty0 597--636, 2022.
\newblock ISSN 1548-8004.
\newblock \doi{10.3368/jhr.57.2.0718-9615r1}.

\bibitem[Kunaschk and Lang(2022)]{Kunaschk2022}
Max Kunaschk and Julia Lang.
\newblock {C}an {A}lgorithms {R}eliably {P}redict {L}ong-{T}erm {U}nemployment in {T}imes of {C}risis? – {E}vidence from the {COVID}-19 {P}andemic.
\newblock \emph{IAB-Discussion Paper}, 2022.
\newblock \doi{10.48720/IAB.DP.2208}.

\bibitem[Kuppler et~al.(2022)Kuppler, Kern, Bach, and Kreuter]{Kuppler2022}
Matthias Kuppler, Christoph Kern, Ruben~L. Bach, and Frauke Kreuter.
\newblock From fair predictions to just decisions? conceptualizing algorithmic fairness and distributive justice in the context of data-driven decision-making.
\newblock \emph{Frontiers in Sociology}, 7, October 2022.
\newblock ISSN 2297-7775.
\newblock \doi{10.3389/fsoc.2022.883999}.

\bibitem[Kusner et~al.(2017)Kusner, Loftus, Russell, and Silva]{Kusner2017_CONF}
Matt~J. Kusner, Joshua Loftus, Chris Russell, and Ricardo Silva.
\newblock Counterfactual fairness.
\newblock In I.~Guyon, U.~Von Luxburg, S.~Bengio, H.~Wallach, R.~Fergus, S.~Vishwanathan, and R.~Garnett, editors, \emph{Advances in neural information processing systems}, volume~30, New York, NY, USA, 2017. Curran Associates, Inc.

\bibitem[Körtner and Bach(2023)]{Koertner2023}
John Körtner and Ruben~L. Bach.
\newblock {I}nequality-{A}verse {O}utcome-{B}ased {M}atching.
\newblock OSFPreprints, aug 2023.

\bibitem[Körtner and Bonoli(2023)]{Koertner2023a}
John Körtner and Giuliano Bonoli.
\newblock \emph{Predictive algorithms in the delivery of public employment services}, chapter Chapter 27, pages 387--398.
\newblock Edward Elgar Publishing, October 2023.
\newblock ISBN 9781800880887.
\newblock \doi{10.4337/9781800880887.00037}.

\bibitem[Lammers and Kok(2019)]{Lammers2019}
Marloes Lammers and Lucy Kok.
\newblock {A}re active labor market policies (cost-) effective in the long run? {E}vidence from the {N}etherlands.
\newblock \emph{Empirical Economics}, 60\penalty0 (4):\penalty0 1719--1746, 2019.
\newblock \doi{10.1007/s00181-019-01812-3}.

\bibitem[Lechner(1999)]{Lechner1999}
Michael Lechner.
\newblock {E}arnings and {E}mployment {E}ffects of {C}ontinuous {O}ff-the-{J}ob {T}raining in {E}ast {G}ermany {A}fter {U}nification.
\newblock \emph{Journal of Business \& Economic Statistics}, 17\penalty0 (1):\penalty0 74--90, January 1999.
\newblock ISSN 1537-2707.
\newblock \doi{10.1080/07350015.1999.10524798}.

\bibitem[Lechner et~al.(2020)Lechner, Knaus, Huber, Frölich, Behncke, Mellace, and Strittmatter]{Lechner2020}
Michael Lechner, Michael Knaus, Martin Huber, Markus Frölich, Stefanie Behncke, Giovanni Mellace, and Anthony Strittmatter.
\newblock {S}wiss {A}ctive {L}abor {M}arket {P}olicy {E}valuation [{D}ataset].
\newblock SWISSbase, 2020.

\bibitem[Lei and Candès(2021)]{Lei2021}
Lihua Lei and Emmanuel~J. Candès.
\newblock {C}onformal {I}nference of {C}ounterfactuals and {I}ndividual {T}reatment {E}ffects.
\newblock \emph{Journal of the Royal Statistical Society Series B: Statistical Methodology}, 83\penalty0 (5):\penalty0 911--938, October 2021.
\newblock ISSN 1467-9868.
\newblock \doi{10.1111/rssb.12445}.

\bibitem[Levy et~al.(2021)Levy, Chasalow, and Riley]{Levy2021}
Karen Levy, Kyla~E. Chasalow, and Sarah Riley.
\newblock {A}lgorithms and {D}ecision-{M}aking in the {P}ublic {S}ector.
\newblock \emph{Annual Review of Law and Social Science}, 17\penalty0 (1):\penalty0 309--334, 2021.
\newblock \doi{10.1146/annurev-lawsocsci-041221-023808}.

\bibitem[Liu et~al.(2018)Liu, Dean, Rolf, Simchowitz, and Hardt]{Liu2018_CONF}
Lydia~T. Liu, Sarah Dean, Esther Rolf, Max Simchowitz, and Moritz Hardt.
\newblock {D}elayed {I}mpact of {F}air {M}achine {L}earning.
\newblock In Jennifer Dy and Andreas Krause, editors, \emph{Proceedings of the 35th International Conference on Machine Learning}, volume~80 of \emph{Proceedings of Machine Learning Research}, pages 3150--3158. PMLR, aug 2018.
\newblock \doi{10.24963/ijcai.2019/862}.

\bibitem[Liu et~al.(2024)Liu, Barocas, Kleinberg, and Levy]{Liu2024}
Lydia~T. Liu, Solon Barocas, Jon Kleinberg, and Karen Levy.
\newblock {O}n the {A}ctionability of {O}utcome {P}rediction.
\newblock \emph{Proceedings of the AAAI Conference on Artificial Intelligence}, 38\penalty0 (20):\penalty0 22240--22249, March 2024.
\newblock ISSN 2159-5399.
\newblock \doi{10.1609/aaai.v38i20.30229}.

\bibitem[Lum and Isaac(2016)]{lum2016predict}
Kristian Lum and William Isaac.
\newblock To predict and serve?
\newblock \emph{Significance}, 13\penalty0 (5):\penalty0 14--19, 2016.

\bibitem[Malinsky(2018)]{malinsky2018intervening}
Daniel Malinsky.
\newblock Intervening on structure.
\newblock \emph{Synthese}, 195\penalty0 (5):\penalty0 2295--2312, 2018.

\bibitem[Menon and Williamson(2018)]{Menon2018}
Aditya~Krishna Menon and Robert~C. Williamson.
\newblock {T}he cost of fairness in classification.
\newblock In Sorelle~A. Friedler and Christo Wilson, editors, \emph{Proceedings of the 1st Conference on Fairness, Accountability and Transparency}, volume~81 of \emph{Proceedings of Machine Learning Research}, pages 107--118. PMLR, PMLR, 2018.

\bibitem[Mishler and Dalmasso(2022)]{Mishler2022}
Alan Mishler and Niccolò Dalmasso.
\newblock {F}air {W}hen {T}rained, {U}nfair {W}hen {D}eployed: {O}bservable {F}airness {M}easures are {U}nstable in {P}erformative {P}rediction {S}ettings, 2022.

\bibitem[Mitchell et~al.(2021)Mitchell, Potash, Barocas, D{\textquotesingle}Amour, and Lum]{Mitchell2021}
Shira Mitchell, Eric Potash, Solon Barocas, Alexander D{\textquotesingle}Amour, and Kristian Lum.
\newblock {A}lgorithmic {F}airness: {C}hoices, {A}ssumptions, and {D}efinitions.
\newblock \emph{Annual Review of Statistics and Its Application}, 8\penalty0 (1):\penalty0 141--163, 2021.
\newblock \doi{10.1146/annurev-statistics-042720-125902}.

\bibitem[Mouzannar et~al.(2019)Mouzannar, Ohannessian, and Srebro]{Mouzannar2019_CONF}
Hussein Mouzannar, Mesrob~I. Ohannessian, and Nathan Srebro.
\newblock {F}rom {F}air {D}ecision {M}aking {T}o {S}ocial {E}quality.
\newblock In \emph{Proceedings of the Conference on Fairness, Accountability, and Transparency}, FAT* '19, page 359–368, New York, NY, USA, 2019. Association for Computing Machinery.
\newblock \doi{10.1145/3287560.3287599}.

\bibitem[Mueller and Spinnewijn(2023)]{Mueller2023_TECH_REPORT}
Andreas Mueller and Johannes Spinnewijn.
\newblock {T}he {N}ature of {L}ong-{T}erm {U}nemployment: {P}redictability, {H}eterogeneity and {S}election.
\newblock Working Paper 30979, National Bureau of Economic Research, feb 2023.

\bibitem[Pagan et~al.(2023)Pagan, Baumann, Elokda, Pasquale, Bolognani, and Hann{\'{a}}k]{Pagan2023_CONF}
Nicol{\`{o}} Pagan, Joachim Baumann, Ezzat Elokda, Giulia~De Pasquale, Saverio Bolognani, and Anik{\'{o}} Hann{\'{a}}k.
\newblock {A} {Cl}assification of {F}eedback {L}oops and {T}heir {R}elation to {B}iases in {A}utomated {D}ecision-{M}aking {S}ystems.
\newblock In \emph{Equity and Access in Algorithms, Mechanisms, and Optimization}, EAAMO '23, New York, NY, USA, oct 2023. Association for Computing Machinery.
\newblock \doi{10.1145/3617694.3623227}.

\bibitem[Perdomo et~al.(2020)Perdomo, Zrnic, Mendler-D{\"u}nner, and Hardt]{Perdomo2020}
Juan Perdomo, Tijana Zrnic, Celestine Mendler-D{\"u}nner, and Moritz Hardt.
\newblock {P}erformative {P}rediction.
\newblock In Hal Daumé~III. and Aarti Singh, editors, \emph{Proceedings of the 37th International Conference on Machine Learning}, volume 119 of \emph{Proceedings of Machine Learning Research}, pages 7599--7609. PMLR, 2020.

\bibitem[Petersen et~al.(2021)Petersen, Christensen, Harper, and Hildebrandt]{Petersen2021}
Anette C.~M. Petersen, Lars~Rune Christensen, Richard Harper, and Thomas Hildebrandt.
\newblock "{W}e {W}ould {N}ever {W}rite {T}hat {D}own": {C}lassifications of {U}nemployed and {D}ata {C}hallenges for {AI}.
\newblock \emph{Proceedings of the {ACM} on Human-Computer Interaction}, 5\penalty0 ({CSCW}1):\penalty0 1--26, 2021.
\newblock \doi{10.1145/3449176}.

\bibitem[Scher et~al.(2023)Scher, Kopeinik, Trügler, and Kowald]{Scher2023}
Sebastian Scher, Simone Kopeinik, Andreas Trügler, and Dominik Kowald.
\newblock {M}odelling the long-term fairness dynamics of data-driven targeted help on job seekers.
\newblock \emph{Scientific Reports}, 13\penalty0 (1), 2023.
\newblock \doi{https://doi.org/10.1038/s41598-023-28874-9}.

\bibitem[Scutari et~al.(2022)Scutari, Panero, and Proissl]{Scutari2022}
Marco Scutari, Francesca Panero, and Manuel Proissl.
\newblock Achieving fairness with a simple ridge penalty.
\newblock \emph{Statistics and Computing}, 32\penalty0 (5), September 2022.
\newblock ISSN 1573-1375.
\newblock \doi{10.1007/s11222-022-10143-w}.

\bibitem[Selbst et~al.(2019)Selbst, Boyd, Friedler, Venkatasubramanian, and Vertesi]{Selbst2019_CONF}
Andrew~D. Selbst, Danah Boyd, Sorelle~A. Friedler, Suresh Venkatasubramanian, and Janet Vertesi.
\newblock {F}airness and {A}bstraction in {S}ociotechnical {S}ystems.
\newblock In \emph{Proceedings of the Conference on Fairness, Accountability, and Transparency}, FAT* '19, page 59–68, New York, NY, USA, jan 2019. Association for Computing Machinery.
\newblock \doi{10.1145/3287560.3287598}.

\bibitem[Tal(2023)]{Tal2023_CONF}
Eran Tal.
\newblock Target specification bias, counterfactual prediction, and algorithmic fairness in healthcare.
\newblock In \emph{Proceedings of the 2023 AAAI/ACM Conference on AI, Ethics, and Society}, AIES '23, page 312–321, New York, NY, USA, August 2023. Association for Computing Machinery.
\newblock \doi{10.1145/3600211.3604678}.

\bibitem[Tolbert and Diana(2023)]{Tolbert2023}
Alexander~Williams Tolbert and Emily Diana.
\newblock {C}orrecting {U}nderrepresentation and {I}ntersectional {B}ias for {F}air {C}lassification, 2023.

\bibitem[Viviano and Bradic(2023)]{Viviano2023}
Davide Viviano and Jelena Bradic.
\newblock {F}air {P}olicy {T}argeting.
\newblock \emph{Journal of the American Statistical Association}, 119\penalty0 (545):\penalty0 730--743, 2023.
\newblock \doi{10.1080/01621459.2022.2142591}.

\bibitem[Vooren et~al.(2018)Vooren, Haelermans, Groot, and van~den Brink]{Vooren2018}
Melvin Vooren, Carla Haelermans, Wim Groot, and Henriëtte~Maassen van~den Brink.
\newblock {T}he {E}ffectivnesness of {A}ctive {L}abor {M}arket {P}olicies: A {M}eta-{A}nalysis.
\newblock \emph{Journal of Economic Surveys}, 33\penalty0 (1):\penalty0 125--149, 2018.
\newblock \doi{10.1111/joes.12269}.

\bibitem[Weerts et~al.(2023)Weerts, Xenidis, Tarissan, Olsen, and Pechenizkiy]{Weerts2023_CONF}
Hilde Weerts, Raphaële Xenidis, Fabien Tarissan, Henrik~Palmer Olsen, and Mykola Pechenizkiy.
\newblock {A}lgorithmic {U}nfairness through the {L}ens of {EU} {N}on-{D}iscrimination {L}aw: Or {W}hy the {L}aw is not a {D}ecision {T}ree.
\newblock In \emph{Proceedings of the 2023 {ACM} Conference on Fairness, Accountability, and Transparency}, FAccT '23, page 805–816, New York, NY, USA, 2023. Association for Computing Machinery.
\newblock \doi{10.1145/3593013.3594044}.

\bibitem[Yin et~al.(2023)Yin, Raab, Liu, and Liu]{Yin2023_CONF}
Tongxin Yin, Reilly Raab, Mingyan Liu, and Yang Liu.
\newblock Long-{T}erm {F}airness with {U}nknown {D}ynamics.
\newblock In A.~Oh, T.~Neumann, A.~Globerson, K.~Saenko, M.~Hardt, and S.~Levine, editors, \emph{Advances in Neural Information Processing Systems}, volume~36, pages 55110--55139, New York, NY, USA, 2023. Curran Associates, Inc.

\bibitem[Zhang and Liu(2021)]{Zhang2021_CONF}
Xueru Zhang and Mingyan Liu.
\newblock \emph{{F}airness in {L}earning-{B}ased {S}equential {D}ecision {A}lgorithms: {A} {S}urvey}, pages 525--555.
\newblock Springer International Publishing, Cham, CH, 2021.
\newblock \doi{10.1007/978-3-030-60990-0_18}.

\bibitem[Zhang et~al.(2020)Zhang, Tu, Liu, Liu, Kjellstrom, Zhang, and Zhang]{Zhang2020}
Xueru Zhang, Ruibo Tu, Yang Liu, Mingyan Liu, Hedvig Kjellstrom, Kun Zhang, and Cheng Zhang.
\newblock {H}ow do fair decisions fare in long-term qualification?
\newblock In H.~Larochelle, M.~Ranzato, R.~Hadsell, Balcan. M.F., and H.~Lin, editors, \emph{Advances in Neural Information Processing Systems}, volume~33, pages 18457--18469, New York, NY, USA, 2020. Curran Associates, Inc.

\end{thebibliography}

%%
%% If your work has an appendix, this is the place to put it.
%\clearpage

\appendix

\section{Proof of Theorem~\ref{thm:Identification}}\label{supp:proof}
\begin{proof}[Proof of Theorem~\ref{thm:Identification}]
First, we need to show that all terms are well-defined. This amounts to showing that $P_\text{post}(A=a,X=x)$, $P_{\text{pre}}(A=a)$ and $P_{\text{pre}}(A=a,X=x,D=d)$ are strictly greater than zero for all $(x,d)\in \Pi_{\text{post}}.$ 

We first show that $P_{\text{pre}}(A=a)>0.$ Note that 

\begin{align*}
P_{\text{pre}}(A=a) &= \sum_{x\in\mathcal{X}} P_{\text{pre}}(A=a,X=x)\\
&= \sum_{x\in\mathcal{X}} P_{\text{post}}(A=a,X=x) \tag{\textsc{No Feedback}}\\
&= P_{\text{post}}(A=a)>0.
\end{align*}

We now show that $P_\text{post}(A=a,X=x)>0$ for all $(x,d)\in\Pi_{\text{post}}.$ Note that  

\begin{align*}
P_\text{post}(A=a,X=x) &= P_{\text{post}}(A=a)\sum_{e\in \mathcal{D}} P_\text{post}(X=x,D=e|A=a)\\
&\geq P_{\text{post}}(A=a) P_\text{post}(X=x,D=d|A=a)>0.
\end{align*}

Finally, we show that $P_{\text{pre}}(A=a,X=x,D=d)>0$ for all $(x,d)\in\Pi_{\text{post}}.$ Since $P_{\text{pre}}(A=a)>0$, it suffices to show that $P_{\text{pre}}(X=x,D=d|A=a)>0$ for all $(x,d)\in\Pi_{\text{post}}.$ Accordingly, suppose that $(x,d)\in \Pi_{\text{post}}.$ Then 

\begin{align*}
P_{\text{post}}(X=x,D=d|A=a) ~=~ P_{\text{post}}(D=d|X=x,A=a)P_{\text{post}}(X=x|A=a)>0,
\end{align*}

which entails that both $P_{\text{post}}(D=d|X=x,A=a)>0$ and $P_{\text{post}}(X=x|A=a)>0.$ By \textsc{No Unprecedented Decisions}, $P_{\text{pre}}(D=x|X=x,A=a)>0$ and by \textsc{No Feedback} $P_{\text{pre}}(X=x|A=a)>0.$ Therefore,

\begin{align*}
P_{\text{pre}}(X=x,D=d|A=a)~=~P_{\text{pre}}(D=x|X=x,A=a)P_{\text{pre}}(X=x|A=a)>0;
\end{align*}
and the question of well-definedness is settled. 

Next, note that: $P_{\text{post}}(Y=y~|~A=a) =$

\begin{align*}
    &= \sum_{(x,d)\in\Pi_{\text{post}}} P_{\text{post}}(Y=y~|~A=a, X=x, D=d)P_{\text{post}}(X=x, D=d~|~A=a) \tag{\text{Total Probability}}  \\
    &= \sum_{(x,d)\in\Pi_{\text{post}}} P_{\text{post}}(Y=y~|~A=a, X=x, D=d)P_{\text{post}}(X=x|A=a)P_{\text{post}}(D=d~|~A=a, X=x)\\
    &= \sum_{(x,d)\in\Pi_{\text{post}}} P_{\text{post}}(Y=y~|~A=a, X=x, D=d)P_{\text{pre}}(X=x|A=a)P_{\text{post}}(D=d~|~A=a, X=x). \tag{\textsc{No Feedback}}
\end{align*}

Note that, whenever defined,

\begin{align*}
    P_{t}(Y=y~&|~A=a, X=x, D=d) = \\
    &= P_{t}\left(\sum_{e\in\mathcal{D}}Y^e\mathbbm{1}[D=e]=1~|~A=a, X=x, D=d\right) \tag{\textsc{Consistency}}  \\
    &=P_{t}\left(Y^d=y~|~A=a, X=x, D=d\right)  \\
    &=P_{t}\left(Y^d=y~|~A=a, X=x\right).   \tag{\textsc{Unconfoundedness}}
\end{align*}

Therefore,

\begin{align*}
    P_{\text{post}}(Y=y~&|~A=a, X=x, D=d) = \\
    &=P_{\text{post}}\left(Y^d=y~|~A=a, X=x\right)  \\
    &=P_{\text{pre}}\left(Y^d=y~|~A=a, X=x\right)   \tag{\textsc{Stable CATE}} \\
    &= P_{\text{pre}}\left(Y=y~|~A=a, X=x, D=d\right);
\end{align*}

and, therefore, $P_{\text{post}}(Y=y~|~A=a)=$ 
\begin{align*}
    =\sum_{(x,d)\in\Pi_{\text{post}}} &P_{\text{pre}}(Y=y~|~A=a, X=x, D=d)P_{\text{pre}}(X=x|A=a)P_{\text{post}}(D=d~|~A=a, X=x).
\end{align*}
\end{proof}

\ifdim\columnwidth=\textwidth
\clearpage
\fi

\section{Case Study}
\subsection{Replication}
The replication package is available online on Github: \url{https://github.com/sezezulka/2023-01-ALMP-LTU.git}. It contains the code to run the pre-processing, the estimation of the individualized potential outcomes, the estimation of the (fairness constraint) risk scores, and the simulations of the algorithmically informed policies as described here. It allows the reproduction of the reported results, tables, and figures. Unfortunately, we are not allowed to make the data publicly available. It is available as a scientific use file on SWISSbase \citep{Lechner2020}.

%\clearpage
%\subsection{Descriptive Statistics: Simulation Data}\label{supp:descriptives}
\begin{table*}[hbt!]
    \centering
    \begin{tabular}{lccccccc}
    \toprule
    & \#Obs & LTU & Female & Age & Non-Citizen & Employability & Past Income \\
    & & &  (binary) & in years & (binary) & & in CHF \\
    \midrule
    \midrule
    \textbf{Simulation Data} & 32,148 & 0.41 & 0.44 & 36.8 & 0.36 & 1.93 & 43,461 \\
    \midrule
    No program & 23,785 & 0.41	& 0.43 & 36.6 & 0.37 & 1.92	& 42,557 \\
    Vocational & 423 & 0.28 & 0.32 & 37.5 & 0.32 & 1.91 & 49,349 \\
    Computer & 446 & 0.24 & 0.61 & 38.9 & 0.20 & 1.98 & 43,251 \\
    Language & 723 & 0.48 & 0.54 & 35.3 & 0.68 & 1.83 & 37,779 \\
    Job Search & 5,868 & 0.43 & 0.44 & 37.4 & 0.33 & 1.98 & 46,815 \\
    Employment & 321 & 0.46 & 0.43 & 35.3 & 0.39 & 1.84 & 36,902 \\
    Personality & 582 & 0.37 & 0.35 & 39.4 & 0.25 & 1.93 & 53,136 \\
    \bottomrule
    \end{tabular}
    \caption{Descriptive statistics for key demographic variables in the test and simulation data and by observed treatment groups. Long-term unemployment (LTU), Female, and Non-Citizen are given as shares. Age, Employability, and Past Income are averages. Employability is an ordered variable from low (1) to high (3), assigned by the caseworker. \citet{Knaus2022} reports an exchange rate USD/CHF of about $1.3$ for 2003.}
    \label{tab:descriptives}
\end{table*}

%\clearpage

%\subsection{Descriptive Statistics: Full sample}\label{supp:descriptives-full}
\begin{table*}[htb!]
    \centering
    \begin{tabular}{lccccccc}
    \toprule
    & \#Obs & LTU & Female & Age & Non-Citizen & Employability & Past Income \\
    & & &  (binary) & in years & (binary) & & in CHF \\
    \midrule
    \midrule
    \textbf{Full Sample} & 64,296 & 0.41 & 0.44 & 36.8 & 0.36 & 1.93 & 43,391 \\
    \midrule
    No program & 47,631 & 0.41 & 0.44 & 36.6 & 0.37 & 1.93 & 42,529 \\
    Vocational & 858 & 0.29 & 0.33 & 37.5 & 0.30 & 1.93 & 48,654 \\
    Computer & 905 & 0.28 & 0.60 & 39.1 & 0.21 & 1.97 & 43,213 \\
    Language & 1,504 & 0.47 & 0.55 & 35.28 & 0.66 & 1.85 & 37,300 \\
    Job Search & 11,610 & 0.43 & 0.44 & 37.3 & 0.33 & 1.98 & 46,693 \\
    Employment & 611 & 0.43 & 0.41 & 35.3 & 0.38 & 1.83 & 37,084 \\
    Personality & 1,177 & 0.37 & 0.36 & 38.7 & 0.27 & 1.93 & 53,067 \\
    \bottomrule
    \end{tabular}
    \caption{Descriptive statistics for key demographic variables in the full sample and by observed treatment groups. The simulation data is drawn from this full sample. Long-term unemployment (LTU), Female, and Non-Citizen are given as shares. Age, Employability, and Past Income are averages. Employability is an ordered variable from low (1) to high (3), assigned by the caseworker. \citet{Knaus2022} reports an exchange rate USD/CHF of about $1.3$ for 2003.}
    \label{tab:descriptives-full}
\end{table*}

\subsection{Double-Robust Machine Learning for Estimating IAPOs}\label{supp:iapos}
In Section~\ref{sec:Theorem}, we have derived the formal conditions under which the post-interventional gender gap is identified. Two assumptions concern the internal validity of our study. \textsc{Unconfoundedness} is the strongest assumption. Replicating the work by \citet{Knaus2022, Knaus2022a} and \citet{Koertner2023}, we rely on extensive information on caseworkers and their subjective assessment of their clients in the estimation of treatment effects combined with rich administrative data on the demographics and employment biographies to support the assumption. \textsc{No Unprecedented decisions} requires that the propensity scores are non-zero. The other two concern the external validity of our simulation study. We presuppose that the treatment effects of the programs are stable under different allocations and increased program capacities (\textsc{stable CATEs}) and that the pool of unemployed stays the same (\textsc{No Feedback} on the covariates). 

First, we estimate the normalized conditional probability to be allocated into each program (the propensity of treatment, $e_d(X_i)$) and the conditional outcome mean in the observed allocation (in short, conditional outcome, $\mu(d, x)$). Given the small number of observations in most of the labor market programs, we use the full data set and cross-validation for the estimation of the nuisance parameters. The two nuisance parameters then allow the estimation of the doubly robust score:
$$ \hat{\Gamma}_{i,d} = \hat{\mu}(d,X_i) + \frac{D_i(d)(Y_i - \hat{\mu}(d,X_i))}{\hat{e}_d(X_i)},$$
where $D_i(d)$ indicates the treatment assignment for individual $i$ and $Y_i$ the observed, pre-inter- ventional outcome. This strategy is called doubly robust because the functional form of either the propensity score or the conditional outcome can be miss-specified without threatening the identification \citep{Chernozhukov2018, Knaus2022}. In the last step, the estimates of the debiased scores, $\hat{\Gamma}_{i,d}$, are used as pseudo outcomes to estimate the conditional expected outcomes, $\mathbf{E}[\hat{\Gamma}_{i,d}~|~X_i]$ using a regression forest. These estimates are the individualized average potential outcomes for each treatment option under the outlined identifying assumptions. For this step, the regression forest is trained only on the training set.

We estimate individualized average treatment effects for each individual $i$ in the sample as differences between the respective individualized average potential outcomes: $$\hat{\Delta}_{i,d,d'} = \hat{\Gamma}_{i,d} - \hat{\Gamma}_{i,d'}.$$
In Table~\ref{fig:IATEs-by-Gender}, we show the distribution of individualized average treatment effects by gender. While the overall trends remain the same, all treatments except job search programs on average are slightly more effective for women than for men. Treatment effects are estimated against the baseline of no program.

\begin{figure*}[h]
    \centering
    \includegraphics[width=.8\linewidth]{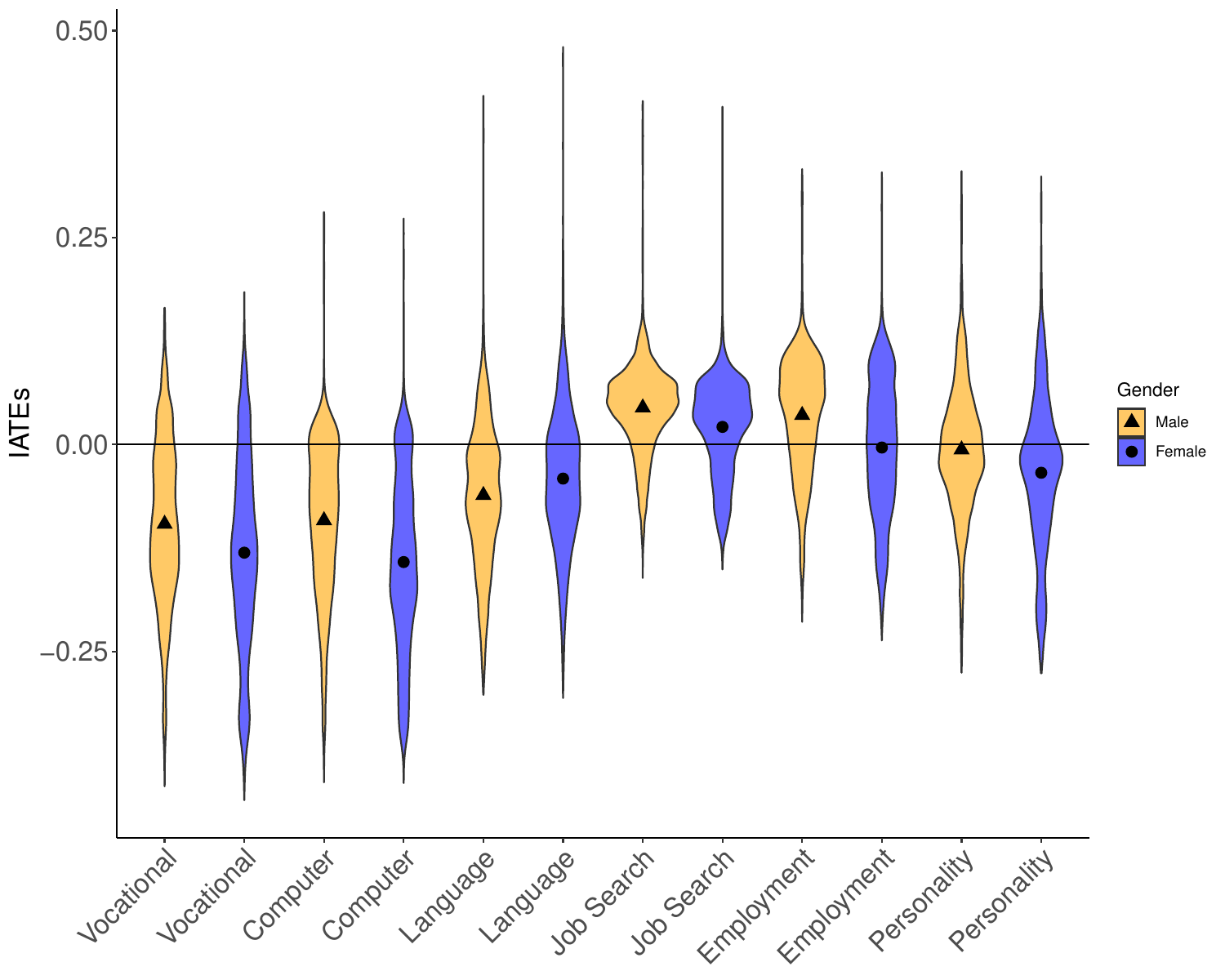}
    \caption{Individualized and Average Treatment Effects for six labor market programs by gender. The baseline treatment against which the treatment effects are estimated is ``no program''.}
    \label{fig:IATEs-by-Gender}
\end{figure*}

%\clearpage
\subsection{Risk Scores and Prioritization Policies}\label{supp:risk}
To determine the prioritization of registered unemployed in its Belgian or Austrian variants we estimate risk scores for becoming long-term unemployed. The full list of features is given in Table~\ref{tab:risk-features}. For a discussion on the predictability of long-term unemployment, see \citet{Mueller2023_TECH_REPORT}. Using administrative data from Germany, \citet{Kunaschk2022} evaluate the performance of risk scores under external shocks like the COVID-19 pandemic. \citet{Kern2021} evaluate the violation of retrospective fairness criteria when predicting long-term unemployment in the same context.

First, we estimate risk scores by a fairness-unconstrained logistic ridge regression. The optimal regularization strength is chosen by cross-validation at about $\lambda=0.049$. Second, we add a fairness constraint for {\em statistical parity} and, third, a constraint for {\em equal opportunity}. In this case, the true positive rates among the sensitive attribute are equalized, a relaxation of Separation \citep{Hardt2016}. We make use of the the implementation by \citep{Scutari2022} for the estimation of fairness-constrained risk scores. To achieve statistical parity they use a ridge penalty to bound the variance explained by the sensitive attribute (gender) over the total explained variance. For equal opportunity, the risk score is regressed against the sensitive attribute and the outcome variable with the ridge penalty bounding the variance explained by the sensitive attribute over the total explained variance. In both cases, we use a fairness penalty of $0.01$, where $0$ requires perfect fairness and $1$ corresponds to no fairness constraint. 

Note some important differences between the Belgian and Austrian implementations of our work. In Flanders, Belgium the probability of re-employment within six months is estimated by a random forest model \citep{Desiere2020}. Sensitive attributes are no longer included due to privacy regulations. In our simulation study, the definition of long-term unemployment corresponds to the ILO definition with $12$ months of uninterrupted unemployment. 

In Austria, two different models are estimated \citep{Allhutter2020_TECH_REPORT}. The first, short-term model, uses as a binary target at least $90$ days of unsupported employment within seven months after the reference date. The second, long-term model, uses at least $180$ days of unsupported employment within $24$ months as the target. Those with a short-term probability of employment above $66\%$ are classified as low risk for LTU. Those with a long-term probability of employment below $25\%$ are classified as high risk. The middle group is built as a residual. That is, it includes all those not classified as high or low risk. In difference to earlier reports \citep{Allhutter2020}, a stratification approach is applied, and logistic regressions are used to evaluate the feature importance only \citep{Allhutter2020_TECH_REPORT}. Sensitive attributes like gender and citizenship are included as features. In difference to the Austrian proposal, we estimate one model and create the prioritized middle group as those individuals falling in the $30-70th$ percentile of the respective risk distribution.  

\subsection{Further Results}
Following, we present several additional results. In Table~\ref{tab:descriptives}, we report descriptive statistics for our simulation and test data with $32,148$ observations. Table~\ref{tab:descriptives-full} shows the respective statistics for the full dataset. 

Figure~\ref{fig:IATEs-by-Gender} compares the distribution of the estimated Individualized Average Treatment Effects (IATEs) by gender. 

Figure~\ref{fig:risk-scores} shows histograms of all three risk scores, fairness constraint and not. Results on the predictive power of the risk scores after applying a decision threshold at $0.5$ and a formal fairness analysis with gender as the sensitive attribute are reported in Table~\ref{tab:risk_score_results}.

Rates of long-term unemployment under the different algorithmically informed policies and baseline as well as five-fold capacities are reported in Tables~\ref{tab:LTU-rates-1} and \ref{tab:LTU-rates-5}. Program participation by gender under the algorithmically informed policies is shown in Figure~\ref{fig:number-participation}.

We show the gap in long-term unemployment between Swiss citizens and non-citizens for each combination of prioritization and assignment schemes in Figure~\ref{fig:fairnessPenaltyCitizens}. As for gender, the comparison with the overall LTU rate is presented in Figure~\ref{fig:CitizenOverall}.

Figure~\ref{fig:married-swiss} shows both the gender gaps in long-term unemployment and overall LTU rates for four subgroups in our data: unmarried non-citizen, married non-citizen, unmarried Swiss citizen, and married Swiss citizen. 

Lastly, in Figure~\ref{fig:risk-vs-po} we plot the estimated risk scores against the respective optimal, that is lowest, potential outcome. 

\begin{table*}[tb]
    % \centering
    \begin{tabular}{lc}
     \toprule
    Features used for the estimation of risk scores & \\
     \midrule \midrule
      Age   &  \\
      Mother tongue in canton's language  &  \\
      Lives in big city &  \\
      Lives in medium city &  \\
      Lives in no city &  \\
      Fraction of months employed in last 2 years &  \\
      Number of employment spells in last 5 years &  \\
      Female (binary) &  \\
      Foreigner with temporary permit &  \\
      Foreigner with permanent permit &  \\
      Cantonal GDP p.c. &  \\
      Married &  \\
      Mother tongue other than German, French, Italian &  \\
      Past income in CHF &  \\
      Previous job: Manager &  \\
      Previous job in missing sector &  \\
      Previous job in primary sector &  \\
      Previous job in secondary sector &  \\
      Previous job in tertiary sector &  \\
      Previous job: self-employed &  \\
      Previous job: skilled worker &  \\
      Previous job: unskilled worker &  \\
      Qualification: semiskilled &  \\
      Qualification: some degree &  \\
      Qualification: unskilled &  \\
      Qualification: skilled without degree &  \\ 
      Swiss citizenship &  \\
      Number of unemployment spells in last 2 years &  \\
      Cantonal unemployment rate in \% &  \\
    \bottomrule
    \end{tabular}
    \caption{List of features from the ``Swiss Active Labor Market Policy Evaluation Dataset'' \citep{Lechner2020} used for the estimation of risk scores. All caseworker information is omitted, sensitive attributes like ``Female'' or ``Citizenship'' are included.}
    \label{tab:risk-features}
\end{table*}

\begin{table*}[tb]
    \centering
    \begin{tabular}{rcccc}
    \toprule
        & Reference & Ridge Regression & Stat. Parity & Equal Opportunity \\
    \midrule
    \midrule
    Accuracy & (1) & 0.644 & 0.644 &  0.645 \\
    Precision & (1) & 0.612 &  0.605 & 0.607 \\
    Recall & (1) & 0.384 & 0.404 & 0.404 \\ \midrule
    Stat. Parity & (0) & \textbf{0.116} & \textbf{0.041} & 0.019 \\
    Equal Opportunity & (0) & \textbf{0.173} & 0.07 & \textbf{0.044} \\
    False Positive Parity & (0) & \textbf{0.062} & 0.005 & -0.014 \\
    Pos. Predictive Parity & (0) & 0.062 & 0.072 & 0.081 \\
    Neg. Predictive Parity & (0) & 0.011 & 0.011 & 0.016 \\
    \bottomrule
    \end{tabular}
    \caption{Results for predicting long-term unemployment (LTU). To get binary predictions of the target, a threshold of $0.5$ is applied to the risk scores. The sensitive attribute for the group-based fairness analysis is gender. All results are reported as differences between the respective results for \textit{women}  and \textit{men}. For example, statistical parity is estimated as $P(\hat{Y}=1~|~A=1)-P(\hat{Y}=1~|~A=0)$, where $\hat{Y}$ is the random variable representing the binary predictions of LTU and $A$ the sensitive attribute.}
    \label{tab:risk_score_results}
\end{table*}

\begin{figure*}[h]
    \centering

    \begin{subfigure}[b]{0.45\textwidth}
        \centering
        \includegraphics[width=\linewidth]{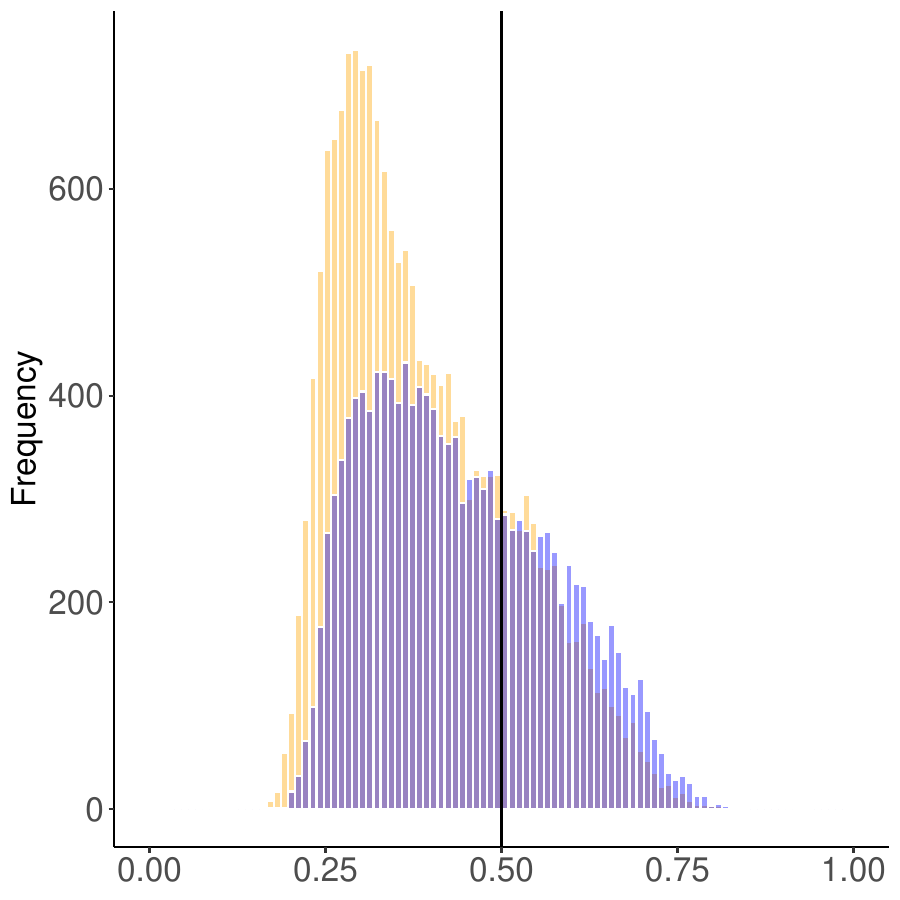}
        \caption{No fairness constraint (approximately calibrated by Gender).}
        \label{fig:risk-unconstraint}
    \end{subfigure}
    \hfill
    \begin{subfigure}[b]{0.45\textwidth}
        \centering
        \includegraphics[width=\linewidth]{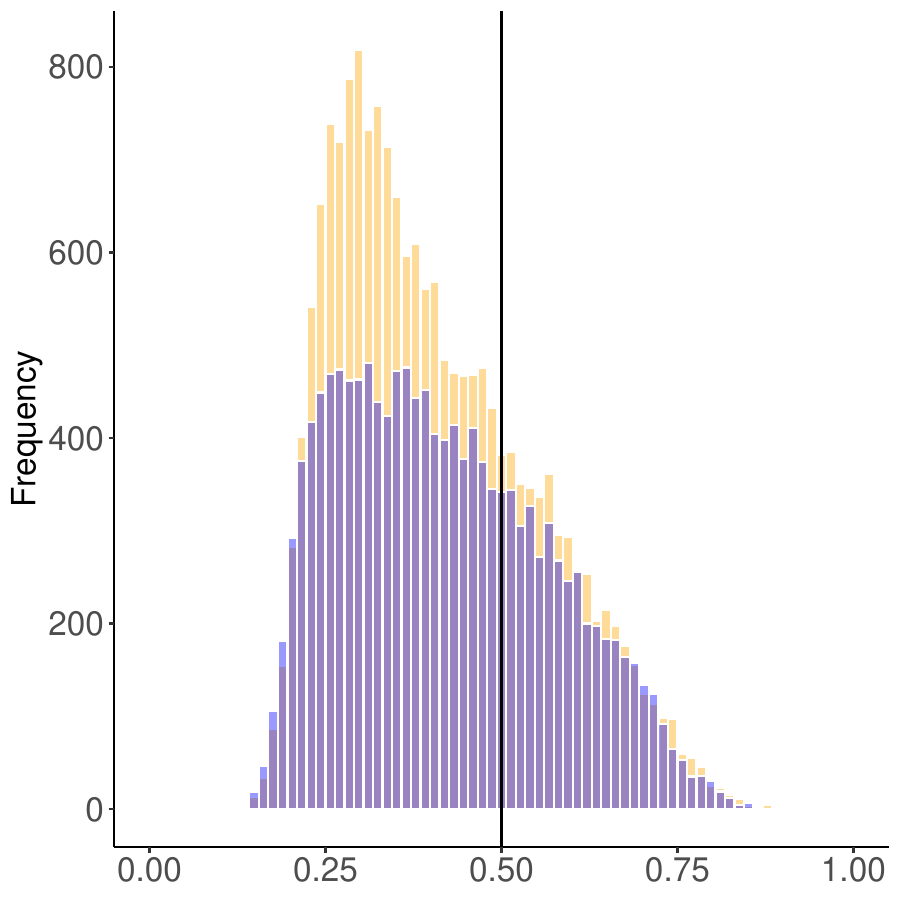}
        \caption{Statistical parity constraint.}
        \label{fig:risk-statparity}
    \end{subfigure}
    \hfil
    \begin{subfigure}[b]{0.45\textwidth}
        \centering
        \includegraphics[width=\linewidth]{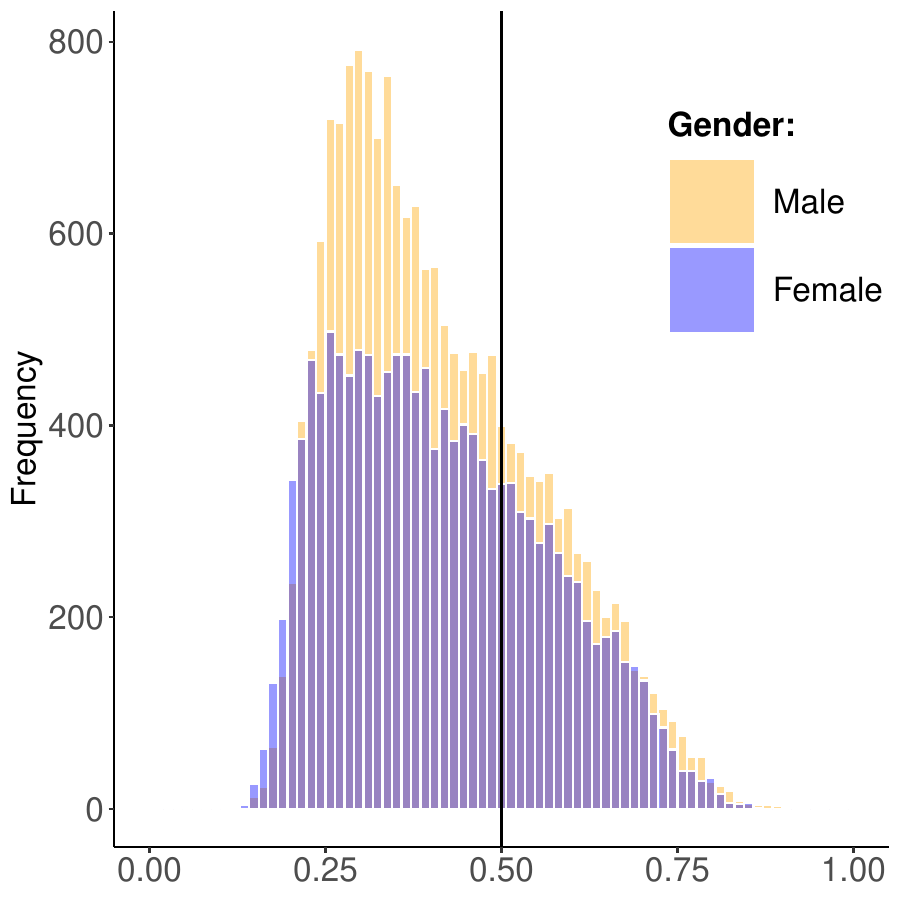}
        \caption{Equal opportunity constraint.}
        \label{fig:risk-equalopp}
    \end{subfigure}

    \caption{Risk scores for long-term unemployment by gender, estimated by logistic ridge regression with and without fairness constraints. The vertical line at $.5$ gives the decision threshold for binary predictions.}
    \label{fig:risk-scores}
\end{figure*}

\clearpage
%\subsection{Results from the Simulation Study}\label{supp:results}

\begin{table}[tb!]
    \centering
    \begin{tabular}{lccccccc}
    \toprule
    & LTU & Women & Men & Gender gap & Non-Citizens & Citizen & Citizen Gap \\
    \midrule \midrule
    Status quo  &  0.414 & 0.436 & 0.397 & 0.039 & 0.515 & 0.357 & 0.158  \\
    \midrule
    \textbf{Belgian, optimal} & \\
    Logistic Regression & 0.386 & 0.404 & 0.372 & 0.032 & 0.446 & 0.351 & 0.095 \\
    Stat. Parity & 0.386 & 0.408 & 0.368 & 0.039 & 0.448 & 0.35 & 0.097 \\
    Equal Opp. & 0.386 & 0.409 & 0.368 & \textbf{0.041} & 0.448 & 0.35 & 0.097 \\
    \midrule
    \textbf{Belgian, random} & \\
    Logistic Regression & 0.4 & 0.421 & 0.385 & 0.036 & 0.473 & 0.359 & 0.114 \\
    Stat. Parity & 0.400 & 0.422 & 0.383 & 0.039 & 0.473 & 0.359 & 0.114 \\
    Equal Opp. & 0.400 & 0.423 & 0.383 & \textbf{0.04} & 0.474 & 0.359 & 0.115  \\
    \midrule
    \textbf{Austrian, optimal} & \\
    Logistic Regression & 0.386 & 0.405 & 0.371 & 0.034 & 0.447 & 0.351 & 0.097 \\
    Stat. Parity & 0.386 & 0.408 &  0.369 & 0.038 & 0.451 & 0.349 & 0.101 \\
    Equal Opp. & 0.386 & 0.408 & 0.369 & 0.039 & 0.451 & 0.349 & 0.101 \\
    \midrule
    \textbf{Austrian, random} & \\
    Logistic Regression & 0.402 & 0.423 & 0.385 & 0.037 & 0.476 & 0.359 & 0.117 \\
    Stat. Parity & 0.402 & 0.424 &  0.385 & \textbf{0.04} & 0.479 & 0.358 & 0.120 \\
    Equal Opp. &  0.402 & 0.425 & 0.385 & \textbf{0.04} & 0.479 & 0.359 & 0.120 \\
    \bottomrule
    \end{tabular}
    \caption{Rates of long-term unemployment (LTU) under the different algorithmically informed policies and \textbf{baseline} capacities.}
    \label{tab:LTU-rates-1}
\end{table}

\begin{table}[tb!]
    \centering
    \begin{tabular}{lccccccc}
    \toprule
    & LTU & Women & Men & Gender Gap & Citizens & Non-Citizen & Citizen Gap \\
    \midrule \midrule
    Status quo  & 0.414 & 0.436 & 0.397 & 0.039 & 0.515 & 0.357 &  0.158 \\
    \midrule
    \textbf{Belgian, optimal} & \\
    Logistic Regression & 0.346 & 0.351 & 0.342 & 0.009 & 0.375 & 0.329 & 0.046   \\
    Stat. Parity & 0.345 & 0.36 & 0.333 & 0.026 & 0.378 & 0.326 & 0.051 \\
    Equal Opp. & 0.345 & 0.362 & 0.332 & 0.03 & 0.377 & 0.326 & 0.051 \\
    \midrule
    \textbf{Belgian, random} & \\
    Logistic Regression & 0.383 & 0.395 &  0.373 & 0.022 & 0.44 & 0.350 & 0.09  \\
    Stat. Parity &  0.383 & 0.399 & 0.370 & 0.029 & 0.441 & 0.349 &  0.092  \\ 
    Equal Opp. & 0.383 & 0.400 & 0.369 & 0.031 & 0.442 & 0.349 & 0.092 \\
    \midrule
    \textbf{Austrian, optimal} & \\
    Logistic Regression & 0.346 & 0.353 & 0.341 & 0.012 & 0.380 & 0.327 & 0.053 \\
    Stat. Parity & 0.346 & 0.361 & 0.334 & 0.026 & 0.388 & 0.322 & 0.066\\
    Equal Opp. &  0.346 & 0.362 & 0.333 & 0.029 & 0.387 & 0.322 & 0.065 \\
    \midrule
    \textbf{Austrian, random} & \\
    Logistic Regression & 0.383 & 0.397 & 0.373 & 0.024 & 0.444 & 0.349 & 0.095  \\
    Stat. Parity &  0.384 & 0.401 & 0.371 & 0.030 & 0.449 & 0.347 & 0.102  \\
    Equal Opp. & 0.384 & 0.402 & 0.370 & 0.032 & 0.449 & 0.347 & 0.102 \\
    \bottomrule
    \end{tabular}
    \caption{Rates of long-term unemployment (LTU) under the different algorithmically informed policies and \textbf{five-fold} capacities.}
    \label{tab:LTU-rates-5}
\end{table}

\begin{figure}[htb!]
    \centering

    \begin{subfigure}{\textwidth}
        \includegraphics[width=\linewidth]{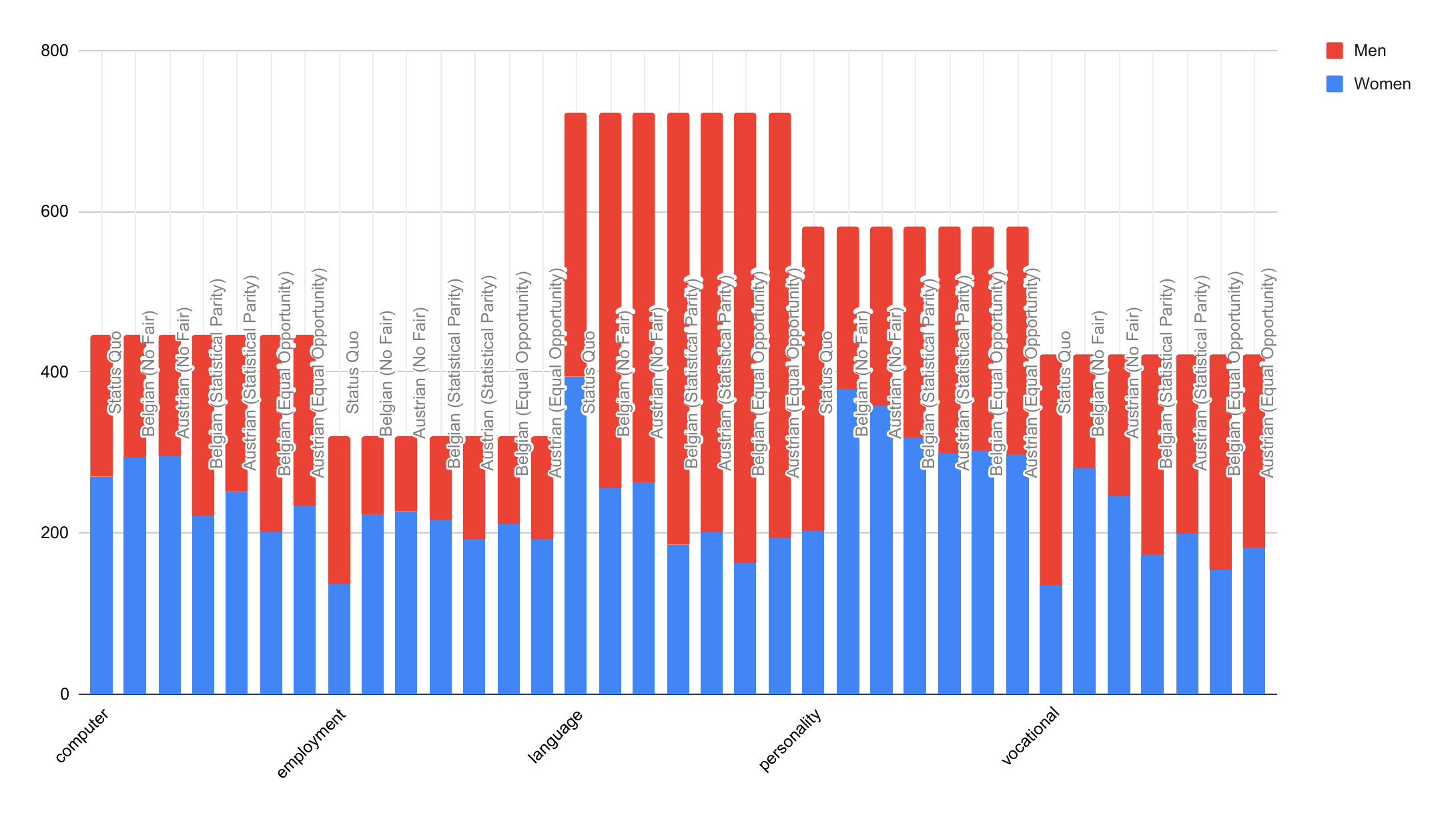}
    \end{subfigure}

    \begin{subfigure}{\textwidth}
        \includegraphics[width=.9\linewidth]{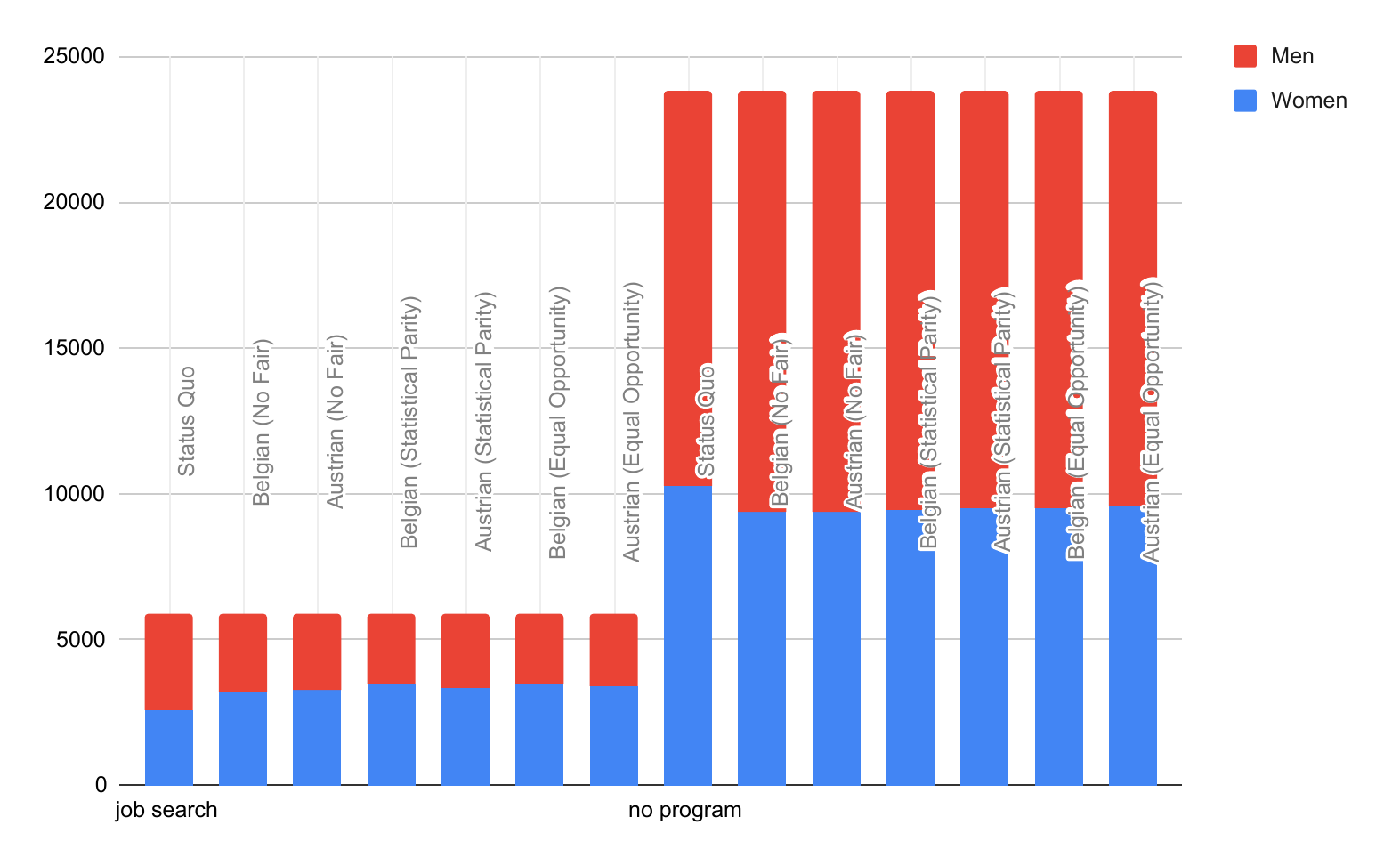}
    \end{subfigure}
    \caption{Program participation by gender under both algorithmically informed policies, for all risk scores, and baseline capacities. Note the different scales and, especially, the higher participation of women in vocational training and employment programs and the drop in language courses.}
    \label{fig:number-participation}
\end{figure}

\clearpage
%\subsection{Citizen LTU Gap}\label{supp:citizen}

\begin{figure*}[htb!]
  \centering

  \begin{subfigure}{0.45\textwidth}
    \includegraphics[width=\linewidth]{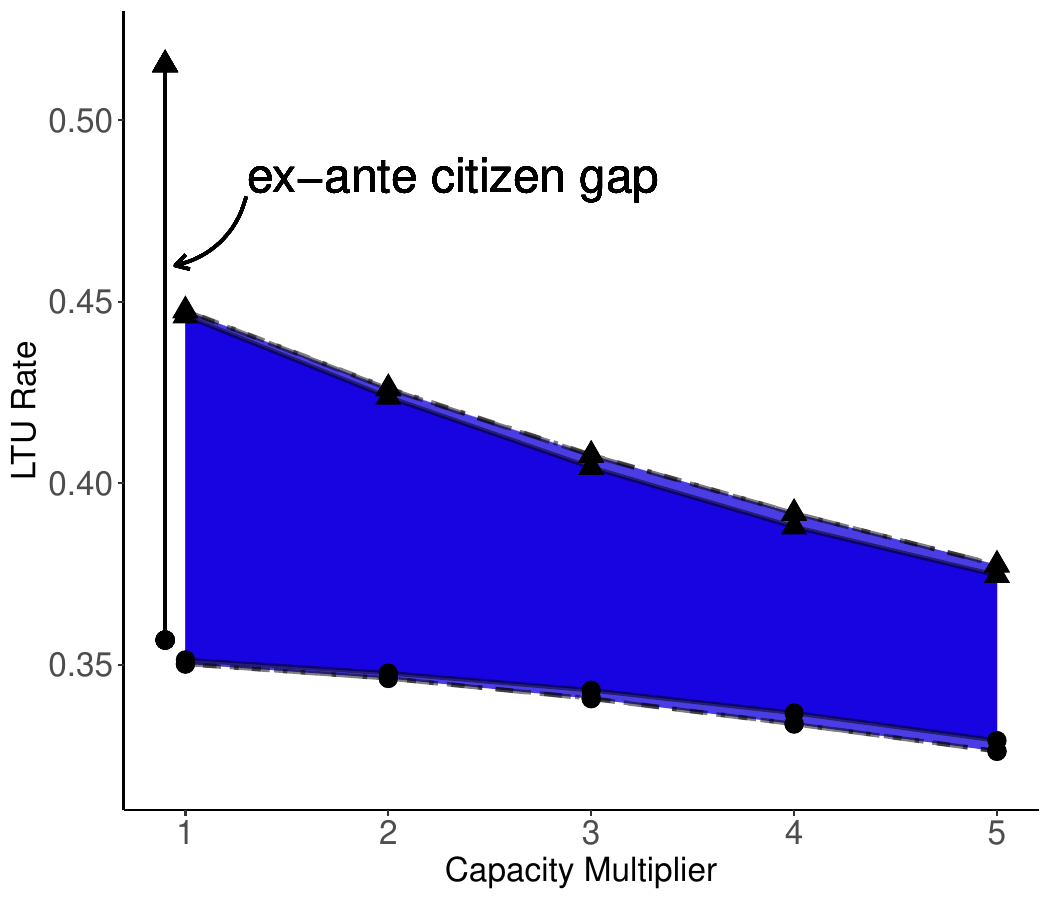}
    \caption{Belgian Prioritization and Optimal Program.}
  \end{subfigure}
  \hfill
  \begin{subfigure}{0.45\textwidth}
    \includegraphics[width=\linewidth]{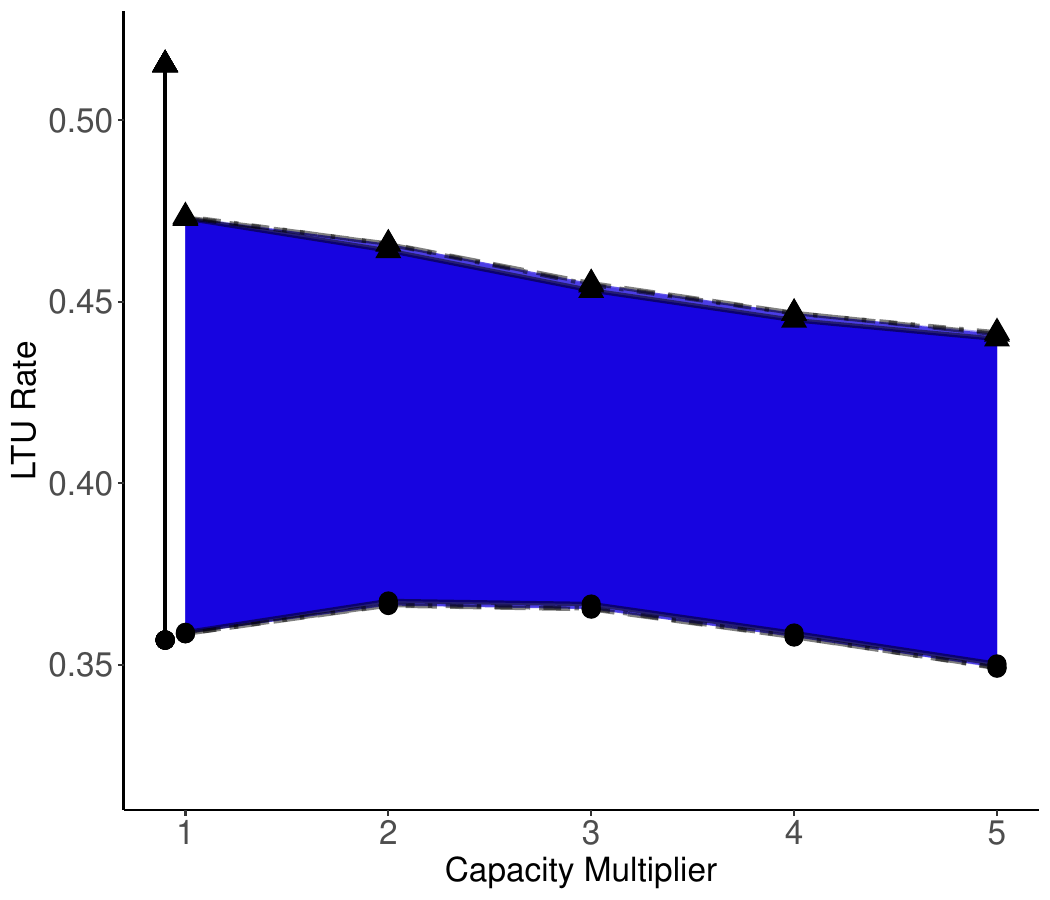}
    \caption{Belgian Prioritization and Random Program.}
  \end{subfigure}

  \medskip

  \begin{subfigure}{0.45\textwidth}
    \includegraphics[width=\linewidth]{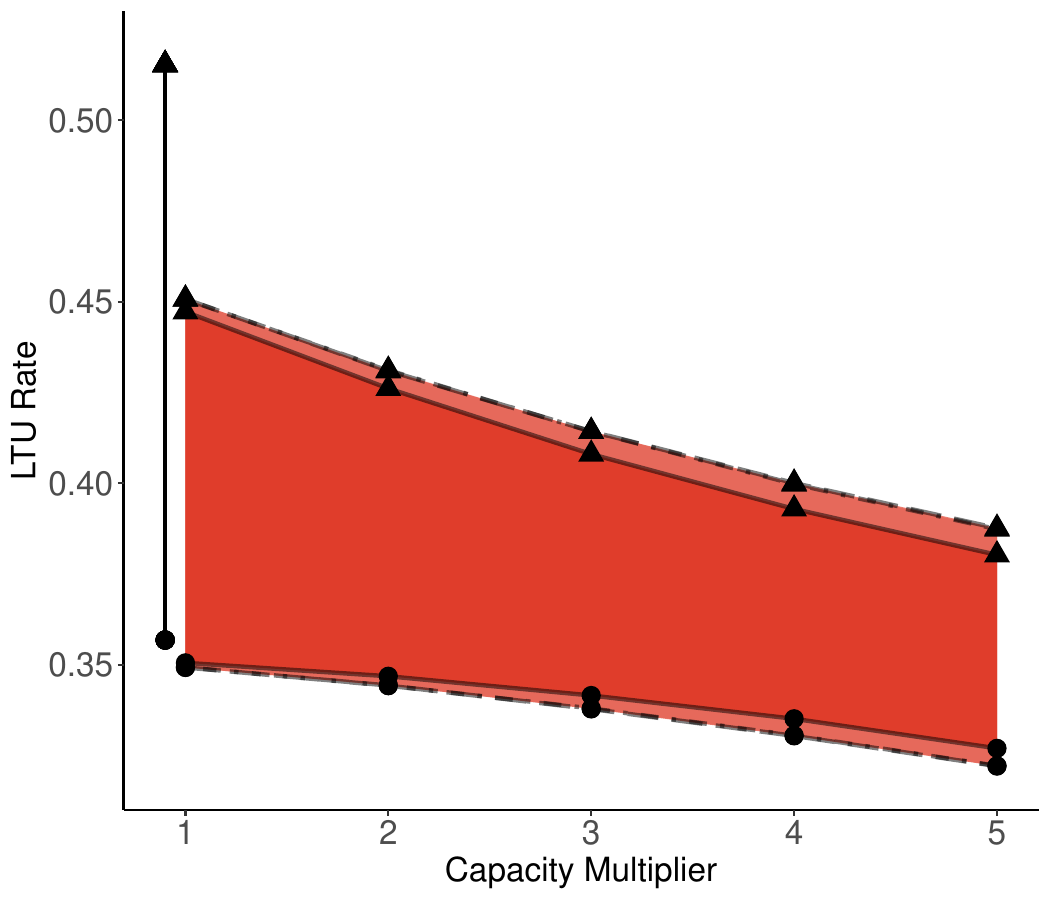}
    \caption{Austrian Prioritization and Optimal Program.}
  \end{subfigure}
  \hfill
  \begin{subfigure}{0.45\textwidth}
    \includegraphics[width=\linewidth]{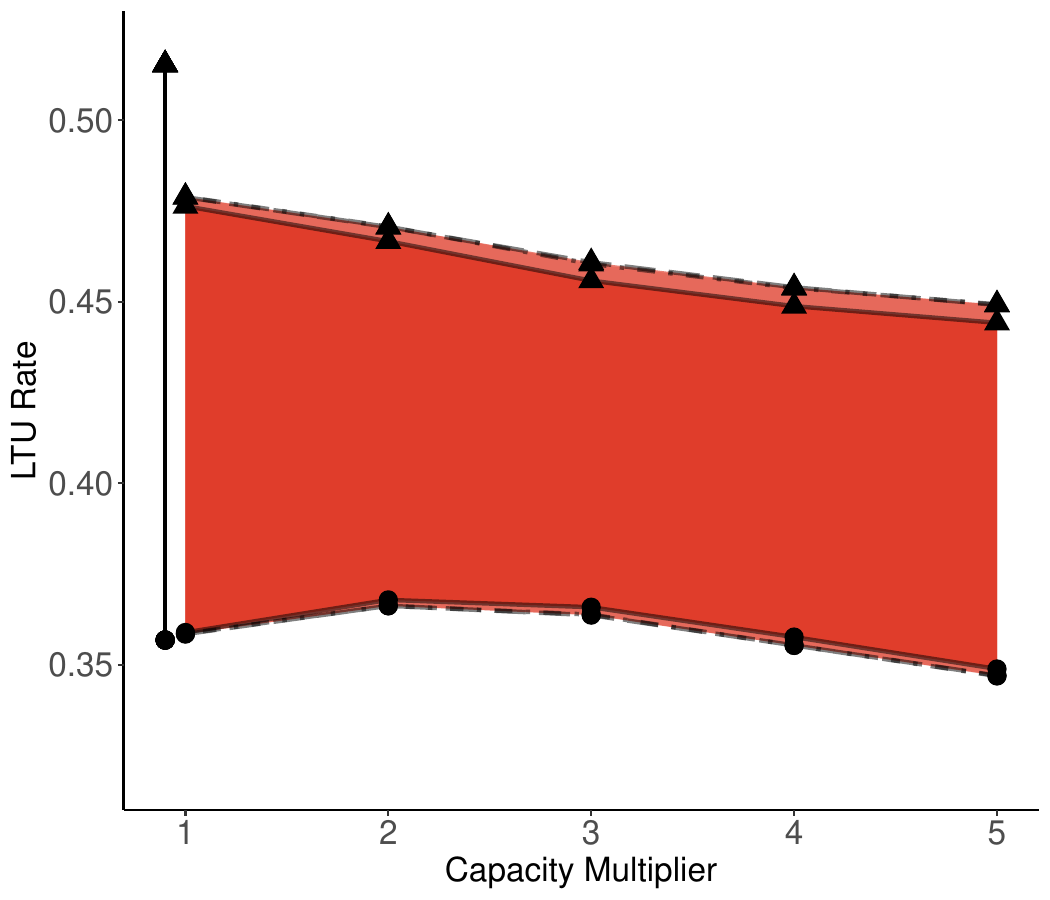}
    \caption{Austrian Prioritization and Random Program.}
  \end{subfigure}

  \medskip

  \begin{subfigure}{0.48\textwidth}
  \centering
    \includegraphics[width=0.6\linewidth]{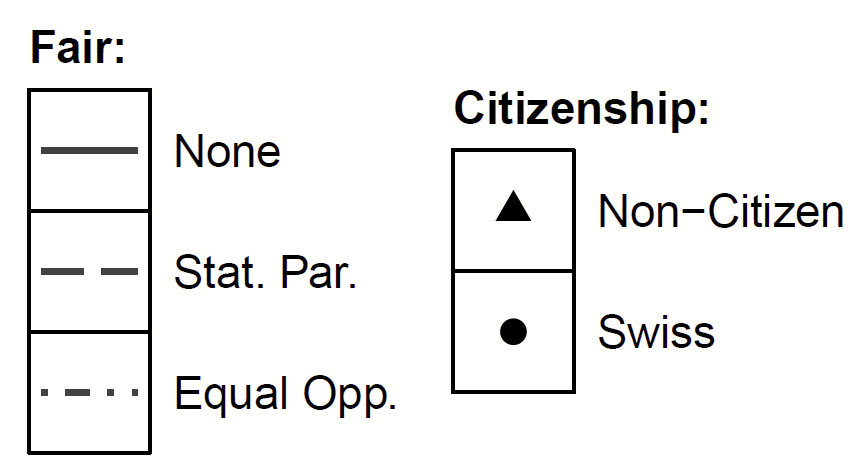}
      
  \end{subfigure}
  \caption{We plot the citizen gap in long-term unemployment (LTU) against program capacities for each combination of prioritization and assignment schemes. The level of transparency shows the citizen gap for the corresponding fairness constraint: none, statistical parity, or equal opportunity. All policy combinations reduce the citizen gap. The unconstrained risk scores (lowest transparency) result in the smallest citizen gap. This effect is especially pronounced as program capacity is increased and program assignments are individualized (optimal). Austrian prioritization compared to the Belgian approach performs particularly poorly under fairness constraints with respect to gender.}
  \label{fig:fairnessPenaltyCitizens}
\end{figure*}

\begin{figure*}[htb!]
  \centering

  \begin{subfigure}{0.48\textwidth}
    \includegraphics[width=\linewidth]{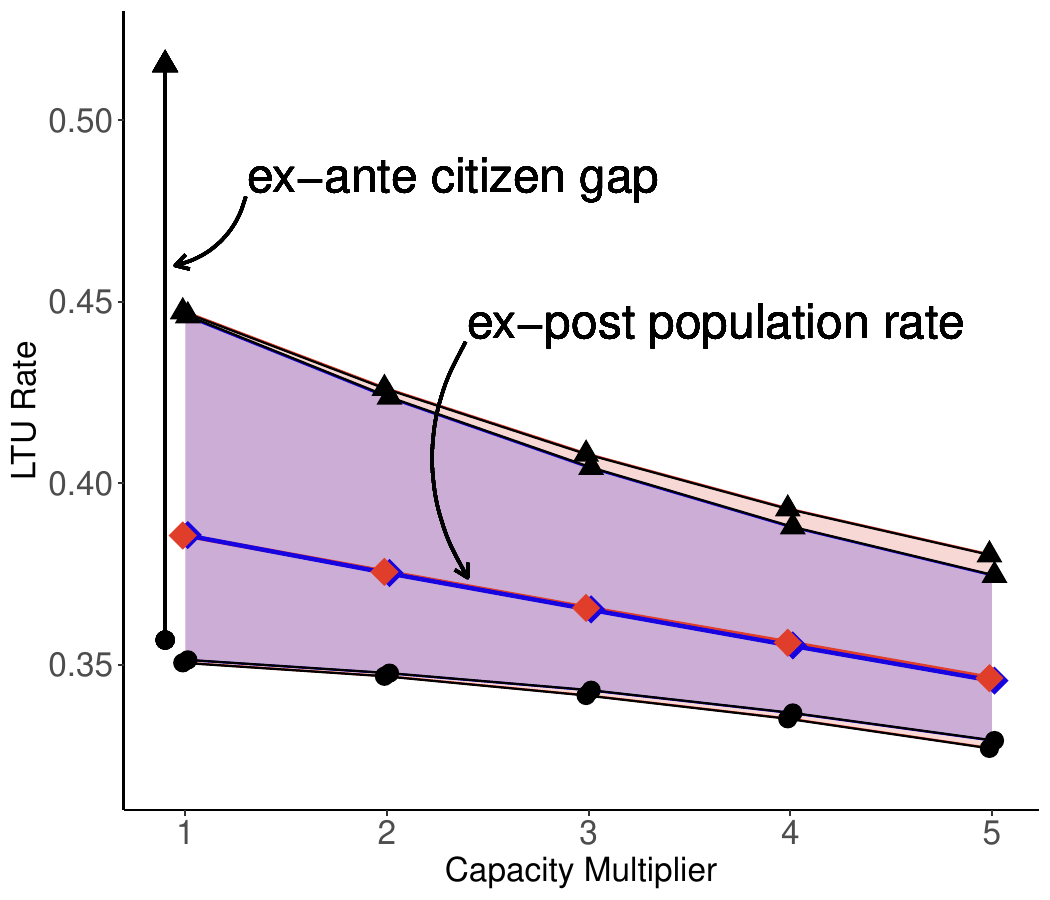}
    \caption{Optimal Program.}
  \end{subfigure}
  \hfill
  \begin{subfigure}{0.48\textwidth}
    \includegraphics[width=\linewidth]{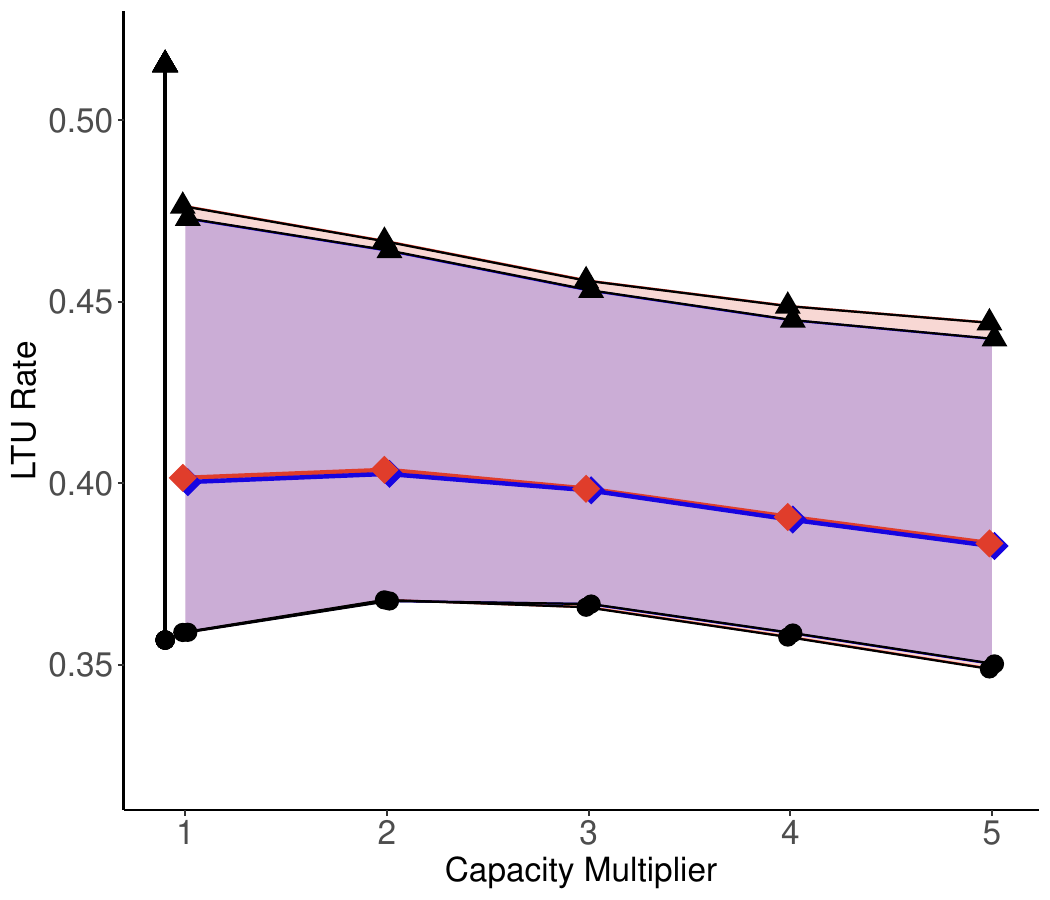}
    \caption{Random Program.}
  \end{subfigure}

  \medskip

    \begin{subfigure}{0.48\textwidth}
  \centering
    \includegraphics[width=0.6\linewidth]{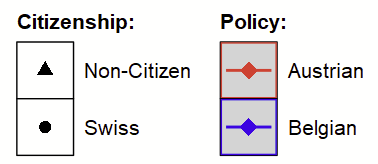}
      
  \end{subfigure}

  \caption{We plot overall long-term unemployment and the citizen reemployment gap against program capacity for each combination of prioritization and assignment scheme. For clarity, results are shown only for fairness-unconstrained risk scores. Regardless of the assignment scheme, the Belgian prioritization (blue line) results in the same long-term unemployment rate as the Austrian and a slightly smaller citizen gap. Individualized program assignments (optimal) are markedly more effective, especially under larger program capacities.}
  \label{fig:CitizenOverall}
\end{figure*}

%\clearpage
%\subsection{Gender LTU Gaps for (un)married (non-)citizen}\label{supp:married-swiss}

\begin{figure*}[htb!]
  \centering

  \begin{subfigure}{0.48\textwidth}
    \includegraphics[width=\linewidth]{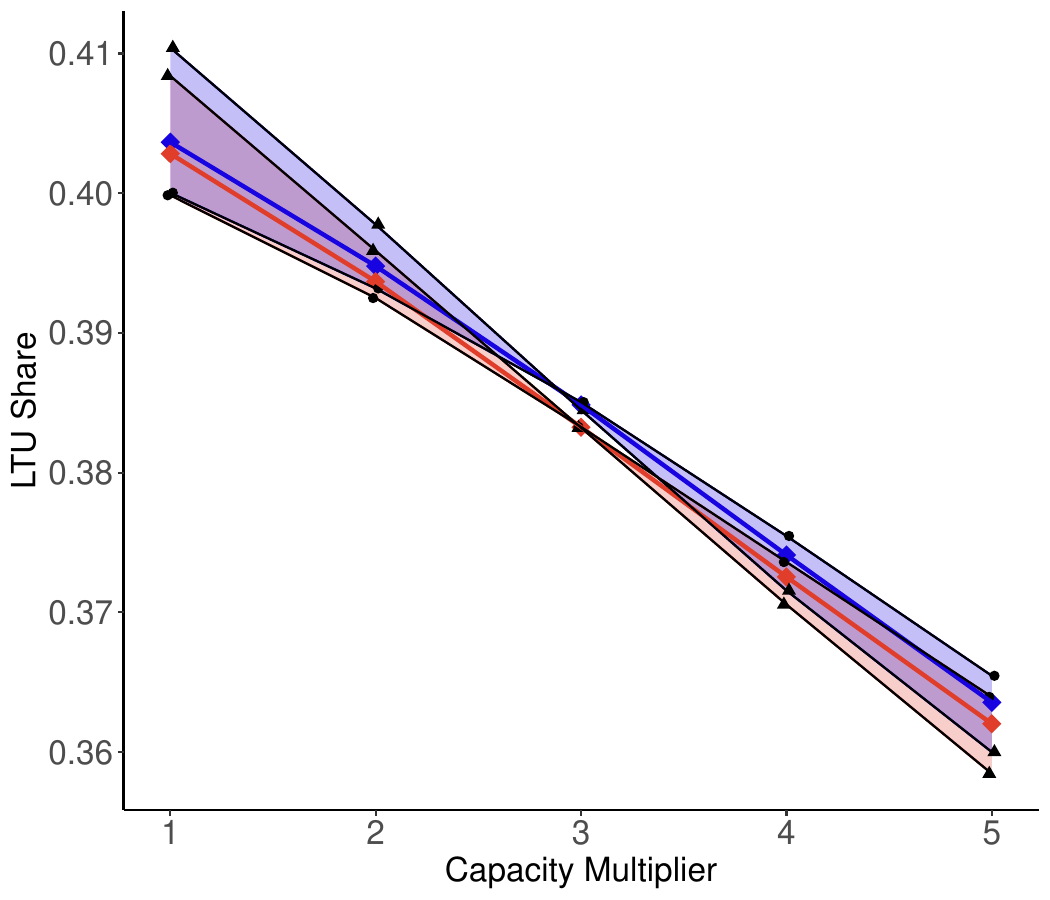}
    \caption{Unmarried Non-Citizen.}
  \end{subfigure}
  \hfill
  \begin{subfigure}{0.48\textwidth}
    \includegraphics[width=\linewidth]{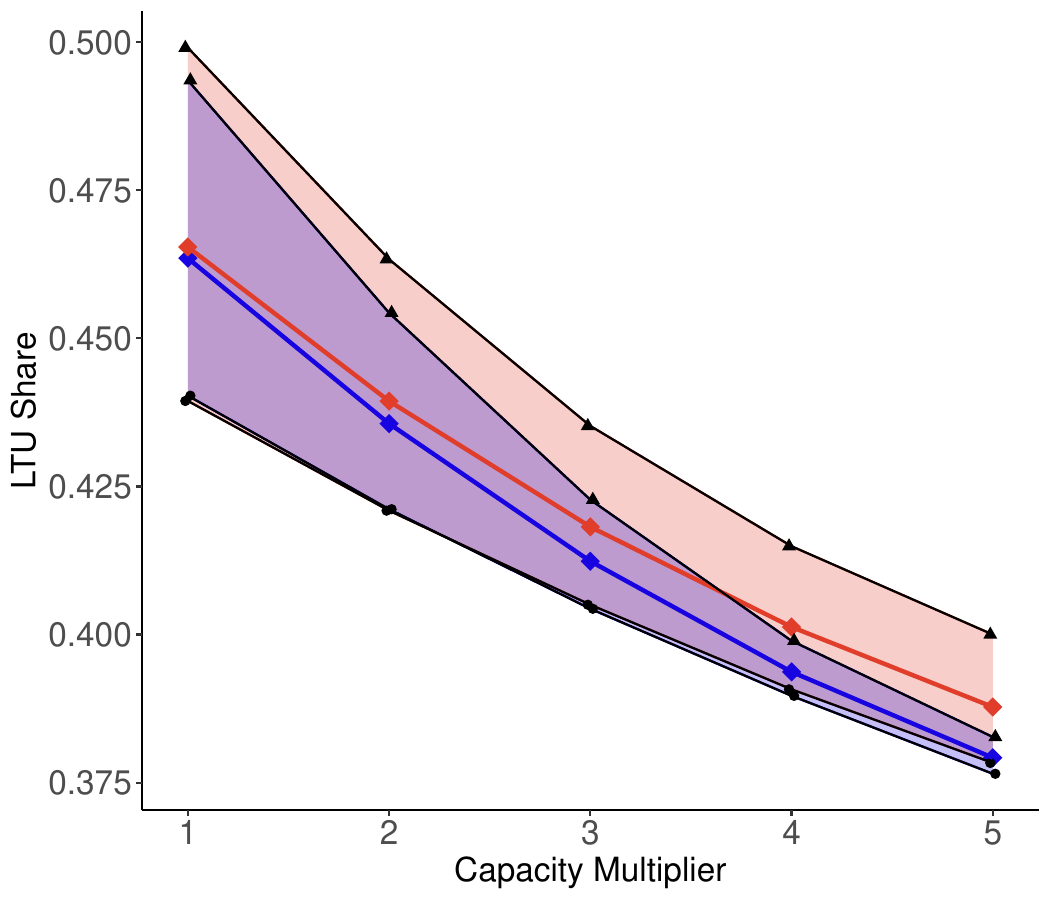}
    \caption{Married Non-Citizen.}
  \end{subfigure}

  \medskip

  \begin{subfigure}{0.48\textwidth}
    \includegraphics[width=\linewidth]{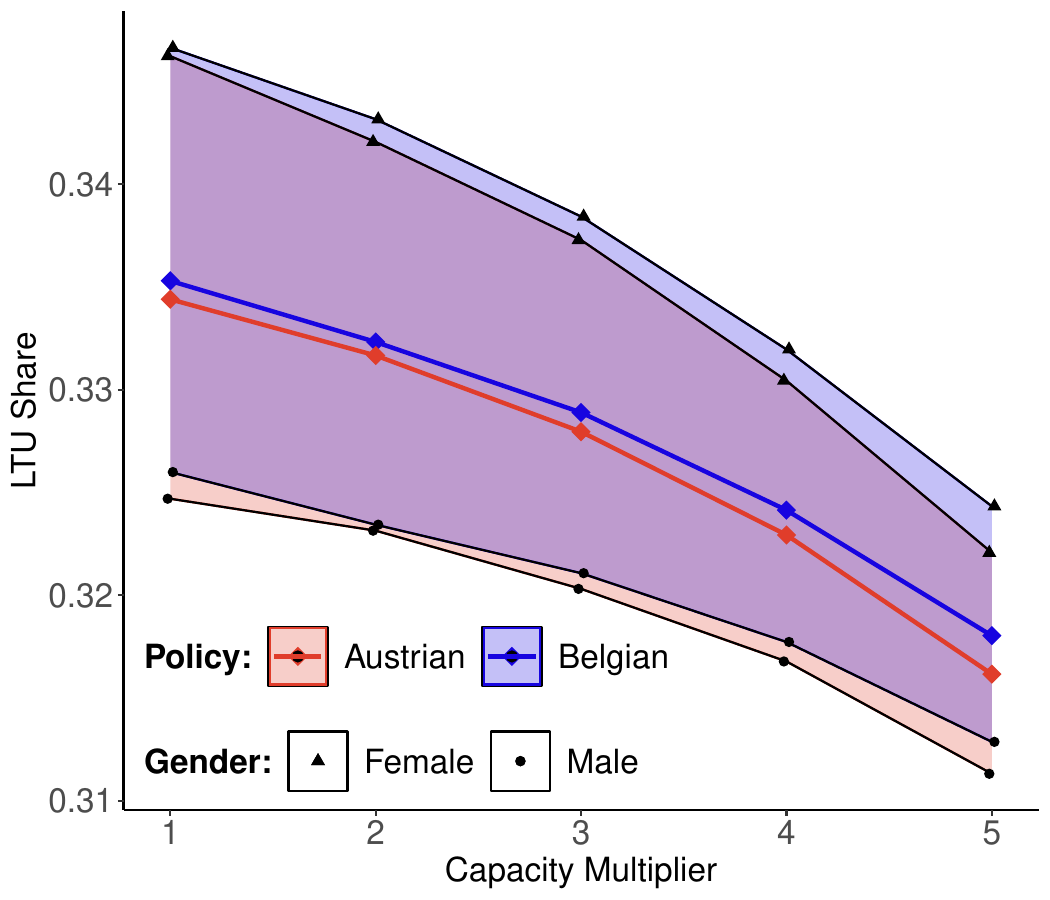}
    \caption{Unmarried Swiss Citizen.}
  \end{subfigure}
  \hfill
  \begin{subfigure}{0.48\textwidth}
    \includegraphics[width=\linewidth]{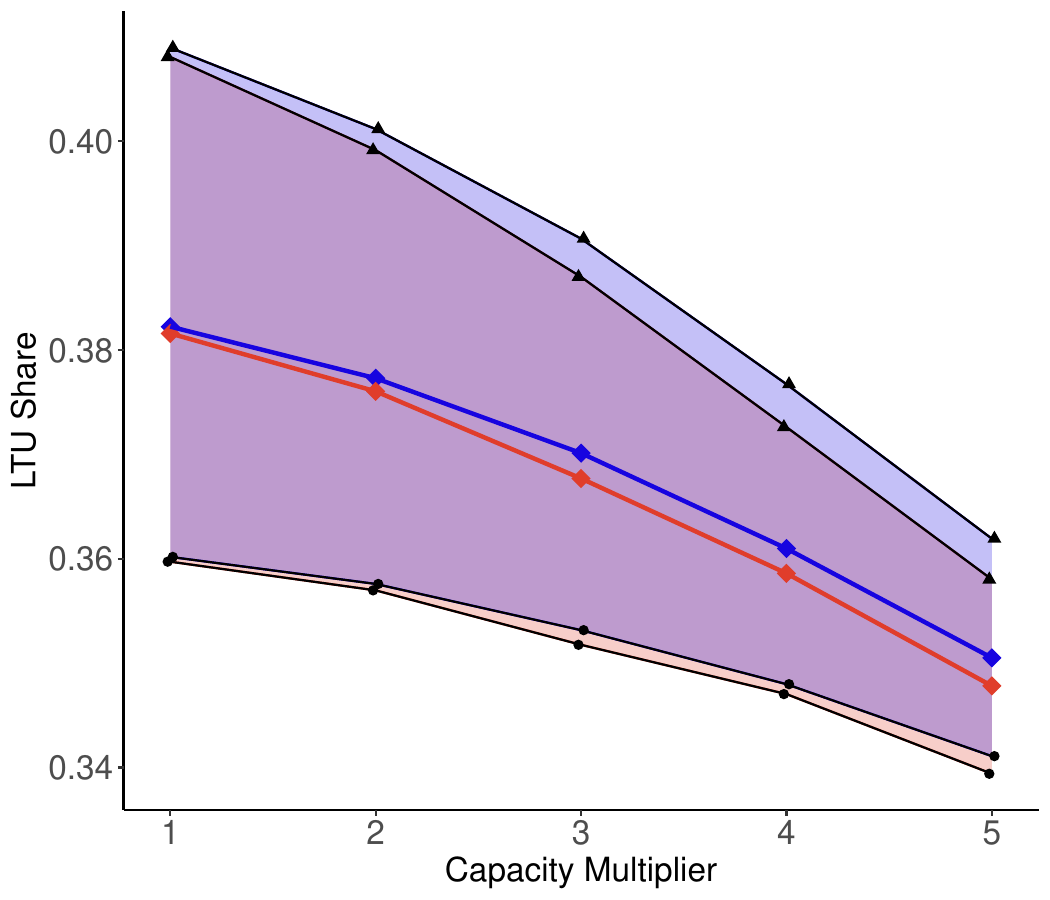}
    \caption{Married Swiss Citizen.}
  \end{subfigure}

    \caption{We show the overall long-term unemployment (LTU) rates by prioritization scheme (red and blue line) and by gender for four sub-groups: unmarried non-citizen, unmarried Swiss citizen, married non-citizen, and married Swiss citizen. All results are based on fairness unconstrained risk scores for LTU and optimal assignment. Note the different scales. The reduction in LTU rates and the gender gap is especially pronounced for the group of married foreigners. For unmarried foreigners, the gender gap even flips under both algorithmic policies at four- and five-fold program capacities.}
    \label{fig:married-swiss}
\end{figure*}

%\clearpage
%\subsection{Risk scores and optimal potential outcomes}

\begin{figure*}[htb!]
    \centering

    \begin{subfigure}[b]{0.49\textwidth}
        \centering
        \includegraphics[width=\textwidth]{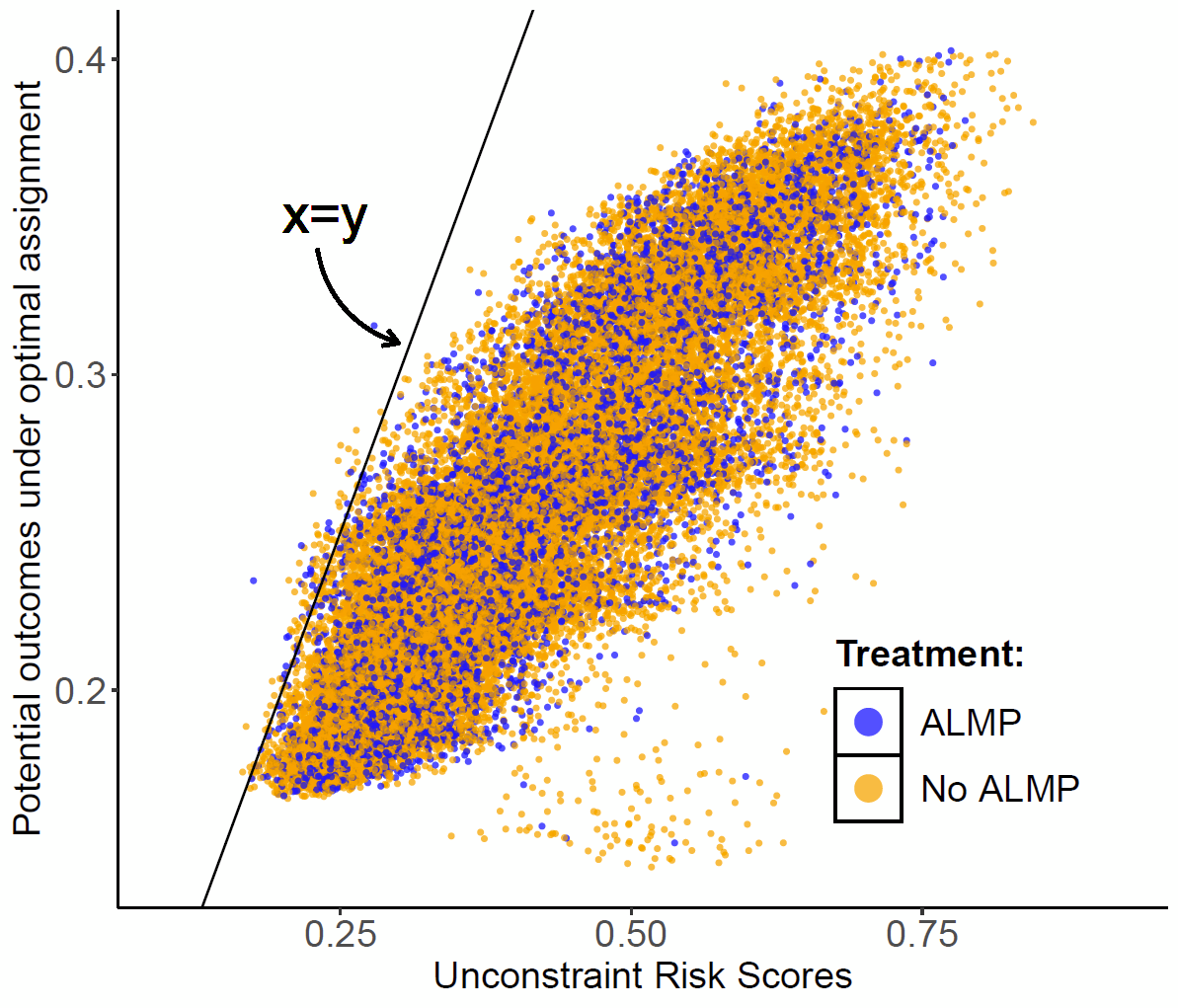}
        \caption{Risk scores without any fairness constraint plotted against the optimal (minimal) potential outcome. Spearman's rank correlation is $\rho=0.864.$}
        \label{fig:a-risk-vs-po}
    \end{subfigure}
    \hfill
    \begin{subfigure}[b]{0.49\textwidth}
        \centering
        \includegraphics[width=\textwidth]{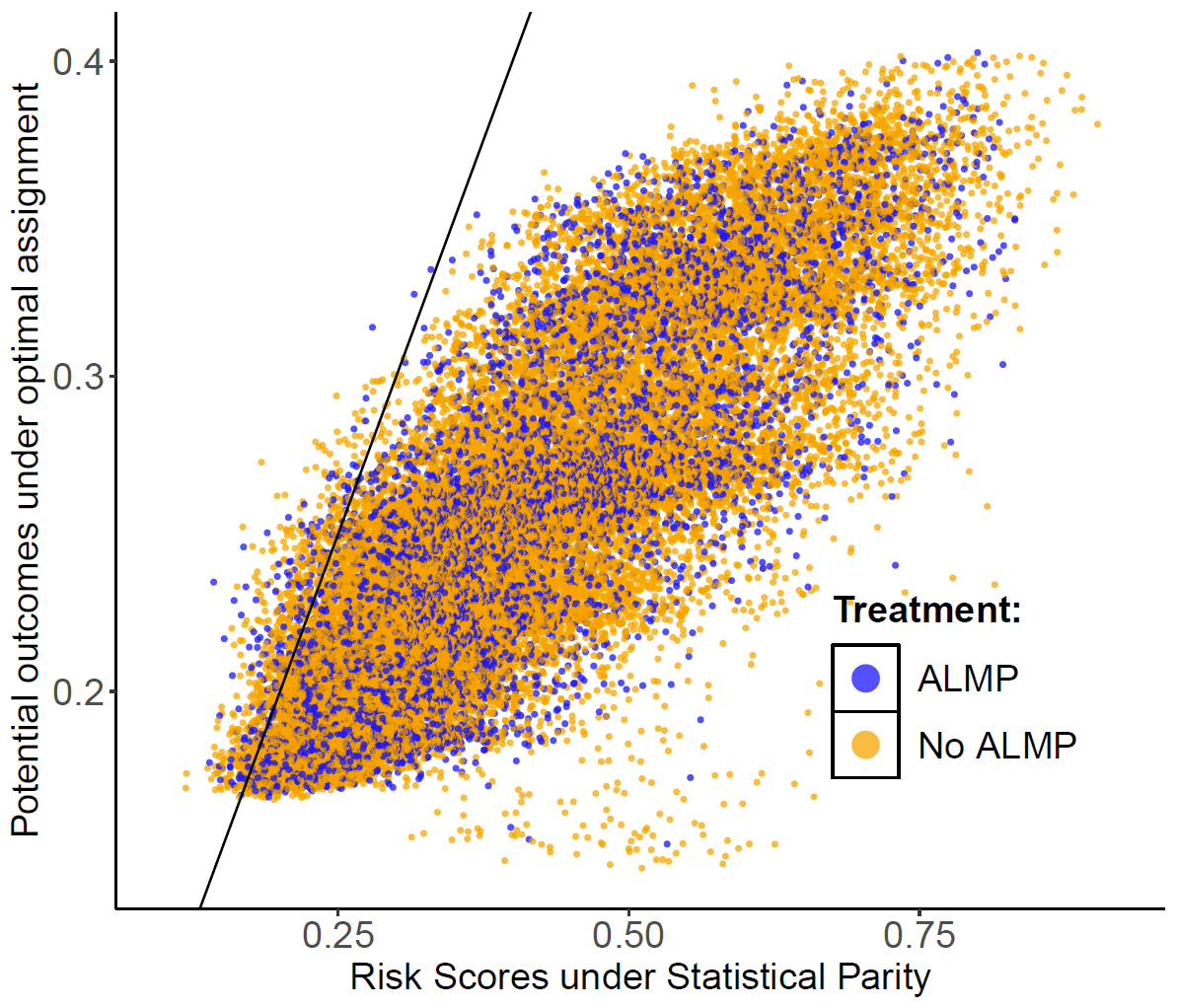}
        \caption{Risk scores with the statistical parity constraint plotted against the optimal (minimal) potential outcome. Spearman's rank correlation is $\rho=0.832.$}
        \label{fig:b-risk-vs-po}
    \end{subfigure}
    \hfill
    \begin{subfigure}[b]{0.49\textwidth}
        \centering
        \includegraphics[width=\textwidth]{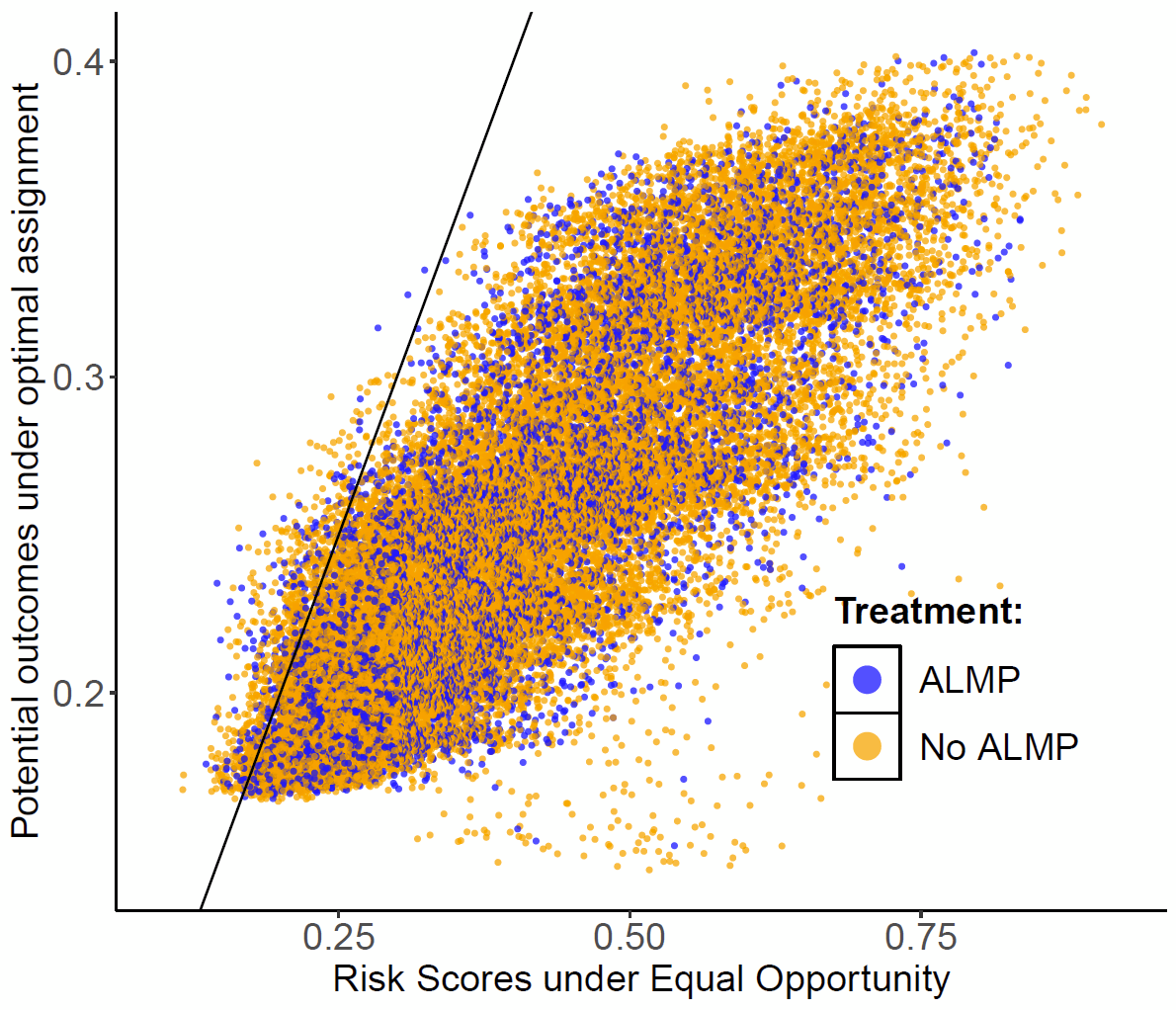}
        \caption{Risk scores with the equal opportunity constraint plotted against the optimal (minimal) potential outcome. Spearman's rank correlation is $\rho=0.826.$}
        \label{fig:c-risk-vs-po}
    \end{subfigure}
    
    \caption{We plot the respective (fairness constraint) risk scores for long-term unemployment (LTU) against the estimated individually optimal (minimal) potential outcomes. All three risk scores are biased estimates of the optimal potential outcome. An unbiased estimate would scatter around the diagonal line shown. The fairness constraints additionally increase the variance.}
    \label{fig:risk-vs-po}
\end{figure*}

\end{document}